\definecolor{mygray}{gray}{0.6}
\numberwithin{equation}{section}
\DeclareMathOperator*{\argmin}{\operatorname{argmin}}
\DeclareMathOperator*{\cov}{Cov}\DeclareMathOperator*{\var}{Var}
\newtheorem{assumption}{Assumption}[section]
\newtheorem{proposition}{Proposition}[section]
\theoremstyle{definition}
\newtheorem{remark}{Remark}[section]
\def\real{{\mathbb R}}
\def\st{\mathrm{~s.t.~}}
\def\var{\mathrm{Var}}
\def\wh{\widehat}
\newcommand{\vertiii}[1]{{\left\vert\kern-0.25ex\left\vert\kern-0.25ex\left\vert #1
		\right\vert\kern-0.25ex\right\vert\kern-0.25ex\right\vert}}
\def\a{\alpha}
\def\xa{x_{\a}}
\def\C{{\cal C^*}}
\def\L{{\cal L}}
\newtheorem{lem}{Lemma}
\def\hba{\hat\beta^{(\alpha)}}
\def\tba{\beta^{*(\alpha)}}
\newtheorem{lemma}{Lemma}[section]
\title{\bf Sparse Identification and Estimation of Large-Scale Vector AutoRegressive Moving Averages}
\author{Ines Wilms$^{a}$\footnote{Equal Contribution.}, Sumanta Basu$^{b*}$, Jacob Bien$^c$\footnote{Corresponding author. E-mail and URLs: jbien@usc.edu, \url{http://faculty.marshall.usc.edu/Jacob-Bien/} (J.\ Bien), i.wilms@maastrichtuniversity.nl, \url{https://sites.google.com/view/iwilms} (I.\ Wilms), sumbose@cornell.edu, \url{http://faculty.bscb.cornell.edu/\~basu/} (S.\ Basu), 
		matteson@cornell.edu,  \url{http://www.stat.cornell.edu/\~matteson/} (D.S.\ Matteson).
	}, and David S.\ Matteson$^b$ 
	\\ \textit{\small $^{a}$ Department of Quantitative Economics, Maastricht University, Maastricht, The Netherlands}
	\\ \textit{\small $^{b}$ Department of Statistics and Data Science, Cornell University, Ithaca, NY, USA}
	\\ \textit{\small $^{c}$ Data Sciences and Operations, University of Southern California, Los Angeles, CA, USA}
}
\date{ }
\begin{document}
	
	\def\spacingset#1{\renewcommand{\baselinestretch}%
		{#1}\small\normalsize} \spacingset{1}
	
	\maketitle
	
	\vspace{-12pt}
	\noindent
	{\bf  Abstract.}
	The Vector AutoRegressive Moving Average (VARMA) model is fundamental to the theory of multivariate time series; however, identifiability issues have led practitioners to abandon it in favor of the simpler but more restrictive Vector AutoRegressive (VAR) model. We narrow this gap with a new optimization-based approach to VARMA identification built upon the principle of parsimony. Among all equivalent data-generating models, we use convex optimization to seek the parameterization that is “simplest” in a certain sense. A user-specified strongly convex penalty is used to measure model simplicity, and that same penalty is then used to define an estimator that can be efficiently computed.  We establish consistency of our estimators in a double-asymptotic regime.  Our non-asymptotic error bound analysis accommodates both model specification and parameter estimation steps, a feature 
	that 
	is crucial for studying large-scale VARMA algorithms.  Our analysis also provides new results on penalized estimation of  infinite-order VAR, and elastic net regression under a singular covariance structure of regressors, which may be of independent interest.  We illustrate the advantage of our method over VAR alternatives on three real 
	data examples.
	\bigskip
	
	\noindent
	{\bf Keywords.} Identifiability, Forecasting, 
	Multivariate Time Series, Sparse Estimation, VARMA 
	
	\newpage
	\spacingset{1.5} 

\section{Introduction} \label{intro}
Learning regulatory dynamics and forecasting are two canonical problems in the analysis of multivariate time series, with widespread applications in economics, signal processing and biostatistics amongst others. 
In recent years, there has been increasing focus in networks or graphical models of time series  to describe how a multivariate time series' components interact with each other.
Vector AutoRegressions (VAR) estimated using parsimony-inducing regularization (penalties or priors) have become a popular alternative  \citep{Banbura10, de2008forecasting, kilian2017structural, diebold2014network} %
to factor modeling of high-dimensional time series, e.g., \citep{bai2008large}. 
In the classical time series and signal processing literatures, Vector AutoRegressive  Moving Average (VARMA) models are known to provide a more parsimonious description of a linear time invariant system than VAR. However, in practice, their use  has been limited due to identification and estimation issues. The goal of this work is to overcome these challenges by theoretically and empirically investigating the large-scale VARMA as a competitive alternative to the VAR.

In a $\text{VARMA}_d(p,q)$ model, a stationary $d$-dimensional mean-zero vector time series $ {  y}_t$ is modeled as a function of its own $p$ past values and $q$ lagged error terms. More precisely,  
\begin{equation}
	{  y}_t =  \sum_{\ell=1}^p   \Phi_{\ell} {  y}_{t-\ell} + \sum_{m=1}^q   \Theta_{m} {  a}_{t-m} +  {  a}_{t}, \label{VARMA} 
\end{equation} 
where
$ \{   \Phi_{\ell} \in \mathbb{R}^{d \times d} \}_{\ell=1}^{p}$ are autoregressive parameter matrices,
$ \{   \Theta_{m} \in \mathbb{R}^{d \times d} \}_{m=1}^{q}$ are  moving average parameter matrices, and
${  a}_t$ denotes a $d$-dimensional mean-zero white noise  vector time series with $d\times d$ nonsingular contemporaneous covariance matrix $ {\Sigma}_a$. 
The primary focus of this work is to consider VARMA models where $d$ is moderate or large.
A VAR  is a special case of the VARMA without moving average coefficients ($\Theta_m = \mathbf{0}_{d \times d}$, for $m=1, \ldots, q$).

Although VARs are more intensively investigated  (e.g.,
\citep{Davis15, Nicholson17} for computational contributions; \citep{Basu15, Kock15, wu2016performance, basu2019low}  for theoretical contributions, and \citep{Matteson11, Gelper15} for applications),  several reasons exist for preferring the more general VARMA class. 
Unlike VAR, the class of VARMA is closed under marginaliztion and linear transformation \citep{Lutkepohl05}. 
In macroeconomics, VARMA is popular for its close link with linearized dynamic stochastic general equilibrium (DSGE) models \citep{kascha2012comparison, fernandez2007abcs}. 
A parsimonious finite order VARMA can capture the dynamics of a potentially infinite-order VAR, leading to improved estimation and forecasting accuracy. Empirically, VARMAs  have been shown to outperform VARs in terms of estimation and forecasting accuracy \citep{kascha2012comparison, Anthanasopoulos08}. Our empirical analysis also demonstrates such improvements (see Section \ref{applications}). 
\color{black} 
Importantly, we see that VARMA achieves this improved forecast accuracy using a more parsimonious description of the data than VAR.
\color{black} 

Despite its advantages over VAR, VARMA has not been very popular among practitioners due to its computational and theoretical challenges in model identification and specification. The model \eqref{VARMA} is not identifiable in general (see Section \ref{sec:id-prob}), i.e.\ there can be different combinations of AR and MA matrices $\{\Phi_{\ell}\}$ and $\{\Theta_{m}\}$ that lead to the same data generating process. The problem of \textit{model identification} refers to finding a ``simple'' element in this equivalence set $\mathcal{E}$ of all such AR-MA matrices (see Section \ref{identification} for formal definition), usually by specifying a number of restrictions on model parameters. The problem of \textit{model specification} refers to finding these restrictions along with the model orders $p, q$ in a data-driven fashion.

Arguably the most popular identification  procedure is the \textit{Echelon form identification} \citep{hannan1984, poskitt2016, Chan16}, which amounts to selecting a basis for the row space of a block Hankel matrix (see Section 4 of \citep{deistler1985}).
Specifying an Echelon form involves selecting \textit{Kronecker orders} (related to indices of rows that form the above basis) from a $O\left((p+q)^d \right)$-dimensional set, by comparing an equally large number of models. Data-driven strategies, involving a series of canonical correlation tests, or regressions based on model selection criteria (e.g., AIC, BIC, information theoretic criterion) were proposed \citep{akaike1974new, akaike1976canonical, poskitt1992}. However, all of these methods are computationally intensive and 
lack a formal asymptotic theory that combines specification and estimation. Assuming $d$ is fixed, \citep{poskitt1992} proved asymptotic theory for the specification step. Then, assuming Kronecker orders are known, consistency of parameter estimation was established. This procedure has been tested only on very small $d$, and finite sample performances are not clear (Section 3.4, \citep{lutkepohl2006forecasting}).

Other popular identification and specification methods include scalar component models \citep{tiao1989, athanasopoulos2008, athanasopoulos2012} and final equations form \citep{zellner1974, hannan1976, wallis1977}.
While these and other existing identification procedures \citep{hannan1984, poskitt2016, Chan16} require different sets of assumptions---sometimes more relaxed ones than we will consider---on the structure of the process, they inherently face the same limitations for large-scale models. The uncertainty and error in the data-driven specification stage is not accounted for in the analysis of the model parameter estimation stage.

These computational and theoretical challenges of aggregating the model selection and parameter estimation are akin to the variable selection challenges in linear regression, where shrinkage methods (e.g., ridge, lasso, elastic net) have been successfully used in combining selection and parameter estimation. A key advantage of these approaches is that they allow formal asymptotic analysis of the complete specification-plus-estimation procedure.

In this work, we show that these convex optimization based techniques of regularization and dimension reduction, by now ubiquitous in the field of high-dimensional statistics, provide new perspectives and solutions to large-scale VARMA identification and estimation problems with several attractive properties. 

\textbf{I. Automatic identification of parsimonious VARMA models.} We show that by devising a suitable convex penalty, we can identify a parsimonious element in the equivalence class $\mathcal{E}$ in an intuitive yet objective fashion (Section \ref{identification}). More formally, we can define the class of AR-MA matrices with minimum $\ell_1$-norm as a partially identified class of ``sparse'' VARMA models $\mathcal{RE} = \argmin_{(\Phi, \Theta)\in\mathcal E}\{ \sum_{\ell=1}^p \|\Phi_\ell\|_1 + \sum_{m=1}^q \|\Theta_m\|_1 \}$. We could also use a modified, strongly convex penalty $\argmin_{(\Phi, \Theta)\in\mathcal E}\{ \sum_{\ell=1}^p (\|\Phi_\ell\|_1 + \alpha \|\Phi_\ell\|_F^2) + \sum_{m=1}^q (\|\Theta_m\|_1 + \alpha \|\Theta_m\|_F^2) \}$ with a very small $\alpha \approx 0$ to identify a parsimonious element in $\mathcal{RE}$, viz.\ the \emph{unique} AR-MA matrices with minimum Frobenius norm (Proposition \ref{prop:yule-walker-identification}). 

\textbf{II. Computationally efficient estimation of VARMA models.}
Our identification strategy explicitly links the search for a unique, parsimonious model throughout the identification, specification and estimation stages. The same penalty used in our identification is used as a regularizer to define a natural VARMA estimator corresponding to this identified target (Section \ref{methodology}). 
We show on real and simulated data examples (Section \ref{applications} and Appendix G) that such parsimonious VARMA models lead to important gains in forecast accuracy compared to parsimoniously estimated VARs. An implementation of our fully-automated VARMA identification and estimation procedure is available in the \texttt{R} package \texttt{bigtime} \citep{bigtime}.

\textbf{III. Non-asymptotic theory for sparse VARMA.} We also provide a non-asymptotic theoretical analysis of our proposed sparse VARMA estimator  (Section \ref{sec:theory}).  Our analysis explicitly captures the complexity of model selection, and does not assume the identification restrictions are known {\it a priori} as in existing asymptotic analysis of VARMA \citep{dias2018, Dufour14}. 
While to the best of our knowledge, consistency of VARMA estimators has been studied only in the low-dimensional, fixed $d$ asymptotic regime \citep{kascha2012comparison, dufour2005asymptotic},
our error bound analysis shows consistent estimation is possible in a double-asymptotic regime $d, T \rightarrow \infty$. 
\color{black}
We provide two main results on consistency (Proposition \ref{prop:elastic-net-fixed-X-E-v2}). 
Our first result in the spirit of \textit{partial identification} \citep{manski2010partial, tamer2010partial} states that under suitable sparsity assumptions our algorithm provides a parsimonious VARMA estimator (small $\ell_1$-norm) whose distance from the equivalence class $\mathcal{E}$ asymptotically vanishes as long as $\log d / T \rightarrow 0$. Our second result 
on \textit{point identification} states that our estimator converges in probability to our identified target in $\mathcal{E}$ as long as $d^4 \log d /T \rightarrow 0$.  
\color{black}

	\section{\label{identification} Identification of the VARMA}
	We revisit the VARMA identification problem in Section \ref{sec:id-prob}, 
	then introduce an optimization-based, parsimonious identification strategy for VARMA  in Sections \ref{sec:id-new} and \ref{sparse_identif}. 
	
	\subsection{Identification Problem}
	\label{sec:id-prob}
	Consider the  $\text{VARMA}_d(p,q)$ of Equation \eqref{VARMA} with fixed autoregressive order $p$ and moving average order $q$. 
	The model can be written using compact lag operators   as
	${\Phi}(L) {  y}_t =  {\Theta}(L) {  a}_t$, 
	where the AR and MA operators are respectively given by
	\begin{equation}
		{\Phi}(L)  = {  I} -  \Phi_{1}L -  \Phi_{2}L^2 - \ldots -   \Phi_{p}L^p
		\ \ \text{and} \ \     {\Theta}(L)  = {  I} +  \Theta_{1}L +  \Theta_{2}L^2 + \ldots +   \Theta_{q}L^q, \nonumber
	\end{equation}
	with the lag operator $L^\ell$ defined as $L^\ell {  y}_t = {  y}_{t-\ell}$.
	We assume the model is stable and invertible 
	meaning respectively that
	$\text{det}\{\Phi(z)\} \neq 0 $  and $\text{det}\{\Theta(z)\} \neq 0 $ for all $|z| \leq 1$ $(z \in \mathbb{C})$. 
	The process $ \{{  y}_t\}$ then has
	an infinite-order VAR representation
	$\Pi(L) {  y}_t = {  a}_t$, 
	where 
	$\Pi(L) =  {\Theta}^{-1}(L)  {\Phi}(L) =  {  I} -  \Pi_{1}L -  \Pi_{2}L^2 - \cdots,$ 
	with $\text{det}\{\Pi(z)\} \neq 0 $ for all $|z| \leq 1$. 
	The $\Pi$-matrices can be computed recursively from the AR matrices $\{ \Phi_{\ell} \}$ and MA matrices $\{ \Theta_m \}$ (e.g.,  \citep{Brockwell91}, Chapter 11).
	The VARMA  is uniquely defined in terms of the operator $ {\Pi}(L)$, but not in terms of the AR and MA operators $ {\Phi}(L)$ and $ {\Theta}(L)$, in general. 
	That is, for a 
	given $ {\Pi}(L)$, $p$, and $q$, one can define an equivalence \color{black} class \color{black} of AR and MA matrix pairs, 
	\begin{equation}\label{eqn:eqv-class}
		\mathcal{E}_{p,q}( {\Pi}(L)) = \{ ( {\Phi},  {\Theta})  : 
		{\Phi}(L) =  {\Theta}(L) {\Pi}(L)  \}, \nonumber
	\end{equation}
	where $  \Phi = [  \Phi_{1} \cdots   \Phi_{p}]$ and 
	$  \Theta = [  \Theta_{1} \cdots   \Theta_{q}]$.  This \color{black} class \color{black} can, in general, consist of more than one such pair,
	implying that further  
	identification restrictions on the AR and MA matrices are needed for meaningful estimation.  
	
	In order to connect identification to estimation, we first provide an alternate characterization of the equivalence class $\mathcal{E}_{p,q}(\Pi(L))$ in terms of a Yule-Walker type equation. 
	\begin{proposition}[Yule-Walker type equation for VARMA]\label{prop:yule-walker-identification}
		Consider a white noise process $\{a_t \}_{t \in \mathbb{Z}}$ with mean zero and variance $ \Sigma_a$. For a stable, invertible linear filter $\Pi(L)$ that allows a $\text{VARMA}_d(p,q)$ representation $\Pi(L) = \Theta^{-1}(L) \Phi(L)$, consider the 
		process  $y_t = \Pi^{-1}(L) a_t$ and 
		define $z_t = \left[y_{t-1}^\top: \cdots: y_{t-p}^\top:a_{t-1}^\top: \cdots: a_{t-q}^\top \right]^\top$. Then, 
		$(\Phi, \Theta)\in \mathcal{E}_{p,q}(\Pi(L))$ if and only if $\beta_{d(p+q) \times d} :=  \left[ \Phi_1 :\ldots: \Phi_p: \Theta_1: \ldots : \Theta_q  \right]^\top$ is a solution to the system of equations $\rho_{zy} = \Sigma_z \beta$, where $\rho_{zy} = \mathbb{E}[z_t y_t^\top]$ and $\Sigma_z = \mathbb{E}[z_t z_t^\top]$.   That is, 
		\begin{equation}\label{eqn:yule-walker}
			\mathcal{E}_{p,q}(\Pi(L)) = \left\{ (\Phi, \Theta): \rho_{zy} = \Sigma_z \beta \right\}.
		\end{equation}
	\end{proposition}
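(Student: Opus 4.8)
The plan is to read the defining relation $\Phi(L)=\Theta(L)\Pi(L)$ as a population least-squares identity and to exploit the orthogonality of the innovations to the past. First I would note that, writing $\beta^\top z_t = \sum_{\ell=1}^p \Phi_\ell y_{t-\ell} + \sum_{m=1}^q \Theta_m a_{t-m}$, the membership $(\Phi,\Theta)\in\mathcal{E}_{p,q}(\Pi(L))$ is equivalent to the time-domain identity $y_t = \beta^\top z_t + a_t$: indeed $\Phi(L)=\Theta(L)\Pi(L)$ gives $\Phi(L)y_t = \Theta(L)\Pi(L)y_t = \Theta(L)a_t$ (using $\Pi(L)y_t=a_t$), which rearranges to exactly this. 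The second ingredient is $\mathbb{E}[z_t a_t^\top]=0$: each $a_{t-m}$ ($m\ge1$) is uncorrelated with $a_t$ by the white-noise assumption, while each $y_{t-\ell}$ ($\ell\ge1$) lies in the span of $\{a_s:s\le t-1\}$ through the MA$(\infty)$ form $y_t=\Pi^{-1}(L)a_t$ and is therefore also uncorrelated with $a_t$. The forward implication is then immediate: assuming $(\Phi,\Theta)\in\mathcal{E}_{p,q}(\Pi(L))$, I would substitute $y_t=\beta^\top z_t + a_t$ into $\rho_{zy}=\mathbb{E}[z_t y_t^\top]$ and use $\mathbb{E}[z_t a_t^\top]=0$ to obtain $\rho_{zy}=\mathbb{E}[z_t z_t^\top]\beta=\Sigma_z\beta$.

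For the converse, the crucial point is that the class is nonempty by hypothesis. Let $\beta^*$ be the stacked coefficient of some fixed $(\Phi^*,\Theta^*)\in\mathcal{E}_{p,q}(\Pi(L))$; by the forward implication $\rho_{zy}=\Sigma_z\beta^*$. If $\beta$ also solves $\rho_{zy}=\Sigma_z\beta$, then $\Sigma_z(\beta-\beta^*)=0$, so every column $\delta$ of $\beta-\beta^*$ satisfies $\delta^\top\Sigma_z\delta=\mathbb{E}[(\delta^\top z_t)^2]=0$, i.e.\ $(\beta-\beta^*)^\top z_t=0$ almost surely. In lag-operator notation this reads $A(L)y_t+B(L)a_t=0$ a.s., where $A(L)=\sum_{\ell}(\Phi_\ell-\Phi_\ell^*)L^\ell$ and $B(L)=\sum_m(\Theta_m-\Theta_m^*)L^m$ are finite-order and have no constant term.

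The hard part will be upgrading this almost-sure relation to an exact polynomial identity. Substituting $a_t=\Pi(L)y_t$ gives $C(L)y_t=0$ a.s.\ with $C(L):=A(L)+B(L)\Pi(L)$, and then I would write $C(L)y_t=C(L)\Pi^{-1}(L)a_t=\sum_{n\ge1}D_n a_{t-n}$ via the MA$(\infty)$ form. Because the $a_s$ are i.i.d.\ with nonsingular $\Sigma_a$, taking the covariance of the (a.s.\ zero) right-hand side yields $\sum_n D_n\Sigma_a D_n^\top=0$, and since each summand is positive semidefinite this forces $D_n=0$ for all $n$; hence $D(L)=C(L)\Pi^{-1}(L)=0$, and right-multiplying by $\Pi(L)$ gives $C(L)=0$. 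This is precisely $A(L)=-B(L)\Pi(L)$, i.e.\ $\Phi(L)-\Phi^*(L)=(\Theta(L)-\Theta^*(L))\Pi(L)$, which together with $\Phi^*(L)=\Theta^*(L)\Pi(L)$ yields $\Phi(L)=\Theta(L)\Pi(L)$, so $(\Phi,\Theta)\in\mathcal{E}_{p,q}(\Pi(L))$. The essential obstacle is exactly this last step---showing that no nontrivial convergent lag filter can annihilate the process---which is where stability and invertibility of $\Pi(L)$ together with the nonsingularity of $\Sigma_a$ are indispensable.
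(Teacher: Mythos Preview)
Your proof is correct and follows the same overall architecture as the paper's: the forward direction uses $y_t=\beta^\top z_t+a_t$ together with $\mathbb{E}[z_t a_t^\top]=0$, and the reverse direction fixes a particular solution $\beta^*\in\mathcal{E}_{p,q}(\Pi(L))$, observes that $\Sigma_z(\beta-\beta^*)=0$ forces $(\beta-\beta^*)^\top z_t=0$ almost surely, and concludes.

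The only substantive difference is in how the final step is handled. The paper simply asserts that $\mathcal{E}_{p,q}(\Pi(L))$ is \emph{characterized} by the almost-sure relation $y_t=\beta^\top z_t+a_t$, so that once $(\beta^*+\delta)^\top z_t+a_t=y_t$ a.s.\ is established, membership follows directly. You instead prove this characterization from scratch: you upgrade the a.s.\ identity $(\beta-\beta^*)^\top z_t=0$ to a polynomial identity by expanding $C(L)y_t$ in the MA$(\infty)$ innovations, using nonsingularity of $\Sigma_a$ to kill each $D_n$, and then right-multiplying by $\Pi(L)$. This extra work is sound and makes the argument self-contained, whereas the paper leans on the (standard but unproved) equivalence between the operator identity $\Phi(L)=\Theta(L)\Pi(L)$ and its almost-sure counterpart. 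In short, both proofs are the same up to the point $\delta^\top z_t=0$ a.s.; you supply a rigorous justification for the last implication that the paper takes for granted.
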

	
	A proof of this proposition is provided in Appendix A.1.  Note that both $\rho_{zy}$ and $\Sigma_z$ can be expressed as functions of $\Pi$ and $\Sigma_a$ alone (i.e.\ they do not depend on $\Theta$ and $\Phi$), and hence are uniquely defined for the underlying process $y_t$. 
	While the AR($\infty$) representation given by $\Pi$ in Proposition \ref{prop:yule-walker-identification} is unique, it allows an equivalent characterization in terms of many ($\Phi$, $\Theta$) combinations. Each of these combinations is a solution to the (potentially) underdetermined system of equations in Proposition \ref{prop:yule-walker-identification}.
	
	A key consequence of this proposition is that our identification target can be defined by optimizing over the solution set of this Yule-Walker type equation. Further, we can use 
	sample analogues of $\rho_{zy}$ and $\Sigma_z$ in our estimation step to search for this target in a data-driven fashion.
	
	\subsection{Optimization-based Identification}
	\label{sec:id-new}
	We rely on strongly convex optimization to establish identification for VARMA models. 
	Among all feasible AR and MA matrix pairs, we look for the one that gives the most parsimonious VARMA representation. We measure parsimony through a pair of  convex regularizers, $\mathcal{P}_{\text{AR}}(\Phi)$ and $\mathcal{P}_{\text{MA}}(\Theta)$.
	Our identification results apply equally well to any convex function:  one may consider,  amongst others, the $\ell_1$-norm, the $\ell_2$-norm, the nuclear norm, and combinations thereof. 
	Our methodology also allows for a different choice of convex function for the AR and MA matrices if prior knowledge would allow a more informed modeling approach.  This might be particularly useful in economics, for instance, where one may be interested in a parsimonious AR structure for interpretability, but can allow for a non-sparse MA polynomial to increase forecast accuracy. 
	
	We now define the  \textit{regularized}  equivalence  class of VARMA representations as
	\begin{equation}
		\mathcal{RE}_{p,q}(  \Pi(L)) = \underset{{  \Phi}, {  \Theta}}{\operatorname{argmin}} \ \{ \mathcal{P}_{\text{AR}}({  \Phi}) + \mathcal{P}_{\text{MA}}({  \Theta}) \ \st  \ {  \Phi}(L) = {  \Theta}(L)   \Pi(L)\}. \label{Optim_set}
	\end{equation}
	This \color{black}  regularized   equivalence  class  is a subclass of the equivalence  class \color{black} $\mathcal{E}_{p,q}( {\Pi}(L))$, containing the \color{black}  regularized  \color{black} VARMA representations. 
	If the objective function in \eqref{Optim_set}  is strongly convex, then the \color{black}  regularized  \color{black} equivalence  \color{black} class \color{black} consists of one unique AR-MA matrix pair, in which case identification is established. 
	However, for the $\ell_1$-norm, for instance, the objective function is convex but not strongly convex. Hence, to ensure identification for this case, we add two extra terms to the objective function and consider 
	\begin{equation}
		({\Phi}^{(\alpha)}, {\Theta}^{(\alpha)})  =  \underset{{\Phi}, { \Theta}}{\operatorname{argmin}} 
		\{\mathcal{P}_{\text{AR}}({\Phi}) + \mathcal{P}_{\text{MA}}({  \Theta}) + \dfrac{\alpha}{2} \|{\Phi}\|_F^2 +  \dfrac{\alpha}{2}  \|{\Theta}\|_F^2  \ \st   {\Phi}(L) = {\Theta}(L)   \Pi(L) \}. \label{Optim_pair}
	\end{equation}
	Problem \eqref{Optim_pair} is strongly convex and thus has a \emph{unique} solution pair $({  \Phi}^{(\alpha)}, {  \Theta}^{(\alpha)})$ for each 
	$\alpha>0$. 
	For any stable, invertible VARMA, we then define its unique \color{black}  regularized  \color{black} representation in terms of the AR-MA matrices as
	\begin{equation}
		({  \Phi}^{(0)}, {  \Theta}^{(0)}) = \underset{\alpha \rightarrow 0^+}{\text{lim}} ({  \Phi}^{(\alpha)}, {  \Theta}^{(\alpha)}). \label{uniquePhiTheta}
	\end{equation}
	The following proposition, proved in Appendix A.2, establishes that $({  \Phi}^{(0)}, {  \Theta}^{(0)})$ is in the \color{black}  regularized  equivalence class $\mathcal{RE}_{p,q}(  \Pi(L))$  \color{black} and furthermore is the \textit{unique} pair of autoregressive and moving average matrices in this set having the smallest Frobenius norm. This result is similar to a result in the regression context, which states that the LARS-lasso solution has the minimum $\ell_2$-norm over all lasso solutions (see \citep{tibshirani2012}, Lemma 7). 
	
	\begin{proposition}\label{thm:identification}
		The limit in \eqref{uniquePhiTheta} exists and is the unique pair in the set
		$\mathcal{RE}_{p,q}(  \Pi(L))$ \color{black} whose 
		Frobenius norm squared is smallest:  
		$$
		({  \Phi}^{(0)}, {  \Theta}^{(0)})=\underset{{  \Phi}, {  \Theta}}{\operatorname{argmin}} \{\|\Phi\|_F^2+\|\Theta\|_F^2\ \st\ (\Phi,\Theta)\in  \mathcal{RE}_{p,q}(  \Pi(L))  \}.
		$$
	\end{proposition}
	
	\subsection{\label{sparse_identif}Sparse Identification} \color{black}
	While our identification results apply equally well to any convex function, we give special attention to sparsity-inducing convex regularizers. In this case, the 
	regularized equivalence \color{black} class \color{black} in \eqref{Optim_set} is a sparse equivalence \color{black} class, \color{black} meaning that, in general, we would expect many of the elements of the AR and/or MA matrices to be exactly equal to zero. 
	
	\color{black}To guarantee the sparsest VARMA representation, one might consider taking
	$ \mathcal{P}_{\text{AR}}(\Phi) = \|\Phi\|_0$ and  $\mathcal{P}_{\text{MA}}(\Theta) = \|\Theta\|_0.$
	However,  since the $\ell_0$-penalty is non-convex, a unique solution cannot be guaranteed. One can construct examples in which there exist multiple equivalent, sparsest VARMAs, see \citep{Tsay14} 
	and Appendix A.3.1. 
	Strong convexity in \eqref{Optim_pair} is key to guaranteeing uniqueness of $({  \Phi}^{(\alpha)}, {  \Theta}^{(\alpha)})$.  For sparsity, we may therefore add to the $\ell_2$-norm in \eqref{Optim_pair} the $\ell_1$-norm $\mathcal{P}_{\text{AR}}(\Phi) = \|\Phi\|_1$  and $\mathcal{P}_{\text{MA}}(\Theta) =  \|\Theta\|_1$ as a sparsity-inducing convex heuristic.
	
	While our theory will focus on the $\ell_1$-norm, in the empirical sections we also  investigate a time-series specific alternative penalty,  the hierarchical lag (hereafter ``HLag") penalty \citep{Nicholson16, wilms2017nips}:
	$\mathcal{P}_{\text{AR}}(\Phi) = \sum_{i=1}^d \sum_{j=1}^d \sum_{\ell=1}^{p} \| {  \Phi}_{(\ell:p), ij} \|, \ \text{and } \ \mathcal{P}_{\text{MA}}(\Theta) = \sum_{i=1}^d \sum_{j=1}^d \sum_{m=1}^{q} \| {  \Theta}_{(m:q), ij} \|, $ with
	${  \Phi}_{(\ell:p), ij} = \left [{\Phi}_{\ell, ij} \ldots  {\Phi}_{p, ij} \right ] \in \mathbb{R}^{(p-\ell+1)}$ and 	${  \Theta}_{(m:q), ij} = \left [{\Theta}_{m, ij} \ldots  {\Theta}_{q, ij} \right ] \in \mathbb{R}^{(q-m+1)}$. 
	This penalty involves a lag-based hierarchical group lasso penalty (e.g., \citep{yan2017hierarchical}) on the AR (or MA) parameters.
	It allows for automatic lag selection  by forcing lower lags of a time series in one of the VARMA equations to be selected before its higher order lags and is thus built on the intuition of encouraging increased sparsity in $\Phi_{\ell}$ and $\Theta_{\ell}$ as the lag increases. 
	
	\section{\label{methodology}Sparse Estimation of the VARMA} \color{black}
	We estimate and determine the degree of parsimony of 
	VARMA parameters 
	by the use of convex regularizers. Since the $\text{VARMA}_d(p,q)$ of Equation \eqref{VARMA} \color{black} cannot be directly estimated as it  contains the unobservable (latent) lagged errors, 
	we proceed in two phases, in the spirit of \cite{Spliid83, Dufour14}, and references therein.
	In Phase-I, we approximate these unobservable errors.  In Phase-II, we estimate the VARMA  with the approximated lagged errors. 

	\subsection{\label{PhaseIestimators}Phase-I: Approximating the unobservable errors}
	The VARMA  of Equation \eqref{VARMA} has a pure  VAR($\infty$) representation if it is invertible (Section \ref{sec:id-prob}). 
	We  therefore approximate the errors ${  a}_t$ by the residuals  of a VAR($\widetilde p$) given by
	\begin{equation}
		{  y}_t = \sum_{\tau=1}^{\widetilde{p}} {  \Pi}_{\tau} {  y}_{t-\tau} + {  \varepsilon}_t, \label{VARptilde}
	\end{equation}
	for $ (\widetilde p+1) \leq t \leq T$, with $\widetilde p$ a finite number, 
	$ \{{  \Pi}_{\tau} \in \mathbb{R}^{d \times d} \}_{\tau=1}^{\widetilde p}$ the AR parameter matrices, and 
	${{  \varepsilon}}_t$ a 
	vector error series. 
	Denote the estimates by $\widehat{\Pi}_{\tau}$ and residuals by $\widehat{{  \varepsilon}}_t =y_t -  \sum_{\tau=1}^{\widetilde{p}} {  \widehat{\Pi}}_{\tau} {  y}_{t-\tau}$.
	
	Estimating the VAR($\widetilde p$) of Equation \eqref{VARptilde} is challenging since $\widetilde p$ needs to be sufficiently large such that the residuals $\widehat{{  \varepsilon}}_t$ accurately approximate the errors ${  a}_t$. 
	Since, a large number of parameters ($\widetilde pd^2$), relative to the time series length $T$, needs to be estimated, we use regularized estimation. 
	For ease of notation, first rewrite model \eqref{VARptilde} in compact matrix notation
	${  Y} =   {  \Pi} {  Z} + {{  E}}, $
	where 
	${  Y} = [ {  y}_{\widetilde p+1} \ldots {  y}_T] \in \mathbb{R}^{d\times ({T-\widetilde p})}, 
	{  Z} = [{  z}_{\widetilde p+1} \ldots {  z}_T] \in \mathbb{R}^{d\widetilde{p}\times ({T-\widetilde p})}, \text{with} \ 
	{  z}_t = [ {  y}^\top_{t-1} \ldots {  y}^\top_{t-{\widetilde{p}}} ]^\top \in \mathbb{R}^{(d\widetilde{p}\times 1)}, 
	{{  E}} = [ {  \varepsilon}_{\widetilde p+1}  \ldots {  \varepsilon}_T] \in \mathbb{R}^{d\times ({T-\widetilde p})},$ and
	${  \Pi} = [{  \Pi}_{1} \ldots {  \Pi}_{\widetilde{p}}] \in \mathbb{R}^{d\times d\widetilde{p}}.$ 
	The regularized autoregressive estimates ${  \widehat{\Pi}}$ are obtained as
	\begin{equation}
		\widehat{{  \Pi}} = \underset{{  \Pi}}{\operatorname{argmin}} \left \{ \dfrac{1}{2} \| {  Y} -  {  \Pi} {  Z}\|_F^2  + \lambda_{{ \Pi}} \mathcal{P}(\Pi) \right \}, \label{HVAReq}
	\end{equation}
	where we use the squared Frobenius norm  
	as loss function and \color{black} $\mathcal{P}(\Pi)$ is any convex regularizer.
	In our simulations and applications, we focus on sparsity-inducing regularizers ($\ell_1$-norm or  HLag penalty). 
	\color{black} 
	The penalty parameter $\lambda_{{  \Pi}}>0$ then regulates the degree of sparsity in ${  \widehat{{  \Pi}}}$: the larger $\lambda_{{  \Pi}}$, the sparser ${  \widehat{{  \Pi}}}$. 
	Problem \eqref{HVAReq} can be efficiently solved using Algorithm 1 in \cite{Nicholson16}.

	\subsection{\label{PhaseIIestimators}Phase-II: Estimating the VARMA}
	We continue with the approximated lagged errors $\widehat{{  \varepsilon}}_{t-1},\ldots,\widehat{{  \varepsilon}}_{t-q}$ instead of the true errors ${  a}_{t-1},\ldots,{  a}_{t-q}$ in Equation \eqref{VARMA}. The resulting model 
	\begin{equation}
		{  y}_t =  \sum_{\ell=1}^p {  \Phi}_{\ell} {  y}_{t-\ell} + \sum_{m=1}^q {  \Theta}_{m} \widehat{{  \varepsilon}}_{t-m} +  {{  u}}_{t}, \label{VARX} 
	\end{equation} 
	is a 
	regression 
	of  ${  y}_t$ on  ${  y}_{t-1}, \ldots, {  y}_{t-p}, \widehat{{  \varepsilon}}_{t-1},\ldots,\widehat{{  \varepsilon}}_{t-q}$ with vector error series ${{  u}}_t$. 
	To tackle the VARMA overparameterization problem  and establish identification simultaneously with estimation\color{black}, we again use regularization. 
	
	Rewrite the lagged regression \eqref{VARX} in compact matrix notation
	${  Y} =   {  \Phi} {  Z} + {  \Theta} {  X} +  {{  U}}, $
	where ${  Y} = [ {  y}_{\bar{o}+1} \ldots {  y}_T] \in \mathbb{R}^{d\times ({T-\bar{o}})},$ 
	${  Z} = [{  z}_{\bar o+1} \ldots {  z}_T] \in \mathbb{R}^{d\hat{p}\times ({T-\bar o})}$, with
	${  z}_t = [ {  y}^\top_{t-1} \ldots {  y}^\top_{t-\hat{p}} ]^\top \in \mathbb{R}^{(d\hat{p}\times 1)},$ 
	${  X} = [{  x}_{\bar{o}+1} \ldots {  x}_T] \in \mathbb{R}^{d\hat{q}\times ({T-\bar{o}})}$
	with 
	${  x}_t = [ \widehat{{  \varepsilon}}^\top_{t-1} \ldots \widehat{{  \varepsilon}}^\top_{t-\hat{q}} ]^\top \in \mathbb{R}^{(d\hat{q}\times 1)}$,
	with $\bar{o}=\text{max}(\hat{p}, \hat{q})$,
	for specified order $\hat{p}, \hat{q}$,
	${{  U}} = [ {  u}_{\bar o+1}  \ldots {  u}_T] \in \mathbb{R}^{d\times ({T-\bar o})}, 
	{  \Phi} = [ {  \Phi}_{1}\ldots {  \Phi}_{\hat{p}}] \in \mathbb{R}^{d\times d\hat{p}},$ and
	${  \Theta} = [ {  \Theta}_{1} \ldots {  \Theta}_{\hat{q}}] \in \mathbb{R}^{d\times d\hat{q}}.$ 
	The regularized VARMA estimates are obtained as:
	\begin{equation}
		(\widehat{{  \Phi}}^{(\alpha)}, \widehat{{  \Theta}}^{(\alpha)}) = \underset{{  \Phi},{  \Theta}}{\operatorname{argmin}}  \{ \dfrac{1}{2} \| {  Y} -  {  \Phi} {  Z} - {  \Theta}{  X}\|_F^2  + \lambda_{{ \Phi}} \mathcal{P}_{\text{AR}}(\Phi)  + \lambda_{{ \Theta}} \mathcal{P}_{\text{MA}}(\Theta)  +  \dfrac{\alpha}{2} (\lambda_{{ \Phi}}  \color{black} \|{  \Phi}\|_F^2 +   \lambda_{{ \Theta}} \color{black} \|{  \Theta}\|_F^2 \color{black}) \},
		\label{HVARXeq}
	\end{equation}
	where $\lambda_{{ \Phi}}, \lambda_{{ \Theta}}>0$ are two penalty parameters. \color{black}
	By adding the regularizers $\mathcal{P}_{\text{AR}}(\Phi)$ and $\mathcal{P}_{\text{MA}}(\Theta)$ to the objective function, estimation of large-scale VARMAs is feasible. The addition of the squared Frobenius norms makes the problem strongly convex, ensuring a unique solution in the same way as was done in the identification scheme \eqref{Optim_pair}.   \color{black}
	Optimization problem \eqref{HVARXeq} can be solved via the proximal gradient algorithm in Appendix F.  
	We investigate the forecast accuracy of the proposed VARMA on simulated data in Appendix G. 
	
	\subsection{Choosing Tuning Parameters}
	The estimation procedure involves three sets of  user-defined choices: (i) the
	maximum lag orders $\widetilde p, \widehat{p}, \widehat{q}$; 
	(ii) the penalty parameters $\lambda_{\Pi}, \lambda_\Phi, \lambda_\Theta$; and 
	(iii) the parameter $\alpha$ to ensure uniqueness. 
	We choose these in either a data-driven or computationally inexpensive manner. Below we motivate our choices and address implications of misspecification.
	
	{\bf The maximal lag orders $\widetilde p, \widehat{p},$ and $ \widehat{q}$.} We take $\widetilde p=\lfloor 1.5 \sqrt{T} \rfloor $ and $\widehat{p}=\widehat{q} = \lfloor 0.75 \sqrt{T} \rfloor$.  Our theoretical analysis suggests that $\widetilde{p} \asymp T^{\frac{1}{2}-\epsilon}$ (Proposition \ref{prop:est-error-main}), and for larger $d$, overselecting AR/MA orders only affects the estimation and prediction performance at a rate of $\log d$ (Proposition \ref{prop:est-error-phase-II}). To simplify practical implementation, we therefore set  these values at a slightly larger order $O(\sqrt{T}$).
	
	We perform a simulation study (Appendix G.4) to investigate misspecification of the maximal lag orders. 
	We find that, in general, overselecting is less severe than underselecting. The price to pay for overselection is smaller for the HLag penalty than for the 
	$\ell_1$-penalty since the former 
	performs automatic lag selection. As such, it can reduce the effective maximal order of each series in each equation of the VAR (Phase-I) and VARMA (Phase-II).
	
	{\bf The penalty parameters $\lambda_{\Pi}, \lambda_\Phi$ and $\lambda_\Theta$.} We select the penalty parameters using cross-validation. Below, we describe the selection of $\lambda_\Pi$ in Phase-I; in Phase-II, we proceed similarly but using a two-dimensional grid search for the penalty parameters $(\lambda_{{ \Phi}},\lambda_{{ \Theta}})$. 
	
	Following \cite{Friedman10}, we use a grid of ten penalty parameters starting from $\lambda_{\Pi, \text{max}}$, an estimate of the smallest value for which all parameters are zero, and then decreasing in log linear increments. 
	We then 
	use the following time series cross-validation approach:
	For each time point $t=S, \ldots, T-h$, with $S=\lfloor 0.9\cdot T\rfloor$ and forecast horizon $h$, we estimate the model and obtain parameter estimates. This results in ten different parameter estimates, one for each value of the penalty parameter in the grid. 
	From these  estimates, we compute $h$-step ahead  forecasts $\widehat{{  y}}^{(\lambda)}_{t+h}$ obtained with penalty parameter $\lambda$. 
	We select the value of  $\lambda_{{ \Pi}}$ that gives the most regularized model whose Mean Squared Forecast Error 
	\begin{equation}
		\text{MSFE}_h^{(\lambda)} = \dfrac{1}{T-h-S+1} \sum_{t=S}^{T-h} \dfrac{1}{d} \|{y}_{t+h} - \widehat{{y}}_{t+h}^{(\lambda)}\|^2, \nonumber \label{MSFE_eq}
	\end{equation}
	is within one standard error (see \cite{Hastie09}; Chapter 7) of the minimal MSFE.
	In simulations, we take $h=1$; in the forecast applications, we also consider other forecast horizons.
	
	{\bf The parameter $\alpha$.}
	We will sometimes refer to Equation \eqref{HVARXeq} as an ``elastic net" problem, although, unlike $\lambda_{{ \Phi}}$ and $\lambda_{{ \Theta}}$, the parameter $\alpha$ is not treated as a statistical tuning parameter; rather,  as a small positive value simply used to ensure uniqueness.
	Our simulation study in Appendix A.3.2 reveals that the addition of a small non-zero $\alpha$ indeed produces sparse VARMA estimates close to the unique $(\Phi^{(0)}, \Theta^{(0)})$ pair defined in Equation \eqref{uniquePhiTheta}.
		For $\alpha=0$, we still retrieve sparse VARMA estimates that are close to \textit{an} element in the sparse equivalence class. The resulting estimates are typically sparser (i.e.\ they have fewer non-zero components) than the estimates obtained with a small non-zero $\alpha$ since the target  $(\Phi^{(0)}, \Theta^{(0)})$ corresponds to 
		the pair with minimum Frobenius norm among all minimum-$\ell_1$ VARMA representations.
		Since our main objectives are to produce VARMA estimates that are close to the sparse equivalent class and have good out-of-sample forecast performance, we prefer to work with the sparser estimates and thus take $\alpha=0$ in practice, as we have done in our forecast applications (Section \ref{applications}) and simulations (Appendix G).

\section{Theoretical Properties}\label{sec:theory}
We establish consistency of our VARMA estimator with the lasso penalty in Phase-I and elastic net penalty in Phase-II under a double asymptotic regime where dimension $d$ grows with the sample size. Our Phase-II estimator is essentially an elastic net regression, but introduces additional complexities compared to i.i.d.\ or stochastic regression that need to be dealt with in the asymptotic analysis. The rows of the design matrix consist of consecutive observations from an \textit{approximate} version of the time series $z_t = [y_{t-1}^\top:\ldots: y_{t-p}^\top:a_{t-1}^\top:\ldots:a_{t-q}^\top]^\top$, with $a_t$ approximated by Phase-I residuals $\hat{\varepsilon}_t$. The error term in the regression involves $\hat{\varepsilon}_t$ which do not have an analytically tractable distribution. In addition, since $\Phi(L)y_t = \Theta(L)a_t$, the population covariance matrix of the predictors $\Sigma_z$ is potentially singular. It is not clear whether a restricted eigenvalue (RE) assumption, commonly used in high-dimensional regression \citep{powai2012}, holds in Phase-II regression.

We start by establishing in Section \ref{subsec:key-results} deterministic upper bounds on the estimation error of a generic elastic net regression under some sufficient conditions. A crucial step to verify these sufficient conditions is to derive upper bounds to control the approximation error of $a_t$ by $\hat{\varepsilon}_t$ in Phase-I. We do this in Section \ref{subsec:error-phase-I}. Finally, in Section \ref{subsec:error-phase-II} we show that these sufficient conditions for Phase-II elastic net regression are satisfied with high probability for random realizations from the VARMA model, and present estimation error bounds.

To maintain analytical tractability when tackling the VARMA specific complexities, we consider two modifications in Phase-II. First, we use  $\hat{y}_t := y_t - \hat{\varepsilon}_t$, the fitted values from Phase-I, instead of $y_t$, as response in  Phase-II. The analysis can be modified in a straightforward fashion to use $y_t$ as  response, although the resulting upper bounds become larger. 
Second, we consider a constrained version of the penalized Phase-II estimator with an additional side constraint on the $\ell_1$-norm of the regression coefficient. Equivalence of the constrained and penalized versions follows from duality of the convex programs. The additional side constraint on the regression coefficient is easy to implement in practice \citep{agarwal2010fast}, and has been used for technical convenience in earlier literature on high-dimensional statistics \citep{powai2012}.

We assume Gaussianity in our analysis, primarily to apply some concentration inequalities for Gaussian processes in our non-asymptotic error bound analysis. The results can be extended to non-Gaussian VARMA using recent concentration bounds for non-Gaussian linear processes \citep{sun2018large} with potentially slower convergence rate for processes with heavier tails than Gaussian, although the technical exposition becomes more cumbersome.

\textit{Notation.} 
We denote the sets of integers, real, and complex numbers by $\mathbb{Z}$, $\mathbb{R}$, and $\mathbb{C}$, respectively. We use $\|.\|$ to denote the Euclidean norm of a vector and the operator norm of a matrix. We reserve $\|.\|_0$, $\|.\|_1$ and $\|.\|_{\infty}$ to denote the number of nonzero elements, $\ell_1$ and $\ell_\infty$ norms of a vector or the vectorized version of a matrix, respectively, and $\|.\|_F$ to denote the Frobenius norm of a matrix.   
For a matrix-valued, possibly infinite-order lag polynomial $\mathcal{A}(L) = \sum_{\ell \ge 0} A_\ell L^\ell$, we define $\vertiii{\mathcal{A}}:= \max_{\theta \in [-\pi, \pi]} \| \mathcal{A}(e^{i\theta})\|$, and use $\mathcal{A}_{[k]}(L)$ and $\mathcal{A}_{-[k]}(L)$ to denote the truncated version $\sum_{\ell = 0}^k A_\ell L^\ell$ and the tail series $\sum_{\ell > k} A_\ell L^\ell$, respectively. We also use $\|\mathcal{A}\|_{2,1}$ to denote the sum of the operator norms of its coefficients, $\sum_{\ell \ge 0} \|A_\ell \|$. More generally, for any complex matrix-valued function $f(\theta)$ of frequencies   $\theta \in [-\pi, \pi]$ to $\mathbb{C}^{p \times p}$, we define $\vertiii{f}:= \max_{\theta \in [-\pi, \pi]} \|f(\theta)\|$. 
{In our theoretical analyses, we use $c_i$, $i = 0, 1, 2, \ldots$, to denote universal positive constants whose values do not rely on the model dimensions and parameters.}
For two model dependent positive quantities $A$ and $B$, we also use $A \succsim B$ to  mean that for any universal constant $c > 0$,  we have $A \ge cB$ for sufficiently large sample size. Finally, $A \asymp B$ means $A \succsim B$ and $A \precsim B$.

\begin{remark}[Measures of Dependence]\label{rem:dep-measures} We adopt the spectral density based measures of dependence introduced in \cite{Basu15} to capture the role of temporal dependence in our non-asymptotic error bounds. 
	For a $d$-dimensional centered stationary time series $\{x_t\}_{t \in \mathbb{Z}}$ with autocovariance function $\Gamma_x(h) = \cov(x_t, x_{t+h}) = \mathbb{E}[x_t x_{t+h}^\top]$, $h \in \mathbb{Z}$, we define the spectral density function  
	$f_x(\theta):= \frac{1}{2\pi} \sum_{\ell=-\infty}^{\infty} \Gamma_x(\ell) e^{-i\ell \theta}, \, \, \theta \in [-\pi, \pi]$. 
	The quantity $\vertiii{f_x}$ is taken as a measure of temporal and cross-sectional dependence in the time series $\{x_t\}$.
	For a stable, invertible VARMA process $y_t$ in  \eqref{VARMA} with $\Lambda_{\min}(\Sigma_a) > 0$, it is known that $f_y$ is non-singular on $[-\pi, \pi]$ and there exist two model dependent quantities $\bar{C}>0$ and $\bar{\rho} \in [0,1)$ such that $\| \Pi_{\tau} \| \le \bar{C} \, \bar{\rho}^\tau$, for all integers $\tau \ge 1$ \cite{dufour2005asymptotic}. This implies for any $\tilde{p} \ge 1$, we have  $\| \Pi_{-[\tilde{p}]} \|_{2,1} \le \bar{C} \bar{\rho}^{\tilde{p}} /(1-\bar{\rho})$. The quantities $\vertiii{f_y}, \vertiii{f_y^{-1}}$ and $\| \Pi_{-[\tilde{p}]} \|_{2,1}$ appear in our error bounds, and capture the effects of temporal dependence on the convergence rates. 
\end{remark}

\subsection{Elastic Net with Singular Gram Matrix}\label{subsec:key-results}
Consider an elastic net penalized regression problem where the population covariance matrix of the predictors is singular.
The problem is  non-identifiable in the sense that there is no ``true" coefficient vector.  Rather, the elastic net penalty itself is used to specify an identified target  among all equivalent data-generating models. The following proposition provides deterministic upper bounds on estimation and in-sample prediction errors under some sufficient conditions. The proof is in Appendix C.

\begin{proposition}
	\label{prop:elastic-net-fixed-X-E-v2} 
	Let $\Sigma\in\real^{D\times D}$ be a non-negative definite matrix with $\Lambda_{\min}(\Sigma) = 0$ and let $\rho\in\real^{D}$ be in the column space of $\Sigma$.  
	For some $\alpha \ge 0$, $y,\varepsilon\in\real^N$ and $X\in\real^{N\times D}$,
	consider the linear regression model  $y = X \tba + \varepsilon$ with identified target
	\begin{equation*}
		\tba:= \argmin_{\beta} \left\{ \mathcal{P}_\alpha (\beta) \st \Sigma \beta = \rho \right\},
	\end{equation*}
	where $\mathcal{P}_\alpha(\beta) := \|\beta\|_1 + (\alpha/2) \|\beta\|^2$, and define the estimator
	\begin{equation*}
		\hba := \argmin_{\beta: \|\beta\|_1 \le M } ~  \frac{1}{n} \|y-X \beta\|^2 + \lambda \mathcal{P}_{\alpha } (\beta),
	\end{equation*}
	for some $n$ and $M$, where $M\ge\|\tba\|_1$.  Then for any choice of $\lambda \ge 2\left\|X^\top \varepsilon/n \right\|_\infty$ and $q_n \ge  \left\|X^\top X/n - \Sigma \right\|_\infty $, the following holds:
	\begin{eqnarray*}
		&& \mbox{\textit{(a) {In-Sample} Prediction: }} \frac1{n}\| X\hba-X \tba\|^2\le \lambda \left[2M + \alpha M^2/2  \right], \\
		&& \mbox{\textit{(b) Partially-Identified Estimation: }} \min_{\beta:\Sigma\beta=\rho}\|\hba-\beta\|^2\le \frac{4 q_n M^2 + \lambda \left[ 2M + \alpha M^2 / 2  \right]}{\Lambda_{\min}^+\left(\Sigma \right)},
	\end{eqnarray*}
	where  $\Lambda_{\min}^+(\Sigma)$ is the smallest non-zero eigenvalue of $\Sigma$.
	
	\noindent In addition, define the constrained version of the estimator  
	\begin{equation*}
		\hba_{[C]} := \argmin_{\beta} \left\{ \mathcal{P}_{\alpha} (\beta) \st \frac{1}{n} \|y-X \beta\|^2 \le A_n, ~~ \|\beta\|_1 \le M \right\}.
	\end{equation*}
	Then, for any $r_n \ge \frac1{n}\left\|X^\top \varepsilon \right\|_\infty$, and $s_n \ge  \left| \frac1{n}\left\| \varepsilon \right\|^2 - \sigma^2 \right|$, $A_n =\sigma^2 + s_n$ and $M\ge\|\tba\|_1$, we have 
	\begin{eqnarray*}
		\mbox{\textit{(c) Point-Identified Estimation: }} \left\|\hba_{[C]} - \tba\right\|^2 \le 2v_n + 2(\sqrt{{D}}/\alpha+ M)v_n^{1/2}, \nonumber
	\end{eqnarray*}
	where $v_n:=\frac{4Mr_n + 2s_n + 4M^2q_n}{\Lambda_{\min}^+(\Sigma)}$.
\end{proposition}

The VARMA estimator from Phase-II can be expressed in the above regression format (see Equation \eqref{eqn:vec-phase-II}) with $n = T-q$, $N = nd$, $\Sigma = \Sigma_z$ and $D = d^2(p+q)$. We will show that modulo some terms capturing the effect of temporal dependence, $\lambda, q_n, r_n$ can be chosen in the order of at most $O(\sqrt{\log D /n})$ with high probability. 

Under this setting, part (a) will imply in-sample prediction consistency in the high-dimensional regime $\log D/n \rightarrow 0$ as long as the identification target $\beta^{*(\alpha)}$ is \textit{weakly sparse}, i.e. its $\ell_1$-norm grows sufficiently slowly. Consequently, our VARMA forecasts will  asymptotically converge to the optimal forecasts.

Part (b) will ensure that the Euclidean distance of our VARMA estimator from the set of data-generating vectors $\{\beta: \Sigma_z \beta = \rho_{zy} \}$ converges to zero in the asymptotic regime  $\log D /n \rightarrow 0$, assuming weak sparsity of $\beta^{*(\alpha)}$. The rate of convergence also relies on the  curvature of the population loss captured by  $\Lambda_{\min}^+(\Sigma)$. 

Error bound for the point identification part (c) will imply that with an appropriate choice of $s_n$, consistent estimation of our identification target is possible in the double-asymptotic regime ${D}^2\log({D})/n\to0$, as long as $\beta^{*(\alpha)}$ is weakly sparse in the sense of small $\ell_1$-norm. This error bound also increases linearly with the inverse of $\alpha$, the parameter capturing curvature of the penalty function $\mathcal{P}_{\alpha}(\beta)$.

\begin{remark}  
	We focus on prediction and estimation instead of model selection consistency for two reasons. 
	First, model selection consistency in penalized regression holds only under incoherence or irrepresentable conditions \citep{zhao2006model}, which are stringent even for i.i.d.\ data, and are not known to hold with high probability for multivariate  stationary time series data.
	Second, since we work with \textit{an equivalence class of models} potentially having different sparsity patterns, it is not obvious how to define sparsity of a true model, in general.
	However, we have conducted a simulation experiment (Appendix A.3.2) to assess model selection properties of our estimator in finite samples, which shows promising results.
\end{remark}

\subsection{Approximation Error in Phase-I}\label{subsec:error-phase-I} 
Our main interest in this section is in approximating the errors  $a_t$ by the Phase-I residuals $\hat{\varepsilon}_t$ for use in Phase-II. As a by-product, we also provide estimation error bounds for VAR($\infty$) coefficients (see Proposition D.1).

Suppose we re-index data in the form $(y_{-(\tilde{p}-1)}, y_{-(\tilde{p}-2)}, \ldots, y_{-1}, y_0, y_1, \ldots, y_T)$. In Phase-I, we regress $y_t$ on its most recent $\tilde{p}$ lags:
\begin{equation}\label{eqn:inf-var}
	y_t = \sum_{\tau=1}^{\tilde{p}} \Pi_\tau y_{t-\tau} + \varepsilon_t, \mbox{ ~~ where ~~} \varepsilon_t = \left(a_t + \sum_{\tau=\tilde{p}+1}^\infty \Pi_\tau y_{t-\tau} \right).
\end{equation}

The autoregressive design takes the form 
$\mathcal{Y}_{T \times d} = \mathcal{X}_{T \times d \tilde{p}} B_{d \tilde{p} \times d} + E_{T \times d}$, where \linebreak
$\mathcal{Y} = [y_T:y_{T-1}:\ldots:y_1]^\top$, $\mathcal{X} = \left( \left( y_{T-i-j+1}\right)\right)_{1 \le i \le T, 1 \le j \le  \tilde{p}}$, $B = [\Pi_1:\ldots:\Pi_{\tilde{p}}]^\top$ and $E = [\varepsilon_T:\varepsilon_{T-1}:\ldots:\varepsilon_1]^\top$. 
Vectorizing this regression design with $T$ samples and $d^2 \widetilde{p}$ parameters, we have $Y = Z \beta^* + \text{vec}(E)$, where $Y = \text{vec}(\mathcal{Y})$, $Z = I \otimes \mathcal{X}$, and $\beta^* = \text{vec}(B)$. 
In Phase-I, we consider a lasso estimator 
\begin{equation}\label{eqn:l1-ls}
	\hat{\beta} = \argmin_{\beta \in \mathbb{R}^{d^2 \tilde{p}}} \frac{1}{T} \left\|Y - Z \beta \right\|^2 + \lambda \left\| \beta \right\|_1,
\end{equation}
where $\hat{\beta} = \text{vec}(\widehat{B})$ and $\widehat{B} = [\widehat{\Pi}_1: \ldots: \widehat{\Pi}_{\tilde{p}}]^\top$. We denote the residuals of the Phase-I regression as $\hat{\varepsilon}_t = y_t - \sum_{\tau=1}^{\tilde{p}} \widehat{\Pi}_\tau y_{t-\tau}$. 

Our next proposition provides upper bounds on the approximation error of $a_t$ by $\hat{\varepsilon}_t$ for a random realization of ($T+\widetilde{p}$) data points from the VARMA model \eqref{VARMA}. A complete proof is given in Appendix D. 

\begin{proposition}\label{prop:est-error-main}
	Consider any solution $\hat{\beta}$ of \eqref{eqn:l1-ls} using a random realization of $\{ y_t\}_{t=1-\tilde{p}}^T$ from the VARMA model \eqref{VARMA}. Choose $\tilde{p} \asymp T^{\frac{1}{2}-\epsilon}$ for some $\epsilon \in (0, 1/2)$, and $\lambda \ge \lambda_0$, where 
	$$
	\lambda_0 := 2\pi   \vertiii{f_y} \left[ 3A \, \max \left\{ \vertiii{\Pi_{[\tilde{p}]}}^2, \, 1 \right\} \sqrt{\log (d^2\tilde{p})/T} + \left\| \Pi_{-[\tilde{p}]} \right\|_{2,1} \right], \mbox{ for some } A > 1.
	$$
	Then, for $T \succsim \log d^2\tilde{p}$, there exist universal constants $c_i > 0$ such that with probability at least $1 - c_0 \exp \left[-(c_1 A^2 - 2) \log d^2 \tilde{p} \right]$,
	\begin{eqnarray*}
		\frac{1}{T} \sum_{t = 1}^T \left\| \hat{\varepsilon}_t - \varepsilon_t \right\|^2 &\le& \Delta^2_\varepsilon:= 2 \lambda \sum_{\tau=1}^{\tilde{p}} \left\| \Pi_\tau \right\|_1, \\
		\max_{1 \le j \le d} \, \,  \frac{1}{T} \sum_{t=1}^T \left( \hat{\varepsilon}_{tj} - a_{tj}\right)^2 &\le&  \Delta_a^2:= 4 \max \left\{\Delta_\varepsilon^2, 4 \pi \left\| \Pi_{-[\tilde{p}]} \right\|^2_{2,1} \vertiii{f_y} \right\}, \\
		\frac{1}{T} \sum_{t=1}^T \left\| \hat{\varepsilon}_t - a_t\right\|^2 &\le& 4 \max \left\{\Delta_\varepsilon^2, 4 \pi d \left\| \Pi_{-[\tilde{p}]} \right\|^2_{2,1} \vertiii{f_y} \right\}.
	\end{eqnarray*}
	If, in addition, $\{\Pi_1, \ldots, \Pi_{\tilde{p}} \}$ are sparse so that $k:= \sum_{\tau=1}^{\tilde{p}} \left\| \Pi_{\tau} \right\|_0 \precsim T$, then for any choice of $\lambda \ge 2 \lambda_0$ and $T \succsim \max \{\tilde{p}^2 \vertiii{f_y}^2 \vertiii{f_y^{-1}}^2, 1 \} k (\log d + \log \tilde{p})$, we can use a potentially tighter upper bound $\Delta_\varepsilon^2:= (128/\pi) \vertiii{f_y^{-1}} \, {k} \lambda^2$. 
\end{proposition}

\begin{remark}[Convergence Rate \& Truncation Bias]\label{rem:ph1-rem1}
	The error bounds $\Delta^2_{\varepsilon}$ and $\Delta_a^2$ scale with $\lambda_0$, which has two terms. 
	The first term decays polynomially with $T$. The second term $\left\| \Pi_{-[\tilde{p}]} \right\|_{2,1}$ captures the \textit{truncation bias} arising from using a VAR($\widetilde{p}$) approximation to a VAR($\infty$) process. When  $\widetilde{p} \asymp T^{\frac{1}{2} - \epsilon}$, this term decays exponentially with $T^{\frac{1}{2}-\epsilon}$ since
	\begin{equation}
		\left\| \Pi_{-[\tilde{p}]} \right\|_{2,1} \le \frac{\bar{C}}{1-\bar{\rho}} \bar{\rho}^{\tilde{p}} = \frac{\bar{C}}{1-\bar{\rho}} \exp \left[ - T^{\frac{1}{2}-\epsilon} \log(1/\bar{\rho}) \right],
	\end{equation}
	where $\bar{C}, \, \bar{\rho}$ are as defined in Remark \ref{rem:dep-measures}. 
	This bias also appears in our Phase-II analysis.
\end{remark}
\begin{remark}[Choice of $\tilde{p}$, Slow \& Fast Rates, and RE Condition]\label{rem:ph1-rem2}
	As long as $\widetilde{p}$ increases polynomially fast with $T$, the truncation bias vanishes as $T \rightarrow \infty$ and the approximation errors $\Delta_\varepsilon$ and $\Delta_a$ decay with $T$ at a rate $O(\sqrt{\log d / T})$. However,  under sparsity of $\Pi$ and choosing $\tilde{p} \asymp T^{1/2 \, - \, \epsilon}$,  a suitable Restricted Eigenvalue (RE) condition holds with high probability (see Appendix D for details), and these approximation errors decay at a faster rate $O(\log d / T)$. The choice of $(1/2 - \epsilon)$ in the exponent ensures that $T \succsim \tilde{p}^2$ holds asymptotically. This choice of $\widetilde{p}$ matches with low-dimensional VARMA analysis presented in \citep{dufour2005asymptotic}.
\end{remark}

\subsection{Prediction and Estimation Error in Phase-II}\label{subsec:error-phase-II}
For simplicity of exposition, we assume that $p$ and $q$ are known and $\widetilde{p} > p+q$. It will be evident from our analysis that similar conclusions hold as long as we replace these with any upper bounds of $p$ and $q$.  Without loss of generality, we also assume that the Phase-II regressions are run with the following re-indexing of  observations:
\begin{equation}\label{eqn:phase2-reformat}
	y_t = \sum_{\ell=1}^p \Phi_\ell y_{t-\ell} + \sum_{m=0}^q \Theta_m \hat{\varepsilon}_{t-m} + u_t, ~~~~\mbox{ for } t = 1, 2, \ldots, n,~~~~n = T-q,
\end{equation}
where $u_t = \Theta(L)(a_t - \hat{\varepsilon}_t)$, and $\Theta_0 = I$. 
As mentioned earlier, we consider a variant of the Phase-II regression where the fitted values from Phase-I, $\hat{y}_t = y_t - \hat{\varepsilon}_t$, are used as response instead of $y_t$. The autoregressive moving average design then takes the form 
\begin{equation*}
	\underbrace{\left[\begin{array}{c} \hat{y}_n^\top \\ \hat{y}_{n-1}^\top \\ \vdots \\ \hat{y}_1^\top \end{array}\right]}_{\mathcal{Y}_{n \times d}}
	= 
	\underbrace{\left[ \begin{array}{cccccc} 
			y_{n-1}^\top & \ldots & y_{n-p}^\top & \hat{\varepsilon}_{n-1}^\top & \ldots & \hat{\varepsilon}_{n-q}^\top \\ 
			y_{n-2}^\top & \ldots & y_{n-1-p}^\top & \hat{\varepsilon}_{n-2}^\top & \ldots & \hat{\varepsilon}_{n-1-q}^\top \\ 
			\vdots & \vdots & \vdots & \vdots & \vdots & \vdots \\
			y_{0}^\top & \ldots & y_{1-p}^\top & \hat{\varepsilon}_{0}^\top & \ldots & \hat{\varepsilon}_{1-q}^\top 
		\end{array} \right]}_{\tilde{\mathcal{Z}}_{n \times d(p+q)}}
	\underbrace{
		\left[ \begin{array}{c}
			\Phi^\top \\ \Theta^\top
		\end{array} \right]}_{B_{d(p+q) \times d}} + 
	\underbrace{\left[\begin{array}{c} u_n^\top \\ \vdots \\ u_1^\top \end{array} \right]}_{\mathcal{U}_{n \times d}},
\end{equation*}

where $\Phi = [\Phi_1:\ldots:\Phi_p]$, and $\Theta:= [\Theta_1:\ldots:\Theta_q]$. 
Vectorizing the above regression problem with $n$ samples and $d^2 ({p+q})$ parameters, we have 
\begin{equation}\label{eqn:vec-phase-II}
	\underbrace{\text{vec}(\mathcal{Y})}_{Y} = \underbrace{\left(I \otimes \tilde{\mathcal{Z}} \right)}_{\tilde{Z}} \underbrace{\text{vec}(B)}_{\beta^*} + \underbrace{\text{vec}(\mathcal{U})}_{U}.
\end{equation}

In order to apply Proposition \ref{prop:elastic-net-fixed-X-E-v2} on this regression problem with $N = nd$ and $D = d^2(p+q)$, we first provide suitable choices of $q_n, s_n$ and $r_n$ (same as choice of $\lambda$) that hold with high probability for a random realization of $(T+\tilde{p})$ consecutive observations from the VARMA process. To this end, note that $\left\| \left( I \otimes \tilde{\mathcal{Z}} \right)^\top \left( I \otimes \tilde{\mathcal{Z}} \right)/n - I \otimes \Sigma_z \right\|_\infty = \left\|  \tilde{\mathcal{Z}}^\top \tilde{\mathcal{Z}}/n - \Sigma_z \right\|_\infty$.

In Section \ref{subsec:error-phase-I}, we have discussed how the approximation errors $\Delta_a$,  $\Delta_\varepsilon$ and the truncation bias term $\| \Pi_{-[\tilde{p}]} \|_{2,1}$ decay with the sample size. In this proposition, we show that $q_n, r_n$ and $s_n/d$ can be chosen to be a linear combination of the above terms and $\sqrt{\log d^2(p+q)/n}$, where the coefficients of this linear combination depend on model parameters and capture the role of temporal dependence in these convergence rates.

\begin{proposition}\label{prop:phase2-qrs}
	Consider the Phase-II regression 
	\eqref{eqn:vec-phase-II} 
	with design matrix $I \otimes \tilde{\mathcal{Z}}$ and error vector $\text{vec}(\mathcal{U})$. Set $\sigma^2_j = e_j^\top \var\left( \Theta(L) \Pi_{-[\tilde{p}]}(L)y_t \right) e_j$, for $j=1, \ldots, d$. Then there exist universal constants $c_i > 0$ such that the event  \begin{equation}\label{eqn:E-varma-phase2}
		\mathcal{E}:= \left\{ 
		\left\| \tilde{\mathcal{Z}}^\top \tilde{\mathcal{Z}}/n - \Sigma_z \right\|_\infty \le q_n, \frac{1}{n} \left\| \tilde{\mathcal{Z}}^\top \mathcal{U} \right\|_\infty \le r_n, 
		\left| \frac{1}{n} \left\| \text{vec}(\mathcal{U}) \right\|^2 - \sum_{j=1}^d \sigma^2_j \right| \le s_n
		\right\}
	\end{equation}
	holds with probability at least $1 - c_0 \exp \left[-(c_1 A^2 - 2) \log d^2(p+q) \right]$, where 
	\begin{eqnarray*}
		q_n &=& \varphi_{q,1} \sqrt{\frac{\log \, d^2(p+q)}{n}} + \varphi_{q,2} \left( \Delta_a + \Delta_a^2 \right),  \\
		r_n &=& \varphi_{r,1} \sqrt{\frac{\log \, d^2(p+q)}{n}} + \varphi_{r,2} \left( \Delta_\varepsilon + \Delta_\varepsilon^2 + \left\|  \Pi_{-[\tilde{p}]} \right\|_{2,1}\right), \\
		s_n/d &=& \varphi_{s,1} \sqrt{\frac{\log \, d^2(p+q)}{n}} + \varphi_{s,2} \left( \Delta_\varepsilon + \Delta_\varepsilon^2 \right),
	\end{eqnarray*}
	and $\varphi_{q,1}, \varphi_{q,2}, \varphi_{r,1}, \varphi_{r,2}, \varphi_{s,1}, \varphi_{s,2}$ are functions of the model parameters
	\begin{eqnarray*}
		\varphi_{q,1} &=& 2 \pi \vertiii{f_y} \left(p+q \vertiii{\Pi_{[\tilde{p}]}}^2 \right)^2, \\
		\varphi_{q,2} &=& \max \, \{2q, 2 \sqrt{2\pi q} \vertiii{f_y}^{1/2} \left(p+q \vertiii{\Pi_{[\tilde{p}]}}^2 \right)^{1/2} \}, \\
		\varphi_{s,1} &=& 2 \pi \vertiii{\Theta} \left\| \Pi_{-[\tilde{p}]} \right\|_{2,1}^2 \vertiii{f_y},  \\
		\varphi_{s,2} &=& \max \, \left\{ 2 \| \Theta\|_{2,1}^2, 4 \sqrt{2 \pi} \vertiii{\Theta}^{1/2} \left\| \Pi_{-[\tilde{p}]} \right\|_{2,1} \vertiii{f_y}^{1/2} \left\|  \Theta \right\|_{2,1}\right\}, \\
		\varphi_{r,1} &=& c_1 \vertiii{f_y} A \,  \max \left\{1, \vertiii{\Theta}^2 \left\| \Pi_{-[\tilde{p}]} \right\|_{2,1}^2, \vertiii{\Pi_{[\tilde{p}]}}^2 \right\}, \\
		\varphi_{r,2} &=& c_2 \vertiii{f_y} \left\|  \Theta \right\|_{2,1} \max \! \{ 1, \left\| \Pi_{[\tilde{p}]} \right\|_{2,1} \}.
	\end{eqnarray*}
\end{proposition}

\noindent Using Proposition  \ref{prop:yule-walker-identification}, the identification target in \eqref{Optim_pair} with an elastic net penalty becomes 
\begin{equation} \label{Optim_pair_alt_rep}
	({  \Phi}^{(\alpha)}, {  \Theta}^{(\alpha)})  =  \underset{{  \Phi}, {  \Theta}}{\operatorname{argmin}} \ \left \{\| [ \Phi: \Theta] \|_1 + \dfrac{\alpha}{2} \| [\Phi:\Theta]\|_F^2 \st  \text{vec}(\rho_{zy}) = (I \otimes \Sigma_z) \text{vec}(\beta) \right \}, 
\end{equation}
where $\rho_{zy}, \Sigma_z$ and $\beta$ are as defined in Proposition \ref{prop:yule-walker-identification}.
We consider the penalized and constrained versions  of the estimator
\begin{eqnarray*}
	&& \text{vec}\left([\hat{\Phi}^{(\alpha)}: \hat{\Theta}^{(\alpha)}]^\top \right) = \argmin_{\left \| \beta \right\|_1 \le M} \, \frac{1}{n} \left\| \text{vec}(\mathcal{Y}) - (I \otimes \tilde{\mathcal{Z}}) \beta \right\|^2 + \lambda \, \mathcal{P}_{\alpha} (\beta) \\
	&& \text{vec}\left([\hat{\Phi}^{(\alpha)}_{[C]}: \hat{\Theta}^{(\alpha)}_{[C]}]^\top \right) = \argmin_{\left \| \beta \right\|_1 \le M} \, \left\{  \mathcal{P}_{\alpha} (\beta) \mbox{ s.t. } \frac{1}{n} \left\| \text{vec}(\mathcal{Y}) - (I \otimes \tilde{\mathcal{Z}}) \beta \right\|^2 \le A_n \right\}.
\end{eqnarray*}

A direct application of Proposition \ref{prop:elastic-net-fixed-X-E-v2} with the choices of $q_n r_n, s_n$ in Proposition \ref{prop:phase2-qrs} then leads to  the following upper bounds on the prediction and estimation error of the penalized and constrained versions of our two-phase VARMA estimator. 
\begin{proposition}[VARMA Estimation and Prediction Errors]\label{prop:est-error-phase-II}
	Consider a random realization of $T+\tilde{p}$ consecutive observations $\{y_1, \ldots, y_{T+\tilde{p}} \}$ from a stable, invertible Gaussian VARMA model \eqref{VARMA}, and let $n = T-q$ denote the sample size in Phase-II. 
	Denote $K_y := \max \{\vertiii{f_y}, 
	\left\| \Pi \right\|_{2,1},  \left\| \Theta^{(\alpha)} \right\|_{2,1} \}$. \\
	\noindent (a) \underline{Forecast Error}: Let $y_t^* = \sum_{\ell=1}^p \Phi_\ell y_{t-\ell} + \sum_{m=1}^q \Theta_m a_{t-m}$ and $\tilde{y}_t = \sum_{\ell=1}^p \widehat{\Phi}_\ell y_{t-\ell} + \sum_{m=1}^q \widehat{\Theta}_m \hat{\varepsilon}_{t-m}$ denote the optimal and the penalized VARMA forecasts respectively. Then, for a choice of  $\lambda \asymp  K_y^3 \max \left\{ \sqrt{\log d^2(p+q) \, /n }, \Delta_\varepsilon \right\}$, and  $M \ge \| \Phi^{(\alpha)} \|_1 + \|\Theta^{(\alpha)} \|_1$ for some   $\alpha \ge 0$,
	\begin{equation*}
		\frac{1}{n} \sum_{t=1}^n \left\| \tilde{y}_t - y^*_t\right\|^2 =  O_\mathbb{P} \left(K_y^3 M^2   \,  \max \left\{\sqrt{\frac{\log d^2 (p+q)}{n}}, \| \Pi_{-[\tilde{p}]} \|_{2,1} , \Delta_{\varepsilon}\right\} \right).
	\end{equation*}

	\noindent (b) \underline{Partially-identified Estimation}:  With the same choice of $\lambda$, $M$ and $\alpha$ in (a), the penalized estimator is partially identified and satisfies
	\begin{equation*}
		\small    \min_{(\Phi, \Theta) \in \mathcal{E}_{p,q}(\Pi(L))} \, \left \| \left(\wh{\Phi}^{(\alpha)}, \wh{\Theta}^{(\alpha)} \right) - \left({\Phi}, {\Theta} \right)\right \|_F^2 = O_\mathbb{P} \left(  \frac{K_y^3 M^2 }{ \Lambda_{\min}^+(\Gamma_z(0))} \,  \max \left\{\sqrt{\frac{\log d^2 (p+q)}{n}}, \| \Pi_{-[\tilde{p}]} \|_{2,1} , \Delta_{\varepsilon}\right\} \right).
	\end{equation*}
	
	\noindent (c) \underline{Point-identified Estimation}: For a choice of $A_n \asymp K_y^3 \| \Pi_{-[\tilde{p}]} \|_{2,1}^2 \max \{d \sqrt{\log d^2(p+q)/n}, \Delta_\varepsilon \}$ and any $\alpha > 0$, the constrained version of the estimator is point identified and satisfies
	\begin{equation*}
		\left \| \left(\wh{\Phi}^{(\alpha)}_{[C]}, \wh{\Theta}^{(\alpha)}_{[C]} \right) - \left({\Phi}^{(\alpha)}, {\Theta}^{(\alpha)} \right)\right \|_F^2 = O_\mathbb{P} \left(  \frac{K_y^3 M^2 }{ \alpha  \, \sqrt{\Lambda_{\min}^+(\Gamma_z(0))}} \,  \max \left\{d^3 \sqrt{\frac{\log d^2 (p+q)}{n}}, \| \Pi_{-[\tilde{p}]} \|_{2,1} , \Delta_{\varepsilon}\right\}^{1/2} \right).
	\end{equation*}
\end{proposition}

Part (a) of this proposition ensures that as long as the identification target is parsimonious in the sense of small $\ell_1$-norm and the penalty parameter is chosen appropriately, the VARMA forecasts converge to the optimal forecasts (which uses any element from the equivalence class $\mathcal{E}_{p,q}(\Pi)$) in the asymptotic regime $\log d /n \to 0$. The truncation bias term $\| \Pi_{-[\tilde{p}]}\|_{2,1}$ and the approximation error from Phase-I $\Delta_\varepsilon$ also converges to zero in this asymptotic regime, as shown in Section \ref{subsec:error-phase-I}. The convergence rates are further affected by the strength of temporal dependence in the VARMA process, as captured by the term $K_y$.

In addition, part (b) ensures that the distance of our penalized estimator from the equivalence class also asymptotically vanishes in this high-dimensional regime. Further, the convergence rates are affected by the minimum positive eigenvalue of the variance-covariance matrix of the process $z_t$, which captures the curvature of the loss function.

Part (c) shows that our constrained estimator converges in probability to our identification target, but in a low-dimensional regime $d^3 \sqrt{\log d}/n \to 0$. This slow rate is a consequence of the fact that we did not assume sparsity on the entire equivalence class $\mathcal{E}_{p,q}(\Pi)$, so searching for the correct identification target within this equivalence class still has a complexity of the order of $d^2$. The tuning parameter $\alpha$ also affects the convergence rate, since this captures the degree of curvature of the term $\mathcal{P}_\alpha(.)$ in the loss function. However, taking a sequence of $\alpha_n$ that converges to $0$ at a rate slower than $d^3 \sqrt{\log d^2(p+q)/n}$, we can still guarantee consistent estimation of the target $(\Phi^{(0)}, \Theta^{(0)})$ with the minimum Frobenius norm. 
	
	\section{\label{applications}Forecast Applications}
	We present three forecast applications:
	
	\textbf{(i) Demand forecasting.} Weekly sales data (in dollars) are collected for $d=16$ product categories of Dominick's Finer Foods from January 1993 to July 1994 ($T=76$). Data are taken from \url{https://research.chicagobooth.edu/kilts/marketing-databases/dominicks}. 
	To ensure stationarity, we take each series in log differences and consider sales growth. 
	Augmented Dickey-Fuller tests help support that the sales growth series are stationary.
	
	\textbf{(ii) Volatility forecasting.}  We collect monthly realized variances for $d=17$ stock market indices, from January 2009 to December 2016  ($T=96$).
	Realized variances, computed from five minute returns, are obtained from \url{http://realized.oxford-man.ox.ac.uk/data/download}
	and log-transformed following standard practice.
	Augmented Dickey-Fuller tests help support  that the log-realized variances are stationary. 
	
	\textbf{(iii) Macro-economic forecasting}.  We consider $d=168$ quarterly macro-economic series of length $T=60$ ending in 2008, Quarter 4. 
	Data are taken from \href{http://qed.econ.queensu.ca/jae/2013-v28.2/koop/}{the Journal of Applied Econometrics Data Archive,}
	a full list of the series is available in \cite{Koop13} (Data Appendix), along with the transformations to make them approximately stationary. 
	
	In all considered cases, the number of time series $d$ is large relative to the time series length $T$. 
	First, we discuss the model parsimony of the estimated VARMA and VAR with HLag penalties.
	Secondly, we compare their forecast accuracy for different forecast horizons.
	
	\subsection{Model Parsimony} 
	Since the sparse VARMA and VAR estimators with HLag penalties both perform automatic lag selection, they give information on the effective maximum AR and MA orders. Consider the $d \times d$ moving average lag matrix $\widehat{L}_{\widehat{{  \Theta}}}$ of the estimated VARMA  whose elements are 
	$\widehat{L}_{\widehat{{  \Theta}}, ij} = \text{max}\{ m: \widehat{{\Theta}}_{m, ij}\neq 0\},$ 
	where  $\widehat{L}_{\widehat{{  \Theta}}, ij}=0$ if $\widehat{{\Theta}}_{m, ij}=0$ for all $m=1\ldots, {\hat{q}}$. This lag matrix shows the maximal MA lag for each series $j$ in each equation $i$ of the corresponding estimated VARMA. 
	If entry $ij$ is zero, this means that all lagged MA coefficients of time series $j$ on time series $i$ are estimated as zero.
	If entry $ij$ is, for instance, three, this means that the third lagged moving average term of  series $j$ on  series $i$ is estimated as non-zero, but the forth and higher as all zero.
	Similarly, one can construct the autoregressive lag matrix $\widehat{L}_{\widehat{{  \Phi}}}$ of the estimated VARMA  and the autoregressive lag matrix
	$\widehat{L}_{\widehat{{  \Pi}}}$ of the estimated VAR.
	
	\begin{figure}[t]
		\centering
		\includegraphics[width=\textwidth]{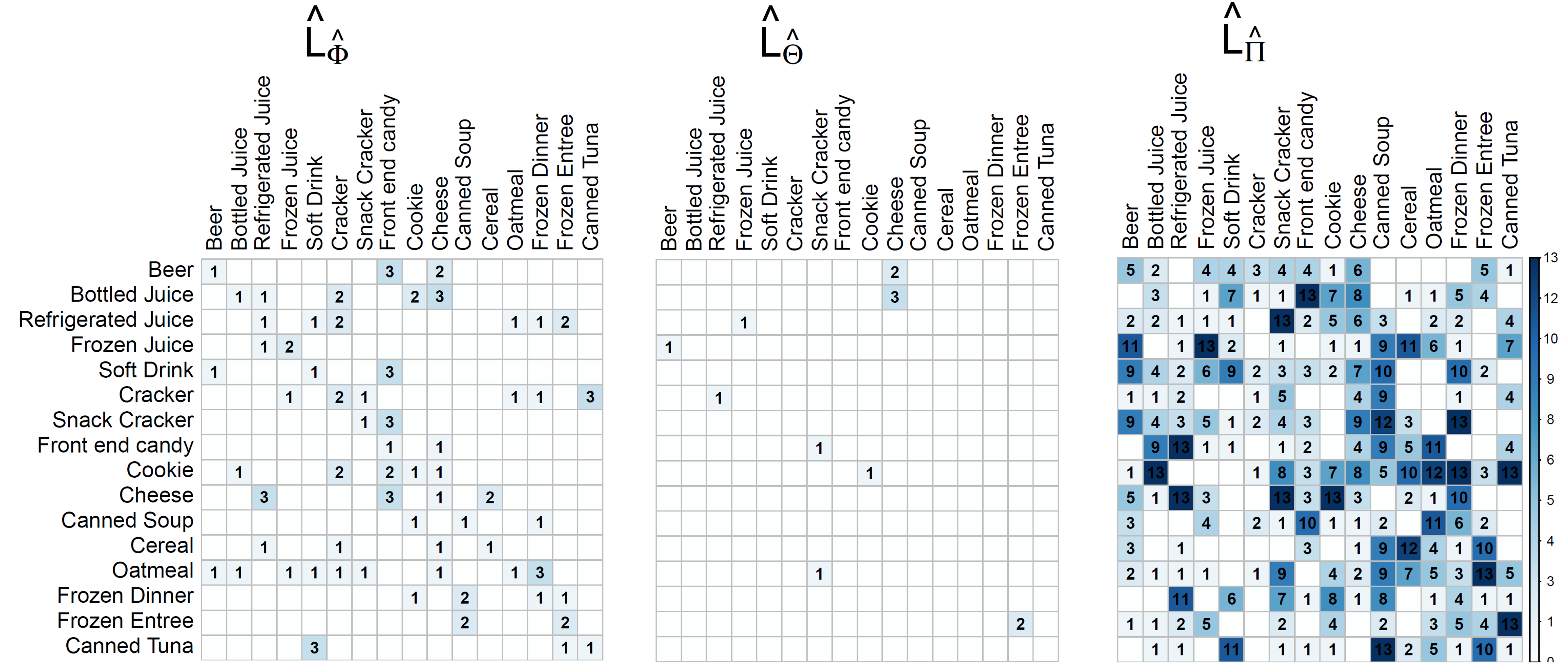}
		\caption{Demand data set: AR-lag matrix (left) and  MA-lag matrix (middle) of the estimated VARMA, and AR-lag matrix of the estimated VAR (right). \label{Demand_lhat}}
	\end{figure}

	Figure \ref{Demand_lhat} shows the lag matrices of the estimated VARMA and VAR on the demand data. 
	Similar findings are obtained for the other data sets and therefore omitted.
	The MA lag matrix of the  VARMA (middle panel) is very sparse: 
	247 out of 256 entries are equal to zero.    
	By adding just 
	few MA terms to the model, serial correlation in the error terms is captured.
	As a result, a more parsimonious VARMA model is obtained: 107 out of the
	$3,\!072$ (around 3\%) estimated VARMA  parameters  are non-zero. 
	In contrast, 877 out of the  
	$3,\!328$ (around 25\%)  estimated VAR parameters are non-zero.
	We find the more parsimonious VARMA to often give more accurate forecasts than the VAR, as discussed next.
	
	\subsection{Forecast Accuracy} 
	We compare the forecast accuracy of  VARMA to VAR  through an  expanding window forecast exercise. Let $h$ be the forecast horizon. At each time point $t=S,\ldots, T-h$, we sparsely estimate the VARMA and VAR. We take $S$ such that forecasts are computed for the last 25\% of  observations. We estimate the model on the standardized series and obtain $h$-step-ahead  forecasts and corresponding forecast errors $e_{i,t+h}^{(i)}=y_{i,t+h}-\widehat{y}_{i,t+h}$ for each series $1\leq i \leq d$. The overall forecast performance is measured by computing the Mean Squared Forecast Error for a particular forecast horizon $h$, as in Equation \eqref{MSFE_eq}.
	For the weekly marketing data set, we take 
	$h=1, 8, 13$.
	For the monthly volatility data set, we take $h=1, 6, 12$.
	For the quarterly macro-economic data set, we take $h=1, 4, 8$. 
	To assess the difference in forecast performance between VARMA and VAR, we use a Diebold-Mariano (DM-) test (\citep{Diebold95}).

	\begin{table}
		\caption{Mean Squared Forecast Errors  at different forecast horizons for the two estimators on the three data sets. $P$-values of the Diebold-Mariano tests are given in parentheses. 
			\label{MSFEApplications}}
		\centering
		\begin{tabular}{lcccccccccccc} \hline 
			Estimator && \multicolumn{3}{c}{Weekly}   && \multicolumn{3}{c}{Monthly}   && \multicolumn{3}{c}{Quaterly}	\\
			&& \multicolumn{3}{c}{Demand Data}   && \multicolumn{3}{c}{Volatility Data}   && \multicolumn{3}{c}{Macro-economic Data} \\
			&& $h=1$ &$h=8$ & $h=13$ && $h=1$ &$h=6$ & $h=12$ && $h=1$ &$h=4$ & $h=8$\\ \hline 
			VARMA && 0.473 & 0.578 & 0.550 && 0.781 & 1.080 & 1.065 && 0.974 & 1.152 & 1.281 \\
			VAR && $\underset{(0.141)}{0.499}$ & $\underset{(0.041)}{0.703}$ & $\underset{(<0.001)}{0.715}$ && $\underset{(0.142)}{0.728}$ & $\underset{(0.050)}{1.209}$ & $\underset{(0.007)}{1.429}$ && $\underset{(0.412)}{0.977}$ & $\underset{(0.080)}{1.170}$ & $\underset{(0.003)}{1.401}$ \\ \hline 
		\end{tabular}
	\end{table}
	
	The MSFEs 
	on the three data sets 
	are given in
	Table \ref{MSFEApplications}. 
	Across all considered data sets and  
	horizons, VARMA gives either a significantly lower MSFE than the VAR estimator (in 5 out of 9 cases  at the 5\% level, in 1 case  at the 10\% level) 
	or  performs equally well  (in 3 out of 9 cases). 
	The gain in forecast accuracy over VAR is typically the largest for the longest forecast horizons.
	VARMA not only gives a lower MSFE averaged over the considered time points, but it also attains the lowest MSFE for the large majority of time points. 
	For the demand data at  horizon $h=13$, for instance, it outperforms VAR for all time points except two.
	The sparse VARMA method is thus a valuable
	addition to the forecaster's toolbox for large-scale multivariate time series models. It exploits the serial correlation between the error terms and, as a consequence, often gives more parsimonious forecast models with competitive or better forecast accuracy than  a sparse VAR. 
	
	\section{\label{conclusion}Conclusion}
	We present sparse identification and estimation  for  VARMA models.
	Our estimator, available in the \texttt{R} package \texttt{bigtime}, 
	is naturally aligned with our identified target through the use of sparsity-inducing convex regularizers and can be computed efficiently even for large-scale VARMAs. 
	Under a double-asymptotic regime where both $d, T \rightarrow \infty$, we prove consistency of our two-step sparse VARMA estimation for stable, invertible Gaussian VARMA processes. 
	Simulation and real data analyses show that our sparse VARMA model can produce better forecasts compared to sparse VAR by fitting  more parsimonious models. 
	
	There are several questions we did not address. 
	Our two-stage 
	procedure can be generalized to an iterative method, as in \citep{dias2018}. However, developing a double-asymptotic theory for such an iterative method is complex and left for future research.
	The convergence rates of our point-identified Phase-II estimator can be potentially sharpened under restricted eigenvalue  assumptions. Identifying a class of sparse VARMAs for which such assumptions hold with high probability is an  interesting theoretical question. Inference of model parameters can be pursued by adopting debiasing approaches \citep{javanmard2014confidence, van2014asymptotically}, and are left for future research.

\section*{Acknowledgments}
We thank the editor and reviewers for their thorough review and highly appreciate their comments which substantially improved the quality of the manuscript.
The authors wish to thank Profs.\ Christophe Croux, George Michailidis, Suhasini Subba Rao and Ruey S.\ Tsay for stimulating discussions and helpful comments.
IW was supported by  the European Union's Horizon 2020 research and innovation programme under the Marie Sk\l{}odowska-Curie grant agreement  No 832671.
SB was supported by NSF award DMS-1812128 and NIH awards 1R01GM135926-01 and 1R21NS120227-01. 
JB was supported by an NSF CAREER award (DMS-1748166).
DSM was supported by NSF (1455172, 1934985, 1940124, 1940276), Xerox PARC, the Cornell University Atkinson Center for a Sustainable Future (AVF-2017), USAID, and the Cornell University Institute of Biotechnology \& NYSTAR. 
\bigskip

\newpage
\begin{appendices}
	\setcounter{figure}{0}
	\setcounter{table}{0}
	\renewcommand{\thefigure}{A\arabic{figure}}
	\renewcommand{\thetable}{A\arabic{table}}
	\clearpage
\section*{Supplement to ``Sparse Identification and Estimation of Large-Scale Vector AutoRegressive Moving Averages"}
We present the proofs of sparse identification in Section \ref{app:identification}. Proofs of key technical ingredients required for Phase-I and II analyses are in Section \ref{sec:technical-lemmas}, along with some additional lemmas to control the error due to using $\hat{\varepsilon}_t$ instead of $\varepsilon_t$ in Phase-II. 
Sections \ref{sec:pf-elnet}, \ref{app:phase-I} and \ref{app:phase-II} contain results for error bound analysis in elastic net, Phase-I and II, respectively. Section \ref{Algorithmdetail} contains details of Phase-I and II algorithms. 
Section \ref{Simulation} presents the results on several numerical experiments. 
\bigskip

\textit{Notations. } 
We denote the sets of integers, real, and complex numbers by $\mathbb{Z}$, $\mathbb{R}$, and $\mathbb{C}$, respectively. We use $\|.\|$ to denote the Euclidean norm of a vector and the operator norm of a matrix. We reserve $\|.\|_0$, $\|.\|_1$ and $\|.\|_{\infty}$ to denote the number of nonzero elements, $\ell_1$ and $\ell_\infty$ norms of a vector or the vectorized version of a matrix, respectively, and $\|.\|_F$ to denote the Frobenius norm of a matrix. The symbol $\mathbb{S}^{d-1}$ is used to denote the vectors $v \in \mathbb{R}^d$ with $\|v\|=1$. We use $\Lambda_{\max}(.)$ and $\Lambda_{\min}(.)$ to denote the maximum and minimum eigenvalues of a (symmetric or Hermitian) matrix. We use $|.|$ to denote the absolute value of a real number or complex number. We use $V^*$ to denote the conjugate transpose of a complex matrix, vector or scalar $V$. For a matrix-valued, possibly infinite-order lag polynomial $\mathcal{A}(L) = \sum_{\ell \ge 0} A_\ell L^\ell$, we define $\vertiii{\mathcal{A}}:= \max_{\theta \in [-\pi, \pi]} \| \mathcal{A}(e^{i\theta})\|$, and use $\mathcal{A}_{[k]}(L)$ and $\mathcal{A}_{-[k]}(L)$ to denote the truncated version $\sum_{\ell = 0}^k A_\ell L^\ell$ and the tail series $\sum_{\ell > k} A_\ell L^\ell$, respectively. We also use $\|\mathcal{A}\|_{2,1}$ to denote the sum of the operator norms of its coefficients, $\sum_{\ell \ge 0} \|A_\ell \|$. More generally, for any complex matrix-valued function $f$ of frequencies from  $[-\pi, \pi]$ to $\mathbb{C}^{p \times p}$, we define $\vertiii{f}:= \max_{\theta \in [-\pi, \pi]} \|f(\theta)\|$. In our theoretical analyses, we use $c_i$, $i = 0, 1, 2, \ldots$, to denote universal positive constants whose values do not rely on the model dimensions and parameters. Their values are allowed to change from equation to equation. For example, we will use $c_0$ instead of $2c_0, c_0+2$ etc. within a proof to keep the notations simple.  For two model dependent positive quantities $A$ and $B$, we also use $A \succsim B$ to  mean that for any universal constant $c > 0$,  we have $A \ge cB$ for sufficiently large sample size. Finally, $A \asymp B$ means $A \succsim B$ and $A \precsim B$.

\textit{Measures of Dependence}. We adopt the spectral density based measures of dependence introduced in \cite{Basu15} to conduct our non-asymptotic analysis. For a $d$-dimensional centered stationary Gaussian time series $\{x_t\}_{t \in \mathbb{Z}}$ with autocovariance function $\Gamma_x(h) = \cov(x_t, x_{t+h}) = \mathbb{E}[x_t x_{t+h}^\top]$, $h \in \mathbb{Z}$, we assume the spectral density function  
$f_x(\theta):= \frac{1}{2\pi} \sum_{\ell=-\infty}^{\infty} \Gamma_x(\ell) e^{-i\ell \theta}, \, \, \theta \in [-\pi, \pi]$,
exists, is non-singular a.e. on $[-\pi, \pi]$, and $\vertiii{f_x} < \infty$. The quantity $\vertiii{f_x}$ 
is taken as a measure of temporal and cross-sectional dependence in the time series $\{x_t\}$. We say that the time series $x_t$ is \textit{stable} if $\vertiii{f_x} < \infty$. More generally, for any pair of $d$-dimensional centered, stable time series $\{ x_t\}$ and $\{y_t \}$, the cross-spectral density is defined as 
$f_{x,y}(\theta) = \frac{1}{2\pi} \sum_{\ell=-\infty}^{\infty} \Gamma_{x, y}(\ell) e^{-i \ell \theta},$ 
where $\Gamma_{x,y}(h) = \cov(x_t, y_{t+h})$, for $h \in \mathbb{Z}$. If the joint process $w_t = [x_t^\top, y_t^\top]^\top$ is stable, i.e. it satisfies $\vertiii{f_w} < \infty$, it follows that $\vertiii{f_{x,y}}^2 \le \vertiii{f_x} \vertiii{f_y}$.

For a stable, invertible VARMA process $y_t$ in  \eqref{VARMA} with $\Lambda_{\min}(\Sigma_a) > 0$, it is known that $f_y$ is non-singular on $[-\pi, \pi]$ and there exist two model dependent quantities $\bar{C}>0$ and $\bar{\rho} \in [0,1)$ such that $\| \Pi_{\tau} \| \le \bar{C} \, \bar{\rho}^\tau$, for all integers $\tau \ge 1$. This implies for any $\tilde{p} \ge 1$, we have  $\| \Pi_{-[\tilde{p}]} \|_{2,1} \le \bar{C} \bar{\rho}^{\tilde{p}} /(1-\bar{\rho})$. The quantities $\vertiii{f_y}, \vertiii{f_y^{-1}}$ and $\| \Pi_{-[\tilde{p}]} \|_{2,1}$ appear in our error bounds, and captures the effects of temporal dependence on the convergence rates.

\section{Proofs for Sparse Identification}\label{app:identification}
\subsection{Yule-Walker type Equations for VARMA \label{app:pf-yule-walker}}
\begin{proof}[Proof of Proposition \ref{prop:yule-walker-identification}]
	Define 
	\begin{equation*}
		\tilde{\mathcal{E}}_{p,q}(\Pi(L)):= \left\{ (\Phi, \Theta):\rho_{zy} = \Sigma_z \beta \right\}.
	\end{equation*}
	We first show that $\mathcal{E}_{p,q}(\Pi(L)) \subseteq \tilde{\mathcal{E}}_{p,q}(\Pi(L))$. To this end, note that any $(\Phi, \Theta) \in \mathcal{E}_{p,q}(\Pi(L))$ satisfies $y_t = \beta^\top z_t + a_t$. Therefore, $\mathbb{E} \left[y_t z_t^\top \right] = \beta^\top  \mathbb{E} \left[z_t z_t^\top \right]+ \mathbb{E} \left[ a_t z_t^\top \right]$. Since $\mathbb{E} \left[ a_t z_t^\top \right] = 0$, this implies $\rho_{zy}^\top = \beta^\top \Sigma_z$.
	
	Next we show that $\tilde{\mathcal{E}}_{p,q}(\Pi(L)) \subseteq \mathcal{E}_{p,q}(\Pi(L))$. To this end, note that the set $\mathcal{E}_{p,q}(\Pi(L))$ can be characterized precisely as the set of matrix AR and MA parameters $\Phi$ and $\Theta$ which satisfy almost surely (a.s.)
	\begin{equation}\label{eqn:iden2}
		y_t = \beta^\top z_t + a_t, ~~~ t \in \mathbb{Z}, 
	\end{equation}
	for a process $y_t = \Pi^{-1}(L) a_t$, where $a_t \stackrel{i.i.d.}{\sim} (0, \Sigma_a)$ is a white noise process. 
	
	Now, consider a solution of the Yule-Walker type linear systems of equation $\beta  \in \tilde{\mathcal{E}}_{p,q}(\Pi(L))$. Since ${\mathcal{E}}_{p,q}(\Pi(L)) \subseteq \tilde{\mathcal{E}}_{p,q}(\Pi(L))$ and ${\mathcal{E}}_{p,q}(\Pi(L)) \neq \phi$, this solution takes the form $\beta = \beta^* + \delta$, where 
	$\beta^* = \left[\Phi_1^*: \ldots: \Phi_p^*: \Theta_1^*: \ldots: \Theta_q^* \right]^\top \in \mathcal{E}_{p,q}(\Pi(L))$ is a particular solution of the linear systems, and 
	$\delta = \left[\delta_{11}: \ldots: \delta_{1p}: \delta_{21}: \ldots: \delta_{2q} \right]^\top$ satisfies  $\Sigma_z \delta = \mathbf{0}_{d(p+q) \times d}$.
	
	This implies $\delta^\top \Sigma_z \delta = \mathbf{0}_{d \times d}$, i.e. $var(\delta^\top z_t) = \mathbf{0}_{d \times d}$. In other words, $\delta^\top z_t$ is almost surely a constant. Since $\mathbb{E}[z_t] = 0$, we conclude that $\delta^\top z_t = 0$ a.s. 
	
	Now, consider any centered linear process $y_t = \Pi^{-1}(L) a_t$, as mentioned above. Then, since $\beta^* \in \mathcal{E}_{p,q}(\Pi(L))$, for any $t \in \mathbb{Z}$ we have 
	\begin{eqnarray*}
		y_t &=& \Phi_1^*y_{t-1}+ \ldots + \Phi_p^* y_{t-p} + \Theta_1^* a_{t-1} + \ldots + \Theta_q^* a_{t-q}+ a_t.  
	\end{eqnarray*}
	Also, since, $\delta^\top z_t = 0 \mbox{ a.s.}$, we have 
	\begin{eqnarray*}	
		y_t &=& (\Phi_1^*+\delta_{11})y_{t-1} + (\Phi_2^*+\delta_{12}) y_{t-2} + \ldots + (\Phi_p^*+\delta_{1p}) y_{t-p} \\
		&~& + (\Theta_1^*+\delta_{21}) a_{t-1} + \ldots + (\Theta_q^*+\delta_{2q}) a_{t-q}+ a_t \mbox{ a.s.}
	\end{eqnarray*}
	
	It follows from \eqref{eqn:iden2} that $\beta \in \mathcal{E}_{p,q}(\Pi(L))$, proving $\tilde{\mathcal{E}}_{p,q}(\Pi(L)) \subseteq \mathcal{E}_{p,q}(\Pi(L))$.
	
\end{proof}

\subsection{Optimization-based Identification \label{App.identif}}
Consider the convex minimization problem
$$\C=\arg\min_{x\in\L}f(x)$$
where $f:\real^n\to[0,\infty)$ is a
convex function and $\L\subseteq\real^n$ is an affine space.
We assume that $\C$
is non-empty (i.e. the minimum is attained) and let
$$
x^*=\arg\min_{x\in\L}\|x\|^2\st x\in\C,
$$
which is unique since this is a strongly convex problem.

Defining $f(x,\a)=f(x)+\frac{\a}{2}\|x\|^2$, we see that $f(\cdot,\a)$
is $\a$-strongly convex for each $\a>0$ and
therefore there is a unique minimizer
$$
\xa:=\arg\min_{x\in\L}f(x,\a).
$$
\begin{proposition} \label{th_sparseid}
	The sequence of minimizers of $f(\cdot,\a)$ converge, as $\a\to0^+$,
	to the unique minimizer of $f(\cdot)$ that has smallest $\ell_2$-norm:  $\lim_{\a\to0^+}\xa= x^*$.
\end{proposition}
\begin{proof}
	We begin with a lemma.
	\begin{lem}
		\begin{align*}
			\lim_{\a\to0^+}\frac{f(x_{\a},\a)-f(x^*,\a)}{\a}=0.
		\end{align*}
		\label{lem:inf}
	\end{lem}
	\begin{proof}
		By definition of $x^*$,
		$$
		\|x^*\|^2=\min_{x\in\L}\|x\|^2\st f(x)\le f^*,
		$$
		where $f^*=\min_{x\in\L}f(x)$.  This can be equivalently expressed
		(see, e.g., \cite{Boyd}) as
		$$
		\|x^*\|^2=\min_{x\in\L}\sup_{\lambda\ge0} L(x;\lambda)
		$$
		where
		$$
		L(x;\lambda)=\|x\|^2+\lambda(f(x)-f^*)=\lambda [f(x,2/\lambda)-f^*].
		$$
		By strong duality (Slater's condition holds since $\C\neq\emptyset$),
		we can interchange the ``$\min$'' and the ``$\sup$'':
		$$
		\|x^*\|^2=\sup_{\lambda\ge0} g(\lambda),
		$$
		where $ g(\lambda)=\min_{x\in\L} L(x;\lambda).  $ Now, for
		$\bar\lambda>\lambda\ge0$,
		\begin{align*}
			g(\bar\lambda)&=\min_{x\in\L}L(x,\bar\lambda)\\
			&=\min_{x\in\L}\left\{L(x,\lambda)+(\bar\lambda-\lambda)[f(x)-f^*]\right\}\\
			&\ge \min_{x\in\L}L(x,\lambda)+(\bar\lambda-\lambda)\min_{x\in\L}[f(x)-f^*]\\
			&\ge g(\lambda).
		\end{align*}
		Thus, $g$ is a non-decreasing function, and
		$$
		\lim_{\lambda\to\infty}g(\lambda)=\sup_{\lambda\ge0}
		g(\lambda)=\|x^*\|^2.
		$$
		Now,
		$$
		\|x^*\|^2=\lim_{\lambda\to\infty}g(\lambda)=\lim_{\lambda\to\infty}\left\{\lambda
		[f(x_{2/\lambda},2/\lambda)-f^*]\right\}=\lim_{\a\to0^+} (2/\a)
		[f(x_{\a},\a)-f^*]
		$$
		or, subtracting $\|x^*\|^2$ from both sides,
		$$
		0=\lim_{\a\to0^+} (2/\a) [f(x_{\a},\a)-f(x^*,\a)].
		$$
	\end{proof}

	By $\a$-strong convexity of $f(\cdot,\alpha)$,
	\begin{align}
		f(y,\a)\ge f(\xa,\a)+\frac{\a}{2}\|\xa-y\|^2\label{eq:scvx}
	\end{align}
	for any $y\in \L$.
	
	Applying this with $y=x^*$ gives
	$$
	f(x^*,\a)\ge f(\xa,\a)+\frac{\a}{2}\|\xa-x^*\|^2
	$$
	or
	$$
	\|\xa-x^*\|^2 \le (2/\a)[f(x^*,\a)- f(\xa,\a)].
	$$
	Taking the limit of both sides, Lemma \ref{lem:inf} gives
	$$  \lim_{\a\to0}\|\xa-x^*\|^2 \le0. $$
	Thus,
	$$
	\lim_{\a\to0}\xa=x^*.
	$$
\end{proof}
The above results are now easily applied to prove Proposition \ref{thm:identification}.
\begin{proof}[Proof of Proposition \ref{thm:identification}]
	Denote $x=(  \Phi,   \Theta)$. Consider the convex function
	$f(x) = \mathcal{P}_{\text{AR}}({  \Phi}) + \mathcal{P}_{\text{MA}}({  \Theta})$, and the affine space $\L$ in which  ${  \Phi}(L) = {  \Theta}(L)   \Pi(L)$ holds. 
	It follows from Proposition \ref{th_sparseid} that 
	$\lim_{\a\to0^+}({  \Phi}^{(\alpha)}, {  \Theta}^{(\alpha)})= ({  \Phi}^{(0)}, {  \Theta}^{(0)})$.
\end{proof}

\subsection{Identification for Multiple, Sparsest VARMA Representations}

\subsubsection{A Toy Example} \label{Tsay_example_Appendix}

In Section \ref{sparse_identif}, we refer to multiple equivalent, sparsest VARMA representations, as, for instance, discussed in  
Section 4.5.2 of  
\cite{Tsay14}. 
As an example, we consider  the VAR(1) and VMA(1) models 
$$
y_t = \begin{pmatrix} 0 & 1  \\
	0 & 0 \end{pmatrix} y_{t-1} + a_t \Leftrightarrow y_t = \begin{pmatrix} 0 & 1  \\
	0 & 0 \end{pmatrix} a_{t-1} + a_t.
$$
In this section, we establish our unique identification target for this example. 

Following the VARMA($p$,$q$) notation of our paper we write 
$
{\Phi}(L) {  y}_t =  {\Theta}(L) {  a}_t, \nonumber
$
where the AR and MA operators are respectively given by
\begin{equation}
	{\Phi}(L)  = {  I} -  \Phi_{1}L -  \Phi_{2}L^2 - \ldots -   \Phi_{p}L^p
	\ \ \text{and} \ \     {\Theta}(L)  = {  I} +  \Theta_{1}L +  \Theta_{2}L^2 + \ldots +   \Theta_{q}L^q, \nonumber
\end{equation}
with the lag operator $L^\ell$ defined as $L^\ell {  y}_t = {  y}_{t-\ell}$.
For the 
VMA(1) example 
with MA-coefficient matrix equal to 
\begin{equation} A = 
	\begin{pmatrix}
		0 & 1 \\
		0 & 0 \\
	\end{pmatrix} \nonumber
\end{equation}
we equivalently have ${\Phi}(L) = (I - 0L) = I$ and ${\Theta}(L) = (I + AL)$ in the VARMA(1,1) formulation. 
Further, since $\text{det}\{\Phi(z)\} \neq 0 $  and $\text{det}\{\Theta(z)\} \neq 0 $ for all $|z| \leq 1$ $(z \in \mathbb{C})$, this model is stable and invertible,  
and the process $ \{{  y}_t\}$ then has
an infinite-order VAR representation
$
\Pi(L) {  y}_t = {  a}_t, $
where 
$
\Pi(L) =  {\Theta}^{-1}(L)  {\Phi}(L) =  {  I} -  \Pi_{1}L -  \Pi_{2}L^2 - \cdots,  $
which in this case simplifies to 
$\Pi(L) = (I + AL)^{-1} I = I - AL$.
We then recognize this model is equivalent to a VAR(1) model with AR-coefficient matrix $A$, or we equivalently have $\widetilde{\Phi}(L) = (I - AL)$ and $\widetilde{\Theta}(L) = (I + 0L) = I$, in its VARMA(1,1) formulation.

In our paper, we therefore note that both models, as defined by their AR and MA coefficient matrix pairs $({\Phi},  {\Theta})$: $({I},  {A})$ and $({A},  {I})$, respectively, are in the same VARMA(1,1) equivalence class $\mathcal{E}_{1,1}$ with respect to $\Pi(L) = (I - AL)$. 
This is defined for the general VARMA(p,q) model as
$
\mathcal{E}_{p,q}( {\Pi}(L)) = \{ ( {\Phi},  {\Theta})  : 
{\Phi}(L) =  {\Theta}(L) {\Pi}(L)  \}, 
$    
and in this case we specifically have
\begin{equation}
	\mathcal{E}_{1,1}( I - AL ) = \{ ( \overline{\Phi},  \overline{\Theta})  : 
	(I - \overline{\Phi}L) =  (I + \overline{\Theta} L) (I - AL)  \}. \nonumber
\end{equation}
For the equivalence relation of $\mathcal{E}_{1,1}$ to hold for the given $A$, 
any matrix pair  $(\overline{\Phi},  \overline{\Theta})$ in the set must also be of the form   
\begin{equation*} \overline{\Phi} = \overline{\Phi}(a) = 
	\begin{pmatrix}
		0 & a \\
		0 & 0 \\
	\end{pmatrix} 
	\quad and \quad
	\overline{\Theta} = \overline{\Theta}(b) =  
	\begin{pmatrix}
		0 & b \\
		0 & 0 \\
	\end{pmatrix}, 
\end{equation*}
for $a,b \in \mathbb{R}$ such that, $a+b =1$, and so there are many solutions, not just the two identified above. 

We now turn to considering this problem from the proposed optimization-based identification perspective (Section \ref{sec:id-new}) by using strongly convex optimization to establish identification. 
Among all feasible AR and MA matrix pairs $(\overline{\Phi},  \overline{\Theta})$, we look for the one that gives the most parsimonious representation of the VARMA. Specifically, we measure parsimony through a pair of  convex regularizers, $\mathcal{P}_{\text{AR}}(\overline\Phi)$ and $\mathcal{P}_{\text{MA}}(\overline\Theta)$.
Our identification results apply equally well to any convex function;  one can consider,  amongst others, the $\ell_1$-norm, the $\ell_2$-norm, the nuclear norm, and convex combinations thereof. 

To be concrete for this particular example, let us specifically consider using the $\ell_1$-norm:
$ \mathcal{P}_{\text{AR}}(\overline\Phi) = \|\overline\Phi\|_1 \ \text{and }  \ \mathcal{P}_{\text{MA}}(\overline\Theta) =  \|\overline\Theta\|_1.$
Then for any fixed $\alpha>0$, a uniquely identified solution $({  \overline\Phi}^{(\alpha)}, {  \overline\Theta}^{(\alpha)})$ is
\begin{equation*}
	\underset{{  \overline\Phi}, {  \overline\Theta}}{\operatorname{argmin}} \ \left\{\|\overline\Phi\|_1 + \|\overline\Theta\|_1 + \dfrac{\alpha}{2} \|{ \overline\Phi}\|_F^2 +  \dfrac{\alpha}{2}  \|{  \overline\Theta}\|_F^2  \; \st  \ {  (I - \overline{\Phi}L) =  (I + \overline{\Theta} L) (I - AL) } 
	\right\}  
\end{equation*}
(see general case, 
Equation \eqref{Optim_pair}), and this is equivalent to 
\begin{equation*}
	\underset{{  \overline\Phi(a)}, {  \overline\Theta(b)}}{\operatorname{argmin}} \ \left\{\left\|\begin{pmatrix}
		0 & a \\
		0 & 0 \\
	\end{pmatrix}\right\|_1 + \left\|\begin{pmatrix}
		0 & b \\
		0 & 0 \\
	\end{pmatrix} \right\|_1 + \dfrac{\alpha}{2} \left\|{ \begin{pmatrix}
			0 & a \\
			0 & 0 \\
	\end{pmatrix}}\right\|_F^2 +  \dfrac{\alpha}{2}  \left\|{ \begin{pmatrix}
			0 & b \\
			0 & 0 \\
	\end{pmatrix} }\right\|_F^2 
	\st a+b =1
	\right\},  
\end{equation*}
or more simply, 
\begin{equation*}
	\underset{{  \overline\Phi(a)}, {  \overline\Theta(b)}}{\operatorname{argmin}} \ \left\{\
	|a| + |b| + \dfrac{\alpha}{2}|a|^2 + \dfrac{\alpha}{2}|b|^2     \st a+b =1
	\right\}, 
\end{equation*}
This optimization problem is strongly convex and thus has a unique solution pair $({  \overline\Phi}^{(\alpha)}, { \overline \Theta}^{(\alpha)})$ for each value of $\alpha>0$.
We further define our final \color{black} (unique) optimization-based  \color{black} identified VARMA representation as 
\begin{equation*}
	({  \overline\Phi}^{(0)}, {  \overline\Theta}^{(0)}) = \underset{\alpha \rightarrow 0^+}{\text{lim}} ({  \overline\Phi}^{(\alpha)}, {  \overline\Theta}^{(\alpha)}),
\end{equation*}
a result which is proved (in the general case) in Proposition \ref{thm:identification} 
to be the \textit{unique} pair of autoregressive and moving average matrices in the `regularized equivalent' class having smallest Frobenius norm, i.e., the regularized equivalent (sub-) class of $\mathcal{E}_{1,1}(I - AL)$ in this example is defined as $\mathcal{RE}_{1,1}( I - AL ) = $
\begin{equation*} 
	\underset{{  \overline\Phi}, {  \overline\Theta}}{\operatorname{argmin}} \ \left\{\|\overline\Phi\|_1 + \|\overline\Theta\|_1 \; \st  \ {  (I - \overline{\Phi}L) =  (I + \overline{\Theta} L) (I - AL)} 
	\right\}
	=
	\underset{{  \overline\Phi(a)}, {  \overline\Theta(b)}}{\operatorname{argmin}} \ \left\{\
	|a| + |b|   \st a+b = 1
	\right\}
\end{equation*}
(which has many solutions, i.e., $b = 1 - a, a \in [0,1]$). 
Then our final unique solution for this specific problem is 
\begin{align*}
	({  \overline\Phi}^{(0)}, {  \overline\Theta}^{(0)}) &= 
	\underset{{  \overline\Phi(a)}, {  \overline\Theta(b)}}{\operatorname{argmin}} \ \left\{\
	|a|^2 + |b|^2     \st ({  \overline\Phi(a)}, {  \overline\Theta(b)}) \in \mathcal{RE}_{1,1}( I - AL )
	\right\} \\
	&= 
	\underset{{  \overline\Phi(a)}, {  \overline\Theta(b)}}{\operatorname{argmin}} \ \left\{\
	|a|^2 + |b|^2    \st b = 1 - a, a \in [0,1] 
	\right\}. 
\end{align*}
This has the unique solution $a = b = 0.5$, or 
\begin{equation} \overline\Phi = \overline\Theta = 
	\begin{pmatrix}
		0 & 0.5 \\
		0 & 0 \\
	\end{pmatrix}, \nonumber
\end{equation}
and we can further confirm by hand this solution is in $\mathcal{E}_{1,1}( I - AL )$, since 
$$
\left(I +    
\begin{pmatrix}
	0 & 0.5 \\
	0 & 0 \\
\end{pmatrix}L\right)^{-1}\left(I -     
\begin{pmatrix}
	0 & 0.5 \\
	0 & 0 \\
\end{pmatrix}L \right) \\
=
\left(I -    
\begin{pmatrix}
	0 & 0.5 \\
	0 & 0 \\
\end{pmatrix}L\right)^2 \\
= 
\left(I -    
\begin{pmatrix}
	0 & 1 \\
	0 & 0 \\
\end{pmatrix}L\right) \\
= (I - AL).$$    
Finally, we note that although the proposed unique VARMA(1,1) solution above does not have as few non-zero parameters as either the pure VMA(1) or VAR(1) model 
(in which there was just one), in finding this solution via optimization with constraints there was still only one free parameter, and therefore the same overall model complexity in this regard. 
Furthermore, this is only the unique solution derived under the $\ell_1$-norm choice of regularization, and we reiterate that the flexible framework that we propose also allows any (user specified) convex function for regularization-based identification, including the $\ell_1$-norm, the $\ell_2$-norm, the nuclear norm, and convex combinations thereof. 

\subsubsection{Simulation} \label{sim_study_identification}
We further illustrate sparse identification with a small simulation study. Figure \ref{exampleID} (panel a) shows a $\text{VARMA}_{d=8}(1, 1)$ model 
$$\Phi_{\text{dense}} = \protect\begin{bmatrix} {\bf 0.2} & {\bf 0.05} \protect\\ {\bf 0}  & {\bf 0.1}\protect\end{bmatrix}, \ \text{and} \
\Theta_{\text{dense}} =  \protect\begin{bmatrix} {\bf 0} & {\bf -0.25} \protect\\ {\bf 0}  & {\bf -0.1}\protect\end{bmatrix},
$$
with the dense $(\Phi,\Theta)$  having 80 nonzero entries. However, this VARMA model can be alternatively expressed in terms of 
$$\Phi_{\text{sparse}} = \protect\begin{bmatrix} {\bf 0.2} & {\bf 0} \protect\\ {\bf 0}  & {\bf 0}\protect\end{bmatrix}, \ \text{and} \ \Theta_{\text{sparse}} =  \protect\begin{bmatrix} {\bf 0} & {\bf -0.2} \protect\\ {\bf 0}  & {\bf 0}\protect\end{bmatrix},$$
a sparse $(\Phi,\Theta)$ having only $32$ nonzero entries (panel b);
or 
$$\Phi^{(0)} = \protect\begin{bmatrix} {\bf 0.1} & {\bf -0.1} \protect\\ {\bf 0}  & {\bf 0}\protect\end{bmatrix}, \ \text{and} \  		\Theta^{(0)} =  \protect\begin{bmatrix} {\bf 0.1} & {\bf -0.1} \protect\\ {\bf 0}  & {\bf 0}\protect\end{bmatrix},$$
having only 64 nonzero entries (panel c).

Note that there are multiple equivalent, 
minimum-$\ell_1$ VARMA representations.
Two of these  are  visualized in panels (b) and (c) but others exist such as the pair where the AR and MA matrices of the ``sparse" design are swapped. 
All  have minimal $\ell_1$-norm (i.e.\ $||\Phi||_1 = ||\Theta||_1 = 3.2$). 
Panel (c) displays the unique pair $(\Phi^{(0)}, \Theta^{(0)})$, defined in Equation \eqref{uniquePhiTheta}, as the one having minimal $\ell_2$-norm (i.e. $||\Phi^{(0)}||^2_F = ||\Theta^{(0)}||^2_F = 0.32$). 
When choosing the $\ell_1$-norm as the convex regularizer, our optimization-based identification strategy would favor the sparser VARMA representations over the denser one since the former have a smaller $\ell_1$-norm (i.e. $\ell_1$-norm for the dense design is $||\Phi||_1 = ||\Theta||_1 = 5.6$). 

\begin{figure}
	\centering
	\includegraphics[width=\textwidth]{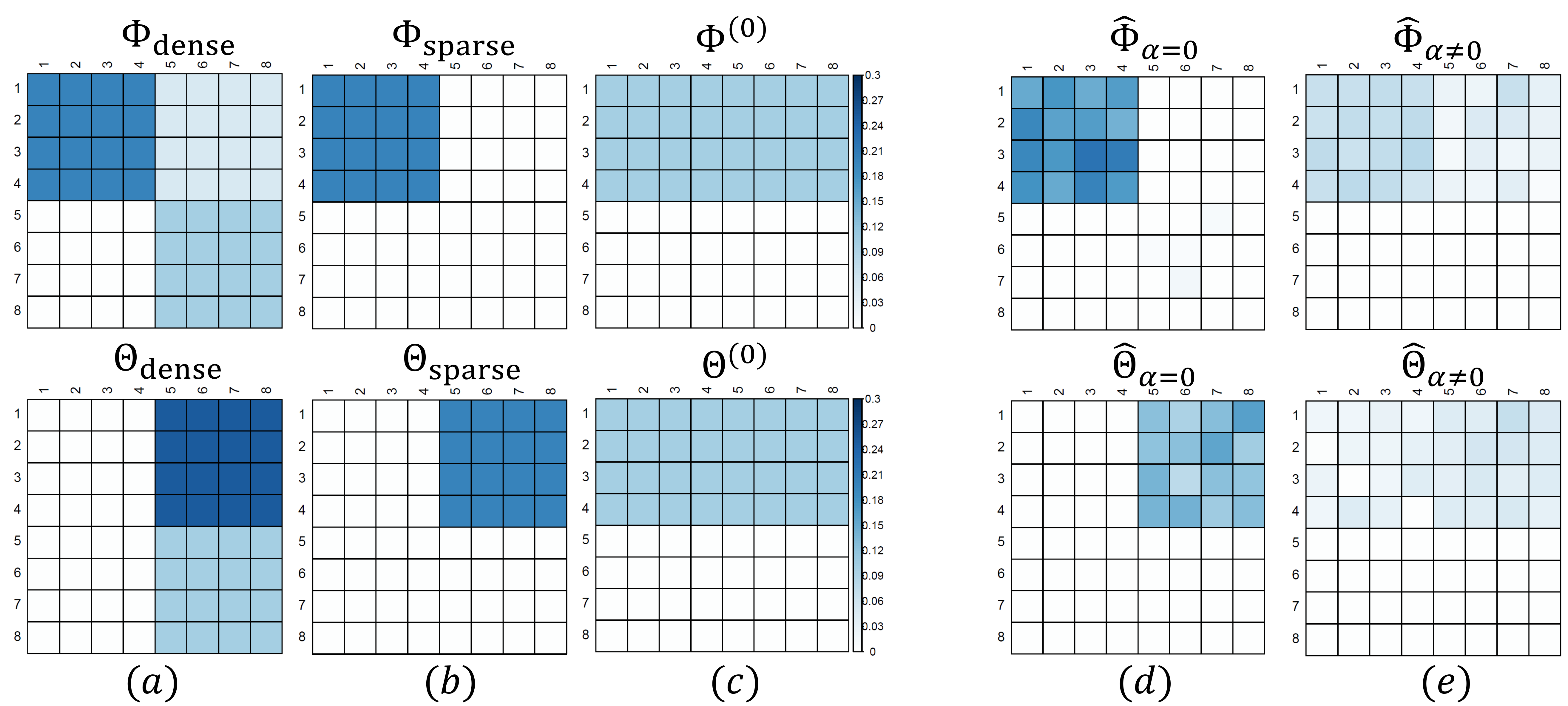}
	\caption{True AR and MA matrices  from three equivalent VARMA representations: (a) a dense, (b) a sparse and (c) the target VARMA.  
		The estimates obtained with our sparse VARMA estimation procedure for $\alpha=0$ are displayed in panel (d); for a small $\alpha \neq 0$ in panel (e). Darker shading of cells indicate parameters that are larger in (absolute) magnitude.
	}\label{exampleID}
\end{figure}

To illustrate the link between our identification and estimation stages, consider the following simulation experiment: 
We take $\Sigma_a ={I_d}$ and generate time series of length $T=1000$ (after 200 burn-in observations) from the \textit{dense} VARMA (Figure \ref{exampleID}, panel a). We then use our sparse VARMA procedure  with $\ell_1$-norm as convex regularizer and take $p=q=1$ to obtain the AR and MA parameter estimates. The number of simulations is $N=500$.

First, we estimate the VARMA with $\alpha=0$; the corresponding estimates are visualized in Figure \ref{exampleID} panel (d), for an illustrative simulation run. 
The results are very stable from one simulation run to another. 
Although we generate the time series from the dense DGP, our  procedure encourages  identification and estimation of sparser models and thus returns sparser estimates.
Since there are infinitely many equivalent ``true" $(\Phi,\Theta)$ pairs, we are not interested in comparing the estimates to the  dense $(\Phi,\Theta)$ pair used to originally generate the data.
Instead, we aim to produce estimates that are close to the sparse equivalence class.
In almost all simulation runs (496 out of 500), Matthews Correlation Coefficient (MCC) between the  sparse  $(\Phi,\Theta)$ (Figure  \ref{exampleID}, panel b) and the estimates equals one; thereby providing perfect recovery.
By taking $\alpha=0$, our simulation results thus show that our estimates are very close to 
one of the elements in the sparse equivalance class,
which is in line with our theoretical result on partially identified estimation. 

Next, we estimate the VARMA with a non-zero but small  $\alpha$ (we take $\alpha=10^{-2}$, thus small relative to the selected values for $\lambda_\Phi \approx 10^2$ and $\lambda_\Theta \approx 10$). 
By adding the $\ell_2$-norm to the objective function, we expect to produce estimates that are closer to the unique $(\Phi^{(0)},\Theta^{(0)})$. 
This expectation is confirmed by our results, as can be seen from the corresponding estimates, visualized in Figure \ref{exampleID} panel (e). 
The average (over the simulation runs) MCC  between the target (Figure  \ref{exampleID}, panel c)  and the estimates is 0.97 with a standard error smaller than 0.001.

Since there exist multiple equivalent, sparsest VARMA representations with different support, we do not focus on model selection consistency in the paper but instead on forecasting. For forecasting purposes, we are interested in obtaining a parsimonious VARMA representation with good out-of-sample performance. For this reason, we prefer to use $\alpha=0$ in the simulation study and forecast applications since our numerical experiments showed that this generally produces a sparser (i.e.\ with fewer non-zero coefficients) estimated VARMA compared to the estimates obtained when taking $\alpha$ non-zero but small.

\section{Key Technical Ingredients}\label{sec:technical-lemmas}

Our first technical ingredient provides a deviation bound (in element-wise maximum norm) for the product of two random matrices, whose rows consist of consecutive observations from two time series that are outputs of a linear filter applied on the same stationary Gaussian time series. In the analysis of both Phase-I and Phase-II, we use this result to control upper bounds on inner products of columns of the design matrix and the error matrix. This proposition generalizes a similar concentration bound in \citep{Basu15} for uncorrelated time series.   

\begin{proposition}\label{lem:yAy}
	Let $\{y_t\}_{t \in \mathbb{Z}}$ be a $d$-dimensional  stable, Gaussian, centered time series with spectral density $f_y$. Consider two time series  $X_t=\mathcal{A}(L)y_t$ and $Y_t = \mathcal{B}(L)y_t$, whose $d \times d$ matrix-valued lag polynomials $\mathcal{A}(L)$ and $\mathcal{B}(L)$ satisfy $\|\mathcal{A}\|_{2,1} < \infty$, $\|\mathcal{B}\|_{2,1} < \infty$. Let $\mathcal{X} = \left[X_T: X_{T-1}: \cdots: X_1\right]^\top$ and $\mathcal{Y} = \left[Y_T: Y_{T-1}:\cdots:Y_1 \right]^\top$ be two data matrices, each containing in its rows $T$ consecutive observations from the time series $\{X_t\}$ and $\{Y_t\}$, respectively. Then there exists a universal constant $c > 0$ such that for any $\eta > 0$ and any $u, v \in \mathbb{S}^{d-1}$, we have  
	\begin{equation}\label{eqn:devn-xy}
		\mathbb{P} \left[ \left| u^\top \left( \mathcal{X}^\top \mathcal{Y}/T - \Gamma_{X,Y}(0) \right)v \right| > 6\pi  \vertiii{f_y} \max \! \left\{ \vertiii{\mathcal{A}}^2, \vertiii{\mathcal{B}}^2 \right\} \eta \right]
	\end{equation}
	is at most $6 \exp[-c T \min \{ \eta, \eta^2 \}]$. 
	
	\noindent In addition, if $T \succsim \log d$, then for any $A >  0$, the following upper bound holds with probability at least $1 - 6 \exp \left[-2(c A^2-1) \log d \right]$: 
	\begin{eqnarray*}
		\left\|\mathcal{X}^\top \mathcal{Y}/T \right\|_\infty \le 
		2\pi  \vertiii{f_y} \left[3A \,\max \! \left\{ \vertiii{\mathcal{A}}^2, \vertiii{\mathcal{B}}^2 \right\} \sqrt{ 2 \log d / T} + \vertiii{\mathcal{A}} \left\| \mathcal{B} \right\|_{2,1} \right].
	\end{eqnarray*}
\end{proposition}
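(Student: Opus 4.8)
The plan is to fix unit vectors $u,v \in \mathbb{S}^{d-1}$, reduce the bilinear form $u^\top(\mathcal{X}^\top\mathcal{Y}/T)v$ to a signed combination of Gaussian quadratic forms, and control each piece through the operator norm of a structured covariance matrix bounded by the spectral density. First I would note that $u^\top(\mathcal{X}^\top\mathcal{Y}/T)v = \tfrac{1}{T}(\mathcal{X}u)^\top(\mathcal{Y}v)$, where $\mathcal{X}u,\mathcal{Y}v \in \real^T$ are jointly Gaussian vectors with coordinates $u^\top X_t$ and $v^\top Y_t$. Writing $s^X_t = u^\top X_t = (u^\top\mathcal{A})(L)y_t$ and $s^Y_t = v^\top Y_t = (v^\top\mathcal{B})(L)y_t$, these are scalar stationary Gaussian series obtained by applying row-vector filters to $y_t$. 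The key algebraic step is the three-term polarization identity $2\,a^\top b = \|a+b\|^2 - \|a\|^2 - \|b\|^2$, which lets me write the centered bilinear form as $\tfrac12$ times the signed sum of the three centered quadratic forms $\tfrac1T\|\mathcal{X}u+\mathcal{Y}v\|^2$, $\tfrac1T\|\mathcal{X}u\|^2$, and $\tfrac1T\|\mathcal{Y}v\|^2$, each minus its expectation. This device accommodates the correlation between $X_t$ and $Y_t$, which is exactly the feature distinguishing this statement from the uncorrelated case treated in \cite{Basu15}.

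Next I would control each quadratic form. Each stacked vector is $N(0,Q)$ for a $T\times T$ Toeplitz covariance $Q$ of a scalar stationary Gaussian process $s_t = g(L)y_t$, with $g$ equal to $u^\top\mathcal{A}+v^\top\mathcal{B}$, $u^\top\mathcal{A}$, or $v^\top\mathcal{B}$. A Gaussian (Hanson--Wright type) deviation bound gives $\P[\,|\tfrac1T\xi^\top Q\xi - \tfrac1T\mathrm{tr}\,Q| > 2\pi\vertiii{f_s}\eta\,] \le 2\exp[-cT\min\{\eta,\eta^2\}]$, once I invoke the standard fact (as in \cite{Basu15}) that the operator norm of $Q$ is at most $2\pi\,\esssup_\theta \Lambda_{\max}(f_s(\theta)) = 2\pi\vertiii{f_s}$. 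The spectral density of $s_t$ factors as $f_s(\theta)=g(e^{-i\theta})f_y(\theta)g(e^{-i\theta})^*$, so $\vertiii{f_s}\le\vertiii{f_y}\,\max_\theta\|g(e^{-i\theta})\|^2$; since $\|u\|=\|v\|=1$, this is at most $\vertiii{f_y}\vertiii{\mathcal{A}}^2$, $\vertiii{f_y}\vertiii{\mathcal{B}}^2$, and $\vertiii{f_y}(\vertiii{\mathcal{A}}+\vertiii{\mathcal{B}})^2 \le 4\vertiii{f_y}\max\{\vertiii{\mathcal{A}}^2,\vertiii{\mathcal{B}}^2\}$ for the three cases. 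Adding the three thresholds with the polarization factor $\tfrac12$ yields $\tfrac12\cdot 2\pi\vertiii{f_y}(1+1+4)\max\{\vertiii{\mathcal{A}}^2,\vertiii{\mathcal{B}}^2\}\eta = 6\pi\vertiii{f_y}\max\{\cdots\}\eta$, and a union bound over the three events produces the factor $6$ in \eqref{eqn:devn-xy}.

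For the second statement I would take $u=e_j,\, v=e_k$ over the standard basis, so that $\|\mathcal{X}^\top\mathcal{Y}/T\|_\infty=\max_{j,k}|e_j^\top(\mathcal{X}^\top\mathcal{Y}/T)e_k|$, and split each entry into its centered part plus the population term $e_j^\top\Gamma_{X,Y}(0)e_k$. Choosing $\eta = A\sqrt{2\log d/T}$ and using $T\succsim\log d$ so that $\min\{\eta,\eta^2\}=\eta^2$, a union bound over the $d^2$ entries controls the centered part uniformly with probability at least $1-6\exp[2\log d - 2cA^2\log d]=1-6\exp[-2(cA^2-1)\log d]$, contributing the $3A\max\{\cdots\}\sqrt{2\log d/T}$ term after factoring out $2\pi\vertiii{f_y}$. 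For the population term I would use $\|\Gamma_{X,Y}(0)\|_\infty\le\|\Gamma_{X,Y}(0)\|$ and expand $\Gamma_{X,Y}(0)=\sum_m \E[X_t y_{t-m}^\top]B_m^\top$; bounding each cross-covariance $\|\E[X_t y_{t-m}^\top]\|\le 2\pi\vertiii{\mathcal{A}}\vertiii{f_y}$ uniformly in $m$ (via $\E[X_t y_{t-m}^\top]=\int \mathcal{A}(e^{-i\theta})e^{im\theta}f_y(\theta)\,d\theta$) and summing $\|B_m\|$ gives $2\pi\vertiii{f_y}\vertiii{\mathcal{A}}\|\mathcal{B}\|_{2,1}$, the second term in the bound.

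The main obstacle is the correlation between the two filtered series: unlike the uncorrelated setting of \cite{Basu15}, the cross term $\tfrac1T\|\mathcal{X}u+\mathcal{Y}v\|^2$ genuinely involves the joint cross-spectral structure, and the bookkeeping must be arranged so that the worst case $(\vertiii{\mathcal{A}}+\vertiii{\mathcal{B}})^2\le 4\max\{\cdots\}$, together with the two diagonal terms, collapses exactly into the clean $6\pi$ constant. The remaining delicate points are the passage from the spectral density to the operator norm of the finite-$T$ Toeplitz covariance (to keep the $2\pi\vertiii{f}$ factor sharp in the quadratic-form bound) and, in the second part, the fact that $\Gamma_{X,Y}(0)$ is retained rather than subtracted, reflecting that in the VARMA application the design and error columns are genuinely correlated.
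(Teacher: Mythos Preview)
Your argument is correct and is essentially the paper's proof made explicit: the paper cites a correlated-series extension of Proposition~2.4(b) in \cite{Basu15}, whose underlying mechanism is exactly your polarization into three Gaussian quadratic forms, and then bounds the resulting threshold $2\pi(\vertiii{f_{X,Y}}+\vertiii{f_X}+\vertiii{f_Y})\eta$ by $6\pi\vertiii{f_y}\max\{\vertiii{\mathcal{A}}^2,\vertiii{\mathcal{B}}^2\}\eta$ via $\vertiii{f_{X,Y}}^2\le\vertiii{f_X}\vertiii{f_Y}$ together with $\vertiii{f_X}\le\vertiii{\mathcal{A}}^2\vertiii{f_y}$ and $\vertiii{f_Y}\le\vertiii{\mathcal{B}}^2\vertiii{f_y}$---the same $6\pi$ constant you reach by bounding the sum process's spectral density directly through $(\vertiii{\mathcal{A}}+\vertiii{\mathcal{B}})^2\le4\max$. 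The second part (union bound over the $d^2$ canonical pairs with $\eta=A\sqrt{2\log d/T}$, and the bound $\|\Gamma_{X,Y}(0)\|\le 2\pi\vertiii{\mathcal{A}}\vertiii{f_y}\|\mathcal{B}\|_{2,1}$) matches the paper verbatim; one notational slip is that your $\tfrac1T\xi^\top Q\xi$ should read $\tfrac1T\xi^\top\xi$ with $\xi\sim N(0,Q)$, or equivalently $\tfrac1T z^\top Q z$ with $z$ standard normal.
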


\begin{remark} The two terms in the above bound can be viewed as the variance and bias terms. The first term provides a bound on the deviation of $\mathcal{X}^\top \mathcal{Y}/T$ around its expectation in element-wise maximum norm. This bound scales with the dimension $d$ at a rate $\sqrt{\log d / T}$ similar to the case of i.i.d. random variables. In addition, the terms $\vertiii{f_y}$, $\vertiii{\mathcal{A}}$ and $\vertiii{\mathcal{B}}$ capture the effect of temporal dependence on the convergence rates. The second term provides a bound on the bias, i.e. the population covariance between the time series $X_t$ and $Y_t$. This H{\"o}lder-type bound involves the operator norms of the spectral density of $y_t$ (across frequencies), and the linear filters $\mathcal{A}(L)$ and $\mathcal{B}(L)$ applied on $y_t$. The bound on bias can be potentially improved using additional structures of the linear filters (see remark after proof below).
\end{remark}

\begin{proof}[Proof of Proposition \ref{lem:yAy}]
	
	In order to obtain a high probability concentration bound, we first state a generalized version of Proposition 2.4(b) in \citep{Basu15}, allowing for correlation between the two time series.  	The proof follows along the same line, only replacing $(2/n) \sum_{t=1}^n w^t z^t$ with $(2/n) \sum_{t=1}^n w^t z^t - Cov(z^t, w^t)$ in the left hand side of the first equation in their proof.
	
	Let $\{X_t\}_{t \in \mathbb{Z}}$ and $\{Y_t\}_{t \in \mathbb{Z}}$ be two $d$-dimensional stationary Gaussian centered time series, with autocovariance function $\Gamma_{X,Y}(h) = cov(X_t, Y_{t+h}) = \mathbb{E} [X_t Y_{t+h}^\top]$ and cross-spectral density $f_{X,Y}$. Assume the process $Z_t = [X_t^\top :Y_t^\top ]^\top $ is stable so that it has bounded cross-spectrum $\vertiii{f_{X,Y}} < \infty$. Let $\mathcal{X}$ and $\mathcal{Y}$ be $T \times d$ data matrices, with rows corresponding to consecutive observations from the time series $\{X_t\}$ and $\{Y_t\}$, respectively. Then, for any $u,\, v \in \mathbb{R}^d$ with $\|u\| \le 1$, $\|v\| \le 1$, and any $\eta > 0$, we have 
	\begin{equation}\label{eqn:dev-xycor}
		\mathbb{P}\left[\left|u^\top (\mathcal{X}^\top \mathcal{Y}/T - \Gamma_{X,Y}(0))v \right| > 2\pi \left[\vertiii{f_{X,Y}}+\vertiii{f_X} + \vertiii{f_Y} \right] \eta \right] 
	\end{equation}
	is at most $ 6\exp[-c \, T \min \{\eta, \eta^2 \}]$ for some universal constant $c > 0$. 
	
	Next, we 
	use the fact that $\vertiii{f_{X,Y}}^2$ is at most $\vertiii{f_X} \vertiii{f_Y}$, so that $\vertiii{f_{X,Y}}+\vertiii{f_X} + \vertiii{f_Y}$ is at most $3 \max\{\vertiii{f_X}, \vertiii{f_Y} \}$.

	\noindent By definition of $X_t$ and $Y_t$, the spectral densities take the form 
	\begin{eqnarray*}
		f_X(\theta) &=& \mathcal{A}(e^{i\theta}) f_y(\theta) \mathcal{A}^*(e^{i\theta}), \\
		f_Y(\theta) &=& \mathcal{B}(e^{i\theta}) f_y(\theta) \mathcal{B}^*(e^{i\theta}), \\
		f_{X,Y} (\theta) &=& \mathcal{A}(e^{i\theta}) f_y(\theta) \mathcal{B}^*(e^{i\theta}).
	\end{eqnarray*}
	
	\noindent This implies
	$\vertiii{f_X} \le \vertiii{\mathcal{A}}^2 \vertiii{f_y}$, $\vertiii{f_Y} \le \vertiii{\mathcal{B}}^2 \vertiii{f_y}$ and $\vertiii{f_{X,Y}} \le \vertiii{\mathcal{A}} \vertiii{\mathcal{B}} \vertiii{f_y} < \infty$, so that the above concentration bound can be applied. 
	Plugging in these upper bounds into the above concentration inequality, we prove the first part of our proposition. 
	
	In order to prove the second part, we set $\eta = A \sqrt{(\log d^2) / T}$ and take union bound of the event in \eqref{eqn:dev-xycor} over $d^2$ choices of $u, v \in \{e_1, \ldots, e_d \}$, the set of canonical unit vectors in $\mathbb{R}^d$. Since $T \succsim \log d$, we have $\min\{\eta, \eta^2 \} = \eta^2$ so that the above inequality implies
	\begin{eqnarray*}
		\mathbb{P} \left[ \left\| \mathcal{X}^\top \mathcal{Y}/T \right\|_\infty \, > \,  \left\| \Gamma_{X,Y}(0) \right\|_\infty + 6\pi A \vertiii{f_y} \max \! \left\{ \vertiii{\mathcal{A}}^2, \vertiii{\mathcal{B}}^2 \right\} \sqrt{ 2\log d / T} \right]
	\end{eqnarray*}
	is at most $6 d^2 \exp[ -c A^2 \log d^2 ] = 6 \exp \left[-(cA^2-1) \log d^2 \right]$. 
	
	Next, in order to get an upper bound on $\|\Gamma_{X,Y}(0)\|_\infty$, note that 
	\begin{eqnarray*}
		\Gamma_{X,Y}(0) &=& \cov(\mathcal{A}(L)y_t, \mathcal{B}(L)y_t) \\
		&=& \sum_{\ell \ge 0} \sum_{m \ge 0} A_\ell \Gamma(\ell-m) B_m^\top \\
		&=& \int_{-\pi}^\pi \sum_{\ell \ge 0} \sum_{m \ge 0} e^{i(\ell-m)\theta} A_\ell f(\theta) B_m^\top d\theta \\
		&=& \sum_{m \ge 0} \left[ \int_{-\pi}^\pi \left( \sum_{\ell \ge 0} A_\ell e^{i\ell\theta} \right) f(\theta) e^{-im \theta} d\theta \right] B_m^\top.
	\end{eqnarray*}
	
	Therefore,
	\begin{equation*}
		\| \Gamma_{X,Y}(0) \|_\infty \le \| \Gamma_{X,Y}(0) \| \le 2 \pi \vertiii{\mathcal{A}} \vertiii{f_y} \| \mathcal{B} \|_{2,1}. 
	\end{equation*}
\end{proof}

\begin{remark} Note that the bound on $\| \Gamma_{X,Y}(0) \|$ may be improved using information on the dependence between $X_t$ and $Y_t$. For instance, if we consider $X_t = y_{t-\ell}$ and $Y_t = y_t$, then we can expect that $\Gamma_{X,Y}(0)$, the covariance between $X_t$ and $Y_t$,  will decay with larger $\ell$, but our bound does not. A tighter bound on $\|\Gamma_{X,Y}(0)\|$ can potentially be obtained using special structures of $X_t$ and $Y_t$, as in our proof of Proposition \ref{prop:dev-bound}. 
\end{remark}

Our second key technical ingredient will be used to provide an upper bound on the operator norm of the spectral density of a time series of the form $z_t$ in Proposition \ref{prop:yule-walker-identification} in terms of the spectral density of $y_t$ and the linear filter used to generate $a_t$ from $y_t$. We use this to provide a finite-sample upper bound on the deviation of the sample Gram matrix in the Phase-II regression from its population analogue. 

\begin{proposition}\label{lem:f-y-eps}
	Consider a $d$-dimensional centered stable process $\{y_t\}$, and a $d \times d$ matrix-valued lag polynomial $\mathcal{C}(L)$ with finite $\| \mathcal{C} \|_{2,1}$. Then the spectral density of the $d(p+q)$-dimensional derived process 
	\begin{equation*}
		z_t = \left[ y_{t-1}^\top, y_{t-2}^\top, \ldots, y_{t-p}^\top, \mathcal{C}(L)y_{t-1}^\top, \ldots, \mathcal{C}(L)y_{t-q}^\top \right]^\top
	\end{equation*}
	satisfies $\vertiii{f_z} \le \left(p + q \vertiii{\mathcal{C}}^2\right) \,  \vertiii{f_y}$.
\end{proposition}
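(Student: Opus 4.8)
The plan is to exhibit $z_t$ as the output of a single tall matrix-valued linear filter applied to $y_t$, then pass to the frequency domain and read off $f_z$ from $f_y$ via the standard transfer-function identity, and finally bound the operator norm of that transfer function.

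First I would rewrite each block of $z_t$ as a filter of $y_t$: since $y_{t-j} = L^j y_t$ and $\mathcal{C}(L) y_{t-j} = L^j \mathcal{C}(L) y_t$ (because $L^j$ commutes with $\mathcal{C}(L)$), we have $z_t = \mathcal{D}(L) y_t$, where $\mathcal{D}(L)$ is the $d(p+q) \times d$ block-column lag polynomial whose successive $d \times d$ blocks are $L I_d, L^2 I_d, \ldots, L^p I_d, L\mathcal{C}(L), L^2\mathcal{C}(L), \ldots, L^q\mathcal{C}(L)$. The assumptions $\vertiii{f_y} < \infty$ and $\|\mathcal{C}\|_{2,1} < \infty$ ensure this filter is absolutely summable, so $z_t$ is a well-defined stable process and its spectral density exists.

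Next I would apply the spectral density transformation rule for filtered stationary processes: for $z_t = \mathcal{D}(L) y_t$ one has, for a.e.\ $\theta$,
$$f_z(\theta) = \mathcal{D}(e^{i\theta})\, f_y(\theta)\, \mathcal{D}(e^{i\theta})^*.$$
Using $\|M \Sigma M^*\| \le \|M\|^2 \|\Sigma\|$ (valid for any matrix $M$ and positive semidefinite $\Sigma$, since $\|M^*\| = \|M\|$) gives the pointwise bound $\|f_z(\theta)\| \le \|\mathcal{D}(e^{i\theta})\|^2 \|f_y(\theta)\|$, and maximizing over $\theta$ yields $\vertiii{f_z} \le \vertiii{\mathcal{D}}^2 \, \vertiii{f_y}$. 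It then remains to show $\vertiii{\mathcal{D}}^2 = p + q\,\vertiii{\mathcal{C}}^2$. Fixing $\theta$ and writing $z = e^{i\theta}$ with $|z| = 1$, I would compute the $d \times d$ Gram matrix
$$\mathcal{D}(z)^* \mathcal{D}(z) = \sum_{j=1}^p \overline{z^j}\, z^j\, I_d + \sum_{j=1}^q \overline{z^j}\, z^j\, \mathcal{C}(z)^* \mathcal{C}(z) = p\, I_d + q\, \mathcal{C}(z)^* \mathcal{C}(z),$$
using $|z^j|^2 = 1$. Because $p I_d$ is a scalar multiple of the identity, the eigenvalues of this Hermitian matrix are exactly $p + q\,\sigma_i^2$, where $\sigma_1, \ldots, \sigma_d$ are the singular values of $\mathcal{C}(z)$; its top eigenvalue --- and hence $\|\mathcal{D}(z)\|^2$ --- equals $p + q\,\|\mathcal{C}(z)\|^2$. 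Maximizing over $\theta$, and noting that $p$ is constant in $\theta$, gives $\vertiii{\mathcal{D}}^2 = p + q\,\vertiii{\mathcal{C}}^2$; combining with the previous bound completes the proof.

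I do not expect a substantive obstacle: once the filter representation is in place the result is a short computation. The only points needing care are the rigorous justification of the transfer-function identity under the summability condition $\|\mathcal{C}\|_{2,1} < \infty$ (rather than a finite-order filter), which controls convergence of the defining series and legitimizes interchanging summation with the Fourier transform, and the harmless but worth-stating identity $\max_\theta \left( p + q\|\mathcal{C}(e^{i\theta})\|^2 \right) = p + q \vertiii{\mathcal{C}}^2$, which holds precisely because the constant term $p$ factors out of the maximization.
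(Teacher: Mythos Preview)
Your proof is correct and arrives at the same bound, but the route differs from the paper's. The paper works from the autocovariance side: it expands $\Gamma_z(h)$ block by block into four $dp\times dp$, $dp\times dq$, $dq\times dp$, $dq\times dq$ pieces, Fourier-transforms each block to obtain $f_z(\theta)$ explicitly (the blocks come out as $(v_pv_p^*)\otimes f_y(\theta)$, $(v_pv_q^*)\otimes f_y(\theta)\mathcal{C}^*(e^{i\theta})$, etc., with $v_p=[e^{i\theta},\dots,e^{ip\theta}]^\top$), and then applies a norm-compression inequality to the $2\times 2$ matrix of block norms, which is rank one and has operator norm $p+q\vertiii{\mathcal{C}}^2$.

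Your approach short-circuits this by writing $z_t=\mathcal{D}(L)y_t$ once, invoking the transfer-function identity $f_z=\mathcal{D} f_y \mathcal{D}^*$, and reading off $\|\mathcal{D}(z)\|^2$ from the $d\times d$ Gram matrix $\mathcal{D}(z)^*\mathcal{D}(z)=pI_d+q\,\mathcal{C}(z)^*\mathcal{C}(z)$. This is cleaner: it avoids the block-by-block autocovariance bookkeeping and the norm-compression step, and it computes $\|\mathcal{D}(e^{i\theta})\|$ exactly rather than bounding $\|f_z(\theta)\|$ through submatrix norms. The paper's argument has the minor advantage of displaying the full block structure of $f_z(\theta)$ (which in principle could be reused), but that structure is not used elsewhere, so your route loses nothing for the purposes of this proposition.
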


\begin{proof}[Proof of Proposition \ref{lem:f-y-eps}]
	Let $\mathcal{C}(L) = \sum_{\ell \ge 0} C_\ell L^\ell$ be a potentially infinite order $d \times d$ matrix-valued lag polynomial. The autocovariance function of the process $\{z_t\}$ takes the form 
	\begin{equation*}
		\Gamma_{z}(h) = \cov(z_t, z_{t+h}) = \cov \left( \left[\begin{array}{c} y_{t-1} \\ \vdots \\ y_{t-p} \\ \mathcal{C}(L)y_{t-1} \\ \vdots \\ \mathcal{C}(L)y_{t-q} \end{array} \right], \left[\begin{array}{c} y_{t-1+h} \\ \vdots \\ y_{t-p+h} \\ \mathcal{C}(L)y_{t-1+h} \\ \vdots \\ \mathcal{C}(L)y_{t-q+h} \end{array} \right] \right).
	\end{equation*}
	The $d(p+q) \times d(p+q)$ matrix on the right can be partitioned into four blocks. Since $\|\Gamma_y(h)\| \le 2 \pi \vertiii{f_y} < \infty$ for all $h \in \mathbb{Z}$, and $ \| \mathcal{C}\|_{2,1} = \sum_{\ell \ge 0} \|C_\ell \| < \infty$, using dominated convergence theorem we can express the four blocks as follows. 
	\begin{enumerate}
		\item \textit{Block (1,1), size $dp \times dp$}: consists of $p^2$ submatrices of size $d \times d$ each, the  $(r,s)^{th}$ submatrix given by $\cov \left(y_{t-r}, y_{t-s+h} \right) = \Gamma_y(r-s+h)$, for $1 \le r, s, \le p$;
		\item \textit{Block (1,2), size $dp \times dq$}: consists of $pq$ submatrices of size $d \times d$ each, the $(r,s)^{th}$ submatrix given by $\cov \left(y_{t-r}, \sum_{\ell \ge 0} C_\ell y_{t-s+h-\ell} \right) = \sum_{\ell \ge 0} \Gamma_y(r-s+h-\ell) C_\ell^\top$, for $1 \le r \le p$, $1 \le s \le q$;
		\item \textit{Block (2,1), size $dq \times dp$}: consists of $pq$ submatrices of size $d \times d$ each, the $(r,s)^{th}$ submatrix given by $\cov \left( \sum_{\ell \ge 0} C_\ell y_{t-r-\ell}, y_{t-s+h} \right) = \sum_{\ell \ge 0} C_\ell \Gamma_y(r-s+h+\ell)$, for $1 \le r \le q$, $1 \le s \le p$;
		\item \textit{Block (2,2), size $dq \times dq$}: consists of $q^2$ submatrices of size $d \times d$ each, the $(r,s)^{th}$ submatrix given by $\cov \left( \sum_{\ell \ge 0} C_{\ell} y_{t-r-\ell}, \sum_{\ell' \ge 0} C_{\ell'} y_{t-s+h-\ell'} \right) = \sum_{\ell , \ell' \ge 0} C_\ell \Gamma_y(h+r-s+\ell-\ell') C_{\ell'}$, for $1 \le r, s \le q$.  
	\end{enumerate}
	Similarly, the spectral density $f_z(\theta) = (1/2\pi) \sum_{h = -\infty}^\infty \Gamma_z(h) e^{-i h \theta}$, for any $\theta \in [-\pi, \pi]$ can be partitioned into four blocks as follows:\\
	\underline{\textit{Block (1,1)}}: the $(r,s)^{th}$ submatrix, for $1 \le r \le p$, $1 \le s \le q$, is given by \begin{equation*}
		\frac{1}{2\pi} \sum_{h = -\infty}^\infty \Gamma_y(h+r-s) e^{-ih\theta} = e^{i(r-s)\theta} f_y(\theta)
	\end{equation*}
	\underline{\textit{Block (1,2)}}: the $(r,s)^{th}$ submatrix, for $1 \le r \le p$, $1 \le s \le q$, is given by 
	\begin{eqnarray*}
		&~& \frac{1}{2\pi} \sum_{h = -\infty}^\infty \sum_{\ell \ge 0} \Gamma_y(r-s+h-\ell)C_\ell^\top e^{-ih\theta} \\
		&=& \sum_{\ell \ge 0} \left[ \frac{1}{2\pi} \sum_{h=-\infty}^\infty \Gamma_y(r-s+h - \ell) e^{-i(r-s+h-\ell)\theta} \right]C_\ell^\top e^{i(r-s-\ell)\theta } \\
		&=& f_y(\theta) \mathcal{C}^* (e^{i\theta}) e^{i(r-s)\theta}.
	\end{eqnarray*}
	\underline{\textit{Block (2,1)}}: the $(r,s)^{th}$ submatrix, for $1 \le r \le q$, $1 \le s \le p$ is given by 
	\begin{eqnarray*}
		&~& \frac{1}{2\pi} \sum_{\ell \ge 0} C_\ell \Gamma_y(r-s+h+\ell) e^{-ih\theta} \\
		&=& \sum_{\ell \ge 0} \left[ \frac{1}{2\pi} \sum_{h=-\infty}^{\infty} \Gamma_y(r-s+h+\ell) e^{-i(h+r-s+\ell) \theta } \right] e^{i(r-s+\ell)\theta } \\
		&=& e^{i(r-s)\theta } \left( \sum_{\ell \ge 0} C_\ell e^{i\ell \theta } \right) f_y(\theta) = e^{i(r-s)\theta} \mathcal{C}(e^{i\theta}) f_y(\theta).
	\end{eqnarray*}
	\underline{\textit{Block (2,2)}}: the $(r,s)^{th}$ submatrix, for $1 \le r \le q$, $1 \le s \le q$ is given by 
	\begin{eqnarray*}
		&~& \frac{1}{2\pi} \sum_{h=-\infty}^\infty \sum_{\ell, \ell' \ge 0} C_\ell \Gamma_y(h+r-s+\ell-\ell')C_{\ell'}^\top e^{-ih\theta} \\
		&=& \sum_{\ell, \ell' \ge 0} C_\ell \left(\frac{1}{2\pi} \sum_{h=-\infty}^\infty \Gamma_y(h+r-s+\ell-\ell') e^{-i(h+r-s+\ell-\ell')\theta} \right) 
		C_{\ell'}^\top e^{i(r-s+\ell-\ell')\theta } 
		\\
		&=& e^{i(r-s)\theta } \left(\sum_{\ell \ge 0} C_\ell e^{i\ell \theta} \right) f_y(\theta) \left(\sum_{ \ell' \ge 0} C^\top_{\ell'} e^{-i\ell'\theta} \right) 
		= e^{i(r-s)\theta} \mathcal{C}(e^{i\theta})f_y(\theta) \mathcal{C}^* (e^{i\theta}).
	\end{eqnarray*}
	
	Let $v_p$ and $v_q$ denote the vectors $[e^{i\theta}, \ldots, e^{ip\theta}]^\top$ and $[e^{i\theta}, \ldots, e^{iq\theta}]^\top$ respectively. Then the four blocks of $f_z(\theta)$ can be expressed as   $\left(v_p v_p^* \right) \otimes f_y(\theta)$, $\left(v_p v_q^* \right) \otimes \left(f_y(\theta) \mathcal{C}^*(e^{i\theta}) \right)$, $\left(v_q v_p^* \right) \otimes \left(\mathcal{C}(e^{i\theta}) f_y(\theta) \right)$ and $\left(v_q v_q^* \right) \otimes \left(\mathcal{C}(e^{i\theta}) f_y(\theta) \mathcal{C}^*(e^{i\theta})\right)$ respectively. Since $\|v_p \| = \sqrt{v_p^* v_p} = \sqrt{p}$, and $\|v_q \| = \sqrt{q}$, and $\| A \otimes B \| = \|A\| \|B\|$, by the norm compression inequality we obtain 
	\begin{eqnarray*}
		\vertiii{f_z} &\le& \left\| \left[ \begin{array}{cc}p \vertiii{f_y} & \sqrt{pq} \vertiii{f_y} \vertiii{\mathcal{C}} \\ \sqrt{pq} \vertiii{f_y} \vertiii{\mathcal{C}} & q \vertiii{f_y} \vertiii{\mathcal{C}}^2 \end{array} \right] \right\| \\
		&=& \vertiii{f_y} \left( p + q \vertiii{\mathcal{C}}^2 \right).
	\end{eqnarray*}
\end{proof}

\begin{lemma}[Controlling $\hat{\varepsilon}_t - \varepsilon_t$]\label{lem:ehat-e}
	Let $\{x_t\}$ be a $d$-dimensional, centered, stable Gaussian time series, and let $z_t = \mathcal{B}(L)(\hat{\varepsilon}_t - \varepsilon_t)$, where $\mathcal{B}(L)$ is a finite order lag polynomial of degree $q$ and $\{ \hat{\varepsilon}_t - \varepsilon_t \}$, $t = 1, , \ldots, n+q$ is a sequence of $d$-dimensional random vectors satisfying $\sum_{t=1}^{n+q} \| \hat{\varepsilon}_t - \varepsilon_t \|^2/(n+q) \le \Delta_\varepsilon^2$ on an event $\mathcal{E}$ such that $\mathbb{P}(\mathcal{E}) \ge 1 - c_0 \exp[-(c_1 A^2-1) \log d^2 \tilde{p}]$. Also, let $\{w_t\}_{t=1-j}^{n-j}$ be a sequence of random vectors given by $w_t = \hat{\varepsilon}_{t-j} - \varepsilon_{t-j}$, for some $j \in \{1, \ldots, q\}$. Consider data matrices $\mathcal{X}, \mathcal{Z}$ and $\mathcal{W}$ containing $n$ consecutive observations from the time series $x_t, z_t$ and $w_t$ respectively, and assume $n \succsim \log (d^2 \tilde{p})$. Then there exist constants $c_i > 0$ such that for any two unit vectors $u, v \in \mathbb{S}^{d-1}$, each of the following statements holds  with probability at least $1 - c_0 \exp[-(c_1 A^2-1) \log d^2 \tilde{p}]$:
	\begin{enumerate}
		\item[(i)] $\left| u^\top \left(\mathcal{X}^\top \mathcal{Z}/n \right)v \right| \le \left[ 2\pi \vertiii{f_x}  \left(1+A \, \sqrt{\log d^2 \tilde{p} \, / n} \right) \right]^{1/2} \sqrt{(1+q/n)} \Delta_\varepsilon \| \mathcal{B} \|_{2,1}$;
		\item[(ii)] $\left| u^\top \left(\mathcal{W}^\top \mathcal{Z}/n \right)v \right| \le \left(1+q/n \right)  \Delta_\varepsilon^2 \| \mathcal{B} \|_{2,1}$.  
	\end{enumerate}
\end{lemma}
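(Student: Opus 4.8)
The plan is to reduce both bounds to two elementary ingredients: the deterministic $\ell_2$-budget $\sum_{t=1}^{n+q}\|\hat\varepsilon_t-\varepsilon_t\|^2 \le (n+q)\Delta_\varepsilon^2$ that holds on the event $\mathcal{E}$, and a single Gaussian quadratic-form concentration for the genuinely Gaussian series $x_t$. Writing $\xi_t := \hat\varepsilon_t-\varepsilon_t$ and $\mathcal{B}(L)=\sum_{m=0}^q B_m L^m$, the conceptual point is that $z_t=\sum_{m=0}^q B_m\xi_{t-m}$ is \emph{not} a Gaussian time series (it is a linear filter of the data-dependent residual gaps), so Proposition \ref{lem:yAy} cannot be applied to it directly. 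Instead I will always peel off the $\xi$-factors via Cauchy--Schwarz and absorb them into $\Delta_\varepsilon$, applying Gaussian concentration only to the $x$-factor.

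For part (ii), expand $u^\top(\mathcal{W}^\top\mathcal{Z}/n)v = \frac1n\sum_s\sum_{m=0}^q (u^\top\xi_{s-j})(\xi_{s-m}^\top B_m^\top v)$. Bounding each factor by Cauchy--Schwarz gives $\le \frac1n\sum_{m=0}^q\|B_m\|\sum_s\|\xi_{s-j}\|\,\|\xi_{s-m}\|$, and a further Cauchy--Schwarz in $s$ yields $\sum_s\|\xi_{s-j}\|\,\|\xi_{s-m}\| \le \sum_{t}\|\xi_t\|^2$. The shifts $s-j$ and $s-m$ keep all indices within a window of $n+q$ consecutive integers, so on $\mathcal{E}$ this is at most $(n+q)\Delta_\varepsilon^2$. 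Summing over $m$, using $\sum_m\|B_m\|=\|\mathcal{B}\|_{2,1}$ and $(n+q)/n=1+q/n$, gives the claimed $(1+q/n)\Delta_\varepsilon^2\|\mathcal{B}\|_{2,1}$. This part is purely deterministic on $\mathcal{E}$ and so inherits exactly its probability.

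For part (i), write $u^\top(\mathcal{X}^\top\mathcal{Z}/n)v = \frac1n\sum_{m=0}^q\sum_s(u^\top x_s)(\xi_{s-m}^\top B_m^\top v)$ and apply Cauchy--Schwarz in $s$ for each $m$. The $\xi$-factor is bounded as before by $\big(\sum_s(\xi_{s-m}^\top B_m^\top v)^2\big)^{1/2}\le\|B_m\|\sqrt{n+q}\,\Delta_\varepsilon=\|B_m\|\sqrt{n}\sqrt{1+q/n}\,\Delta_\varepsilon$, while the remaining factor is the Gaussian quadratic form $\big(\sum_s(u^\top x_s)^2\big)^{1/2}=\sqrt{n}\,\big(u^\top(\mathcal{X}^\top\mathcal{X}/n)u\big)^{1/2}$. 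Here I invoke Proposition \ref{lem:yAy} with $\mathcal{A}=\mathcal{B}=I$ (so $\vertiii{\mathcal{A}}=\vertiii{\mathcal{B}}=1$ and $\Gamma_{X,Y}(0)=\Gamma_x(0)$), combined with $u^\top\Gamma_x(0)u\le\|\Gamma_x(0)\|\le 2\pi\vertiii{f_x}$, and choose the deviation level $\eta\asymp A\sqrt{\log d^2\tilde{p}/n}$ so that on a high-probability event $u^\top(\mathcal{X}^\top\mathcal{X}/n)u\le 2\pi\vertiii{f_x}\big(1+A\sqrt{\log d^2\tilde{p}/n}\big)$. Combining the two factors, summing over $m$, and cancelling the $n$'s yields part (i).

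The main obstacle---and the only genuinely probabilistic step---is controlling this Gaussian quadratic form; everything else is Cauchy--Schwarz plus the $\ell_2$-budget on $\xi_t$. Two bookkeeping points need care. First, only the single pair $(u,u)$ enters the concentration (no union bound over coordinates), so the $\sqrt{\log d^2\tilde{p}/n}$ rate arises from the choice of $\eta$ rather than from a covering argument; the condition $n\succsim\log d^2\tilde{p}$ guarantees $\eta\le 1$, hence $\min\{\eta,\eta^2\}=\eta^2$ and an exponent scaling as $A^2\log d^2\tilde{p}$. Second, the final probability follows by intersecting $\mathcal{E}$ with this concentration event and absorbing the resulting constants into the universal $c_i$, which reproduces the stated form $1-c_0\exp[-(c_1A^2-1)\log d^2\tilde{p}]$.
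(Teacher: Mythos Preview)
Your proposal is correct and follows essentially the same approach as the paper's proof: for each lag $m$ you apply Cauchy--Schwarz in the time index to separate the Gaussian factor $\frac1n\sum_t(u^\top x_t)^2$ from the residual-gap factor, bound the latter deterministically on $\mathcal{E}$ via $\sum_t\|\xi_{t-m}\|^2\le(n+q)\Delta_\varepsilon^2$, and control the former by a single quadratic-form concentration yielding $2\pi\vertiii{f_x}(1+A\sqrt{\log d^2\tilde p/n})$. The only cosmetic difference is that in part (ii) you bound $|u^\top\xi_{s-j}|\le\|\xi_{s-j}\|$ pointwise before applying Cauchy--Schwarz in $s$, whereas the paper applies Cauchy--Schwarz in $t$ first and then bounds $\frac1n\sum_t(u^\top\xi_{t-j})^2\le(1+q/n)\Delta_\varepsilon^2$; both orderings give the same final bound.
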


\begin{proof}
	In order to prove (i), note that 
	\begin{eqnarray*}
		u^\top \left(\mathcal{X}^\top \mathcal{Z}/n \right)v &=& \frac{1}{n} \sum_{t=1}^n \left(u^\top x_t \right) \left[v^\top \sum_{k=0}^q B_k \left( \hat{\varepsilon}_{t-k} - \varepsilon_{t-k} \right) \right] \\
		&=& \sum_{k=0}^q \frac{1}{n} \sum_{t=1}^n \left(u^\top x_t \right) \left( v^\top B_k (\hat{\varepsilon}_{t-k} - \varepsilon_{t-k}) \right) \\
		&\le& \sum_{k=0}^q \left[\frac{1}{n} \sum_{t=1}^n \left(u^\top x_t \right)^2 \right]^{1/2} \left[ \frac{1}{n} \sum_{t=1}^n \left(v^\top B_k (\hat{\varepsilon}_{t-k} - \varepsilon_{t-k}) \right)^2 \right]^{1/2}.
	\end{eqnarray*}
	Using the concentration inequality in Proposition 2.4 and the upper bound on the spectral norm of population covariance matrix in Proposition 2.3 of  \citep{Basu15}, square of the first term in each summand is at most $2\pi \vertiii{f_x} (1+A\sqrt{\log d^2 \tilde{p} /n})$ with probability at least $1 - c_0 \exp \left[-(c_1A^2-1) \log d^2 \tilde{p} \right]$. Also, using the Cauchy-Schwarz inequality, square of the second term in the $k^{th}$ summand above satisfies, on the event $\mathcal{E}$, 
	\begin{equation*}
		\frac{1}{n} \sum_{t=1}^n \left[ \left(v^\top B_k (\hat{\varepsilon}_{t-k} - \varepsilon_{t-k}) \right)^2 \right] \le \frac{1}{n} \sum_{t=1}^n \|B_k \|^2 \| \hat{\varepsilon}_{t-k} - \varepsilon_{t-k} \|^2 = \|B_k \|^2 (1+q/n) \Delta_\varepsilon^2.
	\end{equation*}
	Together, this implies $|u^\top (\mathcal{X}^\top \mathcal{Z}/n)v|$ is upper bounded by \\
	$2\pi \vertiii{f_x} \sqrt{1+q/n}$ $ (1+A \sqrt{\log d^2 \tilde{p} / n})^{1/2} \left(\sum_{k=0}^q \|B_k\| \right) \Delta_{\varepsilon}$ with the specified  probability.
	\smallskip
	
	In order to prove (ii), note that 
	\begin{eqnarray*}
		&& u^\top \left(\mathcal{W}^\top \mathcal{Z}/n \right) v = \frac{1}{n} \sum_{t=1}^n \left(u^\top (\hat{\varepsilon}_{t-j} - \varepsilon_{t-j}) \right) \left( v^\top \sum_{k=0}^q B_k (\hat{\varepsilon}_{t-k} - \varepsilon_{t-k}) \right) \\
		&& \le \sum_{k=0}^q \left[ \frac{1}{n} \sum_{t=1}^n \left(u^\top (\hat{\varepsilon}_{t-j} - \varepsilon_{t-j}) \right)^2 \right]^{1/2} \left[ \frac{1}{n} \sum_{t=1}^n \left( v^\top B_k (\hat{\varepsilon}_{t-k} - \varepsilon_{t-k}) \right)^2 \right]^{1/2}.
	\end{eqnarray*}
	Using the argument above, we can check that on the event $\mathcal{E}$, the square of the first term in each summand is at most $(1+q/n) \Delta_{\varepsilon}^2$ and the square of the second term in the $k^{th}$ summand is at most $(1+q/n) \|B_k\|^2 \Delta_\varepsilon^2$. Putting things together, the right hand side of the above inequality is bounded above by $(1+q/n) \left( \sum_{k=0}^q \|B_k\| \right) \Delta_\varepsilon^2$. 
\end{proof}

\begin{lemma}\label{lem:conc_S_z}
	Consider $\hat{\varepsilon}_t$ and $\Delta_\varepsilon$ as in Lemma \ref{lem:ehat-e}, and $\Delta_a$ as defined in \eqref{eqn:delta_a}. Let $\tilde{\mathcal{Z}}$ be a data matrix consisting of $n$ consecutive observations from the time series  
	\newline
	$[y_{t-1}^\top, y_{t-2}^\top, \ldots, y_{t-p}^\top, \hat{\varepsilon}_{t-1}^\top, \hat{\varepsilon}_{t-2}^\top, \ldots, \hat{\varepsilon}_{t-q}^\top]^\top$, and $\mathcal{Z}$ a data matrix for \newline $\{z_t\} = [y_{t-1}^\top, y_{t-2}^\top, \ldots, y_{t-p}^\top, a_{t-1}^\top, a_{t-2}^\top, \ldots, a_{t-q}^\top]^\top$. Assume $n \succsim \log (d^2(p+q))$ and $\tilde{p} \ge p+q$. Then there exist universal constants $c_i > 0$ such that for any $u, v \in \mathbb{S}^{d(p+q)-1}$, with probability at least $1 - c_0 \exp[-(c_1 A^2-1)  \log d^2(p+q)]$, the following holds:
	\begin{align*}
		\left|u^\top \left(\tilde{\mathcal{Z}}^\top \tilde{\mathcal{Z}}/n - \Gamma_z(0) \right)v \right| &\le 2\pi \vertiii{f_z}(1+A \sqrt{\log d^2(p+q)/n}) + q(1+q/n)\Delta_a^2 \\
		+ 2 &\, \left[2\pi \vertiii{f_z} (1+A \sqrt{\log d^2(p+q)/n}) q(1+q/n) \right]^{1/2} \Delta_a.
	\end{align*}
\end{lemma}

\begin{proof}
	We begin by re-writing $\tilde{z}_t$ as $z_t + w_t$, where \\
	$w_t = \left[0^\top, \ldots, 0^\top, (\hat{\varepsilon}_{t-1} - a_{t-1})^\top, \ldots, (\hat{\varepsilon}_{t-q} - a_{t-q})^\top \right]^\top$. Then the following decomposition holds:
	\begin{eqnarray*}
		\left|u^\top \left(\tilde{\mathcal{Z}}^\top \tilde{\mathcal{Z}}/n - \Gamma_z(0)  \right)v \right| &\le&  \left|u^\top \left(\mathcal{Z}^\top \mathcal{Z}/n - \Gamma_z(0)  \right)v \right| + \left|u^\top \left(\mathcal{Z}^\top \mathcal{W}/n  \right)v \right|\\
		&+& \left|v^\top \left(\mathcal{Z}^\top \mathcal{W}/n  \right)u \right| + \left|u^\top \left(\mathcal{W}^\top \mathcal{W}/n \right)v \right|.
	\end{eqnarray*}
	Using Proposition 2.4 in \citep{Basu15}, we can obtain a high probability upper bound on the first term on the right hand side in terms of $\vertiii{f_z}$.
	In order to control the second and third terms, assume $v = [v_1^\top, v_2^\top, \ldots, v_{p+q}^\top]^\top$, where each $v_j \in \mathbb{R}^d$, and note that 
	\begin{eqnarray*}
		&~& \left|u^\top \left(\mathcal{Z}^\top \mathcal{W}/n \right)v \right| = \left|\frac{1}{n} \sum_{t=1}^n (u^\top z_t) (v^\top w_t) \right| \\
		&\le& \left[ \frac{1}{n} \sum_{t=1}^n (u^\top z_t)^2 \right]^{1/2} \left[ \frac{1}{n} \sum_{t=1}^n (v^\top w_t)^2 \right]^{1/2} \\
		&\le& \left[2\pi \vertiii{f_z} (1+A\sqrt{\log d(p+q)/n)} \right]^{1/2} \left[ \sum_{k=0}^q \frac{1}{n} \sum_{t=1}^n \left(v_{p+k}^\top (\hat{\varepsilon}_{t-k} - a_{t-k}) \right)^2 \right]^{1/2} \\
		&\le& \left[2\pi \vertiii{f_z} (1+A\sqrt{\log d(p+q)/n}) \right]^{1/2} \left[ \sum_{k=0}^q (1 + q/n) \| v_{p+k} \|^2 \Delta_a^2 \right]^{1/2}.
	\end{eqnarray*}
	The result follows by using the fact that on the event $\mathcal{E}$, square of the second term in the above product is upper bounded by $q(1+q/n) \Delta_{a}^2$, and noting that $u^\top (\mathcal{W}^\top \mathcal{W}/n)v = \frac{1}{n}\sum_{t=1}^n (u^\top w_t)(v^\top w_t) \le q(1+q/n) \Delta_a^2$. 
\end{proof}

\section{Proof of Proposition \ref{prop:elastic-net-fixed-X-E-v2} (Elastic Net)}\label{sec:pf-elnet}

\begin{proof}[Proof of Proposition \ref{prop:elastic-net-fixed-X-E-v2}]
	
	Set $\hat{\beta} \leftarrow \hat{\beta}^{(\alpha)}$,  
	and define $\beta^*_P:= \mathcal{P}_{\mathcal{S}} (\hat{\beta})$, the projection of $\hat{\beta}$ onto the affine space  $\mathcal{S}:= \left\{ \beta: \Sigma \beta = \rho \right\}$. Note that $\beta^* \in \mathcal{S}$. Set $v = \hat{\beta} - \beta^*$, $v_1 = \hat{\beta} - \beta^*_P$, $v_2 = \beta^*_P - \beta^*$. Then $v_2 \in \mathcal{N}(\Sigma)$, $v_1 \perp \mathcal{N}(\Sigma)$, and $\|v\|^2 = \|v_1\|^2+\|v_2\|^2$. Consider 
	\begin{equation*}
		\hat{\beta} \in \argmin_{\beta \in \mathbb{R}^{\bar{d}}} \frac{1}{n} \|Y - X \beta\|^2 + \lambda \left( \| \beta\|_1 + \frac{\alpha}{2 } \| \beta\|^2 \right) \mbox{ subject to } \|\beta\|_1 \le M
	\end{equation*}
	where $M \ge \| \beta^*\|_1$.
	
	Start with the basic inequality
	\begin{equation*}
		\frac{1}{n} \left\|Y - X \hat{\beta} \right\|^2 + \lambda \left( \| \hat{\beta} \|_1 + \frac{\alpha}{2 } \| \hat{\beta} \|^2 \right) \le \frac{1}{n} \left\|Y - X \beta^* \right\|^2 + \lambda \left( \| \beta^* \|_1 + \frac{\alpha}{2 } \| \beta^* \|^2 \right)
	\end{equation*}
	This implies
	\begin{equation*}
		\frac{1}{n} \| Xv \|^2 - \frac{2}{n} v^\top X^\top \varepsilon \le \lambda \left[ \left(  \| \beta^*\|_1 - \| \beta^* + v\|_1\right) + \frac{\alpha}{2 } \left( \| \beta^*\|^2 - \| \beta^* + v\|^2 \right)  \right]
	\end{equation*}
	Since $\|X^\top \varepsilon/n \|_\infty \le \lambda/2$, moving the second term to the right  we get 
	\begin{equation*}
		v^\top \left(X^\top X/n \right) v \le \lambda \|v\|_1 + \lambda \left[ \left(  \| \beta^*\|_1 - \| \beta^* + v\|_1\right) + \frac{\alpha}{2 } \left( \| \beta^*\|^2 - \| \beta^* + v\|^2 \right)  \right],
	\end{equation*}
	which in turn implies, by triangle inequality,
	\begin{equation*}
		v^\top \left(X^\top X/n \right) v \le \lambda \left[ 2\| \beta^*\|_1 + \frac{\alpha}{2 } \| \beta^*\|^2  \right] \le \lambda \left[ 2M + \alpha M^2/2  \right].
	\end{equation*}
	This implies 
	\begin{eqnarray*}
		v^\top \Sigma v &=& v^\top \left( \Sigma - X^\top X/n \right) v + v^\top \left(X^\top X/n \right) v\\
		&\le& \left\| \Sigma - X^\top X/n \right\|_\infty \|v\|^2_1 + \lambda \left[ 2M + \alpha M^2/2 \right] \\
		&\le& 4 q_n M^2 + \lambda \left[ 2M + \alpha M^2 / 2 \right], \mbox{ since } \|v\|_1 \le \| \hat{\beta}\|_1 + \| \beta^*\|_1 \le 2M.
	\end{eqnarray*}
	By the orthogonal decomposition $v = v_1+v_2$, we have 
	\begin{equation*}
		v^\top \Sigma v = v_1^\top \Sigma v_1 \ge \Lambda_{\min}^+\left(\Sigma \right) \|v_1\|^2.
	\end{equation*}
	Combining the above two inequalities, 
	\begin{equation*}
		\|v_1 \|^2 = \| \hat{\beta} - \beta^*_P\|^2  \le \frac{4 q_n M^2 + \lambda \left[ 2M + \alpha M^2 /2  \right]}{\Lambda_{\min}^+\left(\Sigma \right)}.
	\end{equation*}
	We restate the point identification result of part (c) in the form of a complete proposition \ref{prop:full-version-of-prop}.	
\end{proof}

\begin{proposition}
	\label{prop:full-version-of-prop} 
	Let $\Sigma\in\real^{D\times D}$ be a non-negative definite matrix with $\Lambda_{\min}(\Sigma) = 0$ and let $\rho\in\real^{D}$ be in the column space of $\Sigma$.  Consider the linear regression model  $y_{N \times 1} = X_{N \times {D}} \tba_{{D} \times 1} + \varepsilon_{N \times 1}$ with identified target
	\begin{equation*}
		\tba:= \argmin_{\beta} \left\{ \mathcal{P}_{\alpha} (\beta) \st \Sigma \beta = \rho \right\},
	\end{equation*}
	where $\mathcal{P}_{\alpha}(\beta) := \|\beta\|_1 + (\alpha/2) \|\beta\|^2$,
	and let
	\begin{equation*}
		\hba := \argmin_{\beta} \left\{ \mathcal{P}_{\alpha} (\beta) \st \frac{1}{n} \|y-X \beta\|^2 \le A_n, ~~ \|\beta\|_1 \le M \right\}
	\end{equation*}
	be the estimator.  On the event 
	\begin{equation*}
		\mathcal{E}:= \left\{ \left\|X^\top X/n - \Sigma \right\|_\infty \le q_n, \frac1{n}\left\|X^\top \varepsilon \right\|_\infty \le r_n, \left| \frac1{n}\left\| \varepsilon \right\|^2 - \sigma^2 \right| \le s_n \right\}
	\end{equation*}
	and choosing $A_n =\sigma^2 + s_n$ and $M\ge\|\tba\|_1$, the following holds:
	\begin{eqnarray*}
		&& \left\|\hba - \tba\right\|^2 \le 2v_n + 2(\sqrt{{D}}/\alpha+ M)v_n^{1/2}, \nonumber
	\end{eqnarray*}
	
	where $v_n:=\frac{4Mr_n + 2s_n + 4M^2q_n}{\Lambda_{\min}^+(\Sigma)}$
	and $\Lambda_{\min}^+(\Sigma)$ is the smallest non-zero eigenvalue of $\Sigma$. 
\end{proposition}

\begin{proof}[Proof of Proposition \ref{prop:full-version-of-prop}]	
	The estimator can be written as
	\begin{equation}\label{eqn:hba}
		\hba := \argmin_{\beta} \left\{ \mathcal{P}_{\alpha} (\beta) \st \beta\in\mathcal A_n, ~~ \|\beta\|_1 \le M \right\}, 
	\end{equation}
	where $\mathcal A_n=\{\beta:\frac{1}{n} \|y-X \beta\|^2 \le A_n\}$.  Our proof consists of a series of lemmas.
	We begin by relating the estimator's constraint set to the equivalence class of parameters that could have generated the data.
	\begin{lemma}\label{lem:contained}
		If $A_n \ge \sigma^2 + s_n$, then $\tba \in \mathcal A_n$ on the event $\mathcal{E}$.
	\end{lemma}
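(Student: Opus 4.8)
The plan is essentially to unwind the definitions, since the lemma is a short bookkeeping step rather than a substantive estimate. The key observation is that the assumed linear model $y = X\tba + \varepsilon$ means the residual of the fit using the target $\tba$ is exactly the noise vector, i.e.\ $y - X\tba = \varepsilon$. Hence the quantity appearing in the definition of $\mathcal A_n$, evaluated at $\beta = \tba$, collapses to
\begin{equation*}
\frac{1}{n}\left\|y - X\tba\right\|^2 = \frac{1}{n}\|\varepsilon\|^2 .
\end{equation*}
The remaining step is to control $\tfrac1n\|\varepsilon\|^2$ using the event $\mathcal E$. The third clause defining $\mathcal E$ supplies $\left|\tfrac1n\|\varepsilon\|^2 - \sigma^2\right| \le s_n$, and in particular $\tfrac1n\|\varepsilon\|^2 \le \sigma^2 + s_n$. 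Therefore, under the hypothesis $A_n \ge \sigma^2 + s_n$, I would conclude $\tfrac1n\|y - X\tba\|^2 \le A_n$, which is precisely the membership $\tba \in \mathcal A_n$ on $\mathcal E$.

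There is no real obstacle here: the entire content is the identity $y - X\tba = \varepsilon$ together with the noise-concentration clause of $\mathcal E$, so the proof is two lines. The reason to isolate this statement as a lemma is structural rather than technical. It certifies that the identification target $\tba$ is feasible for the constrained program \eqref{eqn:hba} defining $\hba$ on the event $\mathcal E$, and this feasibility is exactly what will later license comparing $f_\alpha(\hba)$ against $f_\alpha(\tba)$. In other words, it plays the role of the ``basic inequality'' step here expressed through the constraint form of the estimator rather than its penalized form, and it is the place where the choice $A_n = \sigma^2 + s_n$ first enters.
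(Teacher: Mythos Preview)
Your proof is correct and essentially identical to the paper's: both simply note that $y - X\tba = \varepsilon$ and then invoke the third clause of $\mathcal{E}$ to bound $\tfrac1n\|\varepsilon\|^2 \le \sigma^2 + s_n \le A_n$. The paper's one-line proof (somewhat oddly attributed to ``the triangle inequality'') is exactly this observation.
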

	\begin{proof}
		By the triangle inequality,
		$$
		\frac{1}{n} \|y-X \tba\|^2=\frac{1}{n} \|\varepsilon\|^2\le \sigma^2 + s_n.
		$$
	\end{proof}
	
	Our next lemma is a result about our estimator's in-sample prediction performance.
	\begin{lemma}\label{lem:pred-X}
		If we choose $A_n = \sigma^2 + s_n$, then on the event $\mathcal{E}$,  
		$$\frac1{n}\| X(\hba - \tba)\|^2\le 4Mr_n + 2s_n.
		$$
	\end{lemma}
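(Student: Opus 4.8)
The plan is to exploit the constraint defining $\hba$ directly, rather than a first-order optimality condition, together with the containment established in Lemma \ref{lem:contained}. First I would invoke Lemma \ref{lem:contained}: with the choice $A_n = \sigma^2 + s_n$, the target $\tba$ is feasible for the estimator's constraint on the event $\mathcal{E}$, and since $\hba$ itself satisfies that constraint we have
$$
\frac{1}{n}\left\|y - X\hba\right\|^2 \le A_n = \sigma^2 + s_n.
$$

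Next I would substitute the model $y = X\tba + \varepsilon$ and set $\Delta := \hba - \tba$, so that $y - X\hba = \varepsilon - X\Delta$. Expanding the square gives
$$
\frac{1}{n}\|\varepsilon\|^2 - \frac{2}{n}\varepsilon^\top X\Delta + \frac{1}{n}\|X\Delta\|^2 \le \sigma^2 + s_n.
$$
On $\mathcal{E}$ the bound $\left|\frac{1}{n}\|\varepsilon\|^2 - \sigma^2\right| \le s_n$ gives $\frac{1}{n}\|\varepsilon\|^2 \ge \sigma^2 - s_n$, which I would use to cancel $\sigma^2$ and obtain
$$
\frac{1}{n}\|X\Delta\|^2 \le 2 s_n + \frac{2}{n}\varepsilon^\top X\Delta.
$$

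The remaining step is to control the cross term $\frac{2}{n}\varepsilon^\top X\Delta = \frac{2}{n}(X^\top\varepsilon)^\top\Delta$ via H\"older's inequality, $\frac{2}{n}\varepsilon^\top X\Delta \le 2\left\|X^\top\varepsilon/n\right\|_\infty\|\Delta\|_1$. On $\mathcal{E}$ the first factor is at most $r_n$, and since both $\hba$ and $\tba$ lie in the $\ell_1$-ball of radius $M$ (for $\tba$ this uses the hypothesis $M \ge \|\tba\|_1$), the triangle inequality gives $\|\Delta\|_1 \le 2M$. Hence $\frac{2}{n}\varepsilon^\top X\Delta \le 4Mr_n$, and combining with the previous display yields $\frac{1}{n}\|X\Delta\|^2 \le 4Mr_n + 2s_n$, as claimed.

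I do not anticipate a genuine obstacle here: this is a ``slow-rate''-style prediction bound that sidesteps any restricted eigenvalue assumption on $X$. The only points requiring care are that the argument rests on feasibility of $\tba$ (Lemma \ref{lem:contained}) rather than on optimality of the loss, and that the side constraint $\|\beta\|_1 \le M$ is precisely what delivers the bound $\|\Delta\|_1 \le 2M$ needed to tame the cross term; without that constraint this last step would fail.
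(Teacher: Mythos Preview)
Your proof is correct and follows essentially the same route as the paper's: both start from the constraint $\frac{1}{n}\|y-X\hba\|^2\le A_n$, expand around $y=X\tba+\varepsilon$, use the event $\mathcal{E}$ to replace $\frac{1}{n}\|\varepsilon\|^2$ by $\sigma^2\pm s_n$, and then control the cross term via H\"older and the $\ell_1$ bound $\|\hba-\tba\|_1\le 2M$. The only cosmetic difference is that you invoke Lemma~\ref{lem:contained} up front, whereas the paper does not; feasibility of $\tba$ in the squared-loss constraint is not actually used in this step (only the hypothesis $M\ge\|\tba\|_1$ is), so that invocation is harmless but unnecessary here.
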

	\begin{proof}
		We rewrite the inequality ${\frac1{n}\| y - X\hba\|^2 \leq A_n}$
		as $$\| X(\hba - \tba)\|^2 \leq n(A_n - \frac1{n}\|\varepsilon\|^2) + 2\varepsilon^\top X(\tba - \hba).$$
		Our choice of $A_n$ means that 
		$$A_n - \frac1{n}\|\varepsilon\|^2= \sigma^2 + s_n- \frac1{n}\|\varepsilon\|^2\le  \left|\sigma^2- \frac1{n}\|\varepsilon\|^2\right|  + s_n.$$
		On the event $\mathcal E$, we know that this is bounded by $2s_n$. Thus, 
		\begin{align*}
			\|X(\hba-\tba)\|^2\le 2ns_n + 2\varepsilon^\top X(\tba - \hba).
		\end{align*}
		Furthermore,
		\begin{align*}
			\|X(\hba-\tba)\|^2\le2ns_n + 2\|X^\top\varepsilon\|_\infty\cdot \| \tba - \hba\|_1.
		\end{align*}
		Dividing both sides by $n$ and 
		recalling the definition of $r_n$ (through the event $\mathcal E$) gives
		\begin{align*}
			\frac1{n}\|X(\hba - \tba)\|^2\le2s_n + 2r_n\| \tba - \hba\|_1.
		\end{align*}
		The triangle inequality and recalling that both vectors are bounded by $M$ in $\ell_1$ norm gives
		$$
		\frac1{n}\| X(\hba - \tba)\|^2 \le 4Mr_n + 2s_n.
		$$
	\end{proof}

	Our next lemma extends this prediction result from $X$ to $\Sigma^{1/2}$.
	\begin{lemma}
		If we choose $A_n = \sigma^2 + s_n$, then on the event $\mathcal{E}$,  
		$$\|\Sigma^{1/2}(\hba-\tba)\|^2\le 4Mr_n + 2s_n + 4M^2q_n.$$
	\end{lemma}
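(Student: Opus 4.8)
The plan is to bound the quadratic form $\|\Sigma^{1/2}(\hba-\tba)\|^2 = (\hba-\tba)^\top\Sigma(\hba-\tba)$ by splitting it into the empirical Gram-matrix piece, which is already controlled by the preceding lemma, plus a deviation piece controlled by the event $\mathcal{E}$. Writing $v := \hba - \tba$, I would start from the exact decomposition
$$
v^\top \Sigma v = \frac1n \|Xv\|^2 + v^\top\!\left(\Sigma - X^\top X/n\right)v.
$$

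The first term on the right is precisely $\frac1n\|X(\hba-\tba)\|^2$, for which Lemma \ref{lem:pred-X} already gives the bound $4Mr_n + 2s_n$ on the event $\mathcal{E}$ (under the stated choice $A_n = \sigma^2 + s_n$). For the second term I would invoke a H\"older-type inequality for the element-wise maximum norm: for any symmetric matrix $A$ and vector $v$, one has $|v^\top A v| \le \|A\|_\infty \|v\|_1^2$. Since $v^\top(\Sigma - X^\top X/n)v$ may be negative, the point is that we are upper-bounding $v^\top\Sigma v$, so we may freely replace this term by its absolute value. On $\mathcal{E}$ we have $\|X^\top X/n - \Sigma\|_\infty \le q_n$, so the deviation term is at most $q_n\|v\|_1^2$.

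It then remains to control $\|v\|_1$. Both $\hba$ and $\tba$ satisfy the $\ell_1$ constraint $\|\cdot\|_1 \le M$ --- for $\hba$ by the very definition of the estimator in \eqref{eqn:hba}, and for $\tba$ by the hypothesis $M\ge\|\tba\|_1$ --- so the triangle inequality yields $\|v\|_1 \le 2M$, whence $q_n\|v\|_1^2 \le 4M^2 q_n$. Summing the two bounds gives $v^\top\Sigma v \le 4Mr_n + 2s_n + 4M^2 q_n$, which is the claim.

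I do not expect a genuine obstacle here: the argument is a direct combination of the in-sample prediction bound from the previous lemma with a single H\"older estimate and the two $\ell_1$ constraints. The only point requiring mild care is the sign handling in the H\"older step (replacing the possibly-negative cross term by its absolute value when upper-bounding), and making sure the $\ell_1$ bound $M$ is correctly applied to the \emph{target} $\tba$ via the assumption $M\ge\|\tba\|_1$ rather than only to the estimator.
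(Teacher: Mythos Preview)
Your proposal is correct and follows essentially the same approach as the paper's own proof: the same decomposition $v^\top\Sigma v = \frac1n\|Xv\|^2 + v^\top(\Sigma - X^\top X/n)v$, the same application of the preceding in-sample prediction lemma, the same H\"older-type bound $|v^\top A v|\le\|A\|_\infty\|v\|_1^2$, and the same triangle-inequality argument giving $\|v\|_1\le 2M$. Your remark on sign handling is a helpful clarification that the paper leaves implicit.
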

	\begin{proof}
		Writing
		$$
		\|\Sigma^{1/2}(\hba-\tba)\|^2=\frac{1}{n}\|X(\hba-\tba)\|^2+(\hba-\tba)^\top(\Sigma-\frac1{n}X^TX)(\hba-\tba),
		$$
		we apply Lemma \ref{lem:pred-X} to get
		$$
		\|\Sigma^{1/2}(\hba-\tba)\|^2\le 4Mr_n + 2s_n +(\hba-\tba)^\top(\Sigma-\frac1{n}X^TX)(\hba-\tba).
		$$
		Now, for a matrix $A$, $v^\top A v=\sum_{ij}v_iA_{ij}v_j\le \|A\|_\infty\sum_{ij}|v_i||v_j|=\|A\|_\infty\|v\|_1^2$, and recalling the definition of $q_n$ (through the event $\mathcal E$)  we have
		$$
		\|\Sigma^{1/2}(\hba-\tba)\|^2\le 4Mr_n + 2s_n +q_n\|\hba-\tba\|_1^2.
		$$
		The result follows by the triangle inequality and that both vectors are bounded by $M$ in $\ell_1$ norm.
	\end{proof}
	At this point, we move from prediction bounds to estimation bounds.
	Our next step is to translate the previous result to a statement about our estimator not being too far from the set of possible parameters that generated our data, that is the affine space $\{\beta:\Sigma\beta=\rho\}$.
	
	\begin{lemma}\label{lem:dist-perp}
		Let $\hba_P$ denote the projection of $\hba$ onto the affine subspace $\{\beta:\Sigma\beta=\rho\}$:
		$$
		\hba_P:=\arg\min_{\beta}\left\{ \|\hba-\beta\|^2\st \Sigma\beta=\rho\right\}.
		$$
		If we choose $A_n =\sigma^2 + s_n$, then on the event $\mathcal{E}$,  
		$$
		\|\hba-\hba_P\|^2\le v_n,
		$$
		where 
		$$v_n:=\frac{4Mr_n + 2s_n + 4M^2q_n}{\Lambda_{\min}^+(\Sigma)}$$
		and $\Lambda_{\min}^+(\Sigma)$ is the smallest non-zero eigenvalue of $\Sigma$.
	\end{lemma}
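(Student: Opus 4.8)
The plan is to transfer the bound on $\|\Sigma^{1/2}(\hba - \tba)\|^2$ established in the preceding lemma into a bound on $\|\hba - \hba_P\|^2$ by exploiting the geometry of the affine subspace $\mathcal{S} := \{\beta : \Sigma\beta = \rho\}$ relative to the null space $\mathcal{N}(\Sigma)$. The key observation is that $\mathcal{S}$ is a translate of $\mathcal{N}(\Sigma)$, so displacements \emph{within} $\mathcal{S}$ lie in $\mathcal{N}(\Sigma)$, whereas displacements \emph{orthogonal} to $\mathcal{S}$ lie in the column space of $\Sigma$ --- precisely the subspace on which $\Sigma$ acts with eigenvalues bounded below by $\Lambda_{\min}^+(\Sigma)$.

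First I would set $v := \hba - \tba$ and decompose it as $v = v_1 + v_2$ with $v_1 \perp \mathcal{N}(\Sigma)$ and $v_2 \in \mathcal{N}(\Sigma)$, exactly as in the proof of Proposition \ref{prop:elastic-net-fixed-X-E-v2}. The crucial identification is that this orthogonal decomposition coincides with the split $v = (\hba - \hba_P) + (\hba_P - \tba)$. Indeed, since both $\tba$ and $\hba_P$ belong to $\mathcal{S}$, their difference $\hba_P - \tba$ lies in the direction space $\mathcal{N}(\Sigma)$; and since $\hba_P$ is the Euclidean projection of $\hba$ onto $\mathcal{S}$, the residual $\hba - \hba_P$ is orthogonal to that direction space $\mathcal{N}(\Sigma)$. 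By uniqueness of the orthogonal decomposition this forces $v_1 = \hba - \hba_P$ and $v_2 = \hba_P - \tba$, so that $\|v_1\| = \|\hba - \hba_P\|$.

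Next I would use that $\Sigma^{1/2}$ annihilates $\mathcal{N}(\Sigma)$, hence $\Sigma^{1/2} v = \Sigma^{1/2} v_1$, and
$$\|\Sigma^{1/2}(\hba - \tba)\|^2 = v_1^\top \Sigma v_1 \ge \Lambda_{\min}^+(\Sigma)\, \|v_1\|^2,$$
where the inequality holds because $v_1$ lies in the column space of $\Sigma$, i.e.\ the span of its eigenvectors with strictly positive eigenvalues. Combining this with the bound $\|\Sigma^{1/2}(\hba - \tba)\|^2 \le 4Mr_n + 2s_n + 4M^2 q_n$ from the previous lemma and dividing by $\Lambda_{\min}^+(\Sigma)$ yields $\|\hba - \hba_P\|^2 \le v_n$, as claimed.

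The main obstacle is the geometric bookkeeping of the second step: one must verify carefully that the Euclidean projection onto the affine set $\mathcal{S}$ produces a residual lying in $\mathcal{N}(\Sigma)^\perp$ (equivalently, in the column space of $\Sigma$), so that the smallest \emph{nonzero} eigenvalue $\Lambda_{\min}^+(\Sigma)$ --- rather than the vanishing eigenvalue --- controls the quadratic form $v_1^\top \Sigma v_1$. This is exactly where the singularity of $\Sigma$ is handled without harm. Once the identification of $v_1$ with $\hba - \hba_P$ is in place, the remaining estimates are immediate.
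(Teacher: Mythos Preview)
Your proof is correct and follows essentially the same geometric idea as the paper's: both reduce the distance $\|\hba-\hba_P\|$ to the component of $\hba-\tba$ lying in $\mathcal{N}(\Sigma)^\perp$, then control $\|\Sigma^{1/2}(\hba-\tba)\|^2$ from below by $\Lambda_{\min}^+(\Sigma)$ times that component's squared norm. The only difference is notational: the paper writes the projection as $\Sigma\Sigma^{+}(\hba-\tba)$ and bounds via $\|(\Sigma^{1/2})^+\|^2 = 1/\Lambda_{\min}^+(\Sigma)$, whereas you work directly with the orthogonal decomposition $v=v_1+v_2$ (as in the proof of Proposition~\ref{prop:elastic-net-fixed-X-E-v2}) and the eigenvalue inequality $v_1^\top\Sigma v_1\ge\Lambda_{\min}^+(\Sigma)\|v_1\|^2$; the two routes are interchangeable.
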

	\begin{proof}
		The distance of $\hba$ to the affine space is given by
		\begin{align*}
			\|\hba-\hba_P\|^2&=\min_{\beta}\left\{ \|\hba-\beta\|^2\st \Sigma\beta=\rho\right\}\\
			&=\min_{\delta}\left\{ \|\hba-\tba-\delta\|^2\st \Sigma\delta=0\right\}\\
			&=\|\Sigma\Sigma^{+}(\hba-\tba)\|^2
		\end{align*}
		where in the second equality we use that $\Sigma\tba=\rho$ and in the third equality we use that the row space and null space are orthogonal complements and therefore the residual after projecting onto the null space is equivalent to the projection onto the row space of $\Sigma$ (and here the row space and column space are identical).  Now, $\Sigma\Sigma^+=(\Sigma^{1/2})^+\Sigma^{1/2}$ and so
		\begin{align*}
			\|\hba-\hba_P\|^2&=\|(\Sigma^{1/2})^+\Sigma^{1/2}(\hba-\tba)\|^2\\
			&\le\|(\Sigma^{1/2})^+\|^2\|\Sigma^{1/2}(\hba-\tba)\|^2\\
			&\le\|\Sigma^{1/2}(\hba-\tba)\|^2/\Lambda_{\min}^+(\Sigma).
		\end{align*}
		The result follows from the previous lemma. 
	\end{proof}
	At this point, we have bounded the distance between our estimator and the identified target in the direction orthogonal to the affine space.  The remainder of the proof of the proposition is aimed at bounding the distance along the affine space.  To do so, we make use of the strong convexity of the objective function $\mathcal{P}_{\alpha}$.
	
	\begin{lemma}
		Under the same setup and conditions as the previous lemma,
		$$
		\mathcal{P}_{\alpha}(\hba_P) - \mathcal{P}_{\alpha}(\hba) \le (\sqrt{D}+\alpha M)v_n^{1/2} + \frac{\alpha}{2}v_n,
		$$
		where $v_n$ is defined in Lemma \ref{lem:dist-perp}.
	\end{lemma}
	\begin{proof}
		By the triangle inequality,
		$\|\hba_P\|_1-\|\hba\|_1\le\|\hba_P-\hba\|_1
		$ and 
		$\|\hba_P\|\le\|\hba\|+\|\hba_P-\hba\|$.
		Squaring this second inequality gives
		\begin{align*}
			\|\hba_P\|^2&\le\|\hba\|^2+\|\hba_P-\hba\|^2+2\|\hba\|\cdot\|\hba_P-\hba\|\\
			&\le \|\hba\|^2+\|\hba_P-\hba\|^2+2M\|\hba_P-\hba\|
		\end{align*}
		Thus,
		\begin{align*}
			\mathcal{P}_{\alpha}(\hba_P) - \mathcal{P}_{\alpha}(\hba)&\le \|\hba_P-\hba\|_1 + \frac{\alpha}{2}\left(\|\hba_P-\hba\|^2+2M\|\hba_P-\hba\|\right)\\
			&\le (\sqrt{D}+\alpha M)\|\hba_P-\hba\| + \frac{\alpha}{2}\|\hba_P-\hba\|^2.
		\end{align*}
		The result follows from the previous lemma.
	\end{proof}
	
	\begin{lemma}\label{lem:dist-parallel}
		Let $\hba_P$ be the projection of $\hba$ onto $\{\beta:\Sigma\beta=\rho\}$.  If we choose $A_n = \sigma^2 + s_n$ and $M\ge \|\tba\|_1$, then on the event $\mathcal{E}$, 
		$$
		\|\hba_P-\tba\|^2\le 2(\sqrt{D}/\alpha+ M)v_n^{1/2} + v_n,
		$$
		where $v_n:=\frac{4Mr_n + 2s_n + 4M^2q_n}{\Lambda_{\min}^+(\Sigma)}.$
	\end{lemma}
	\begin{proof}
		By $\alpha$-strong convexity of $\mathcal{P}_{\alpha}$ and the definition of $\tba$, we have that for any $\gamma$ such that $\Sigma\gamma=\rho$,
		$$
		\mathcal{P}_{\alpha}(\gamma)\ge \mathcal{P}_{\alpha}(\tba)+\frac{\alpha}{2}\|\gamma-\tba\|^2
		$$
		Substituting $\hba_P$ for $\gamma$ and rearranging terms gives
		$$
		\|\hba_P-\tba\|^2\le (2/\alpha)\left[\mathcal{P}_{\alpha}(\hba_P)- \mathcal{P}_{\alpha}(\tba)\right].
		$$
		By Lemma \ref{lem:contained}, $\tba\in\mathcal A_n$ and by assumption $\|\tba\|_1\le M$, thus $\tba$ is feasible for \eqref{eqn:hba}, meaning that $\mathcal{P}_{\alpha}(\hba)\le \mathcal{P}_{\alpha}(\tba)$.  Thus,
		$$
		\|\hba_P-\tba\|^2\le (2/\alpha)\left[\mathcal{P}_{\alpha}(\hba_P)- \mathcal{P}_{\alpha}(\hba)\right].
		$$
		We apply the previous lemma to the right-hand side to conclude the proof.
	\end{proof}
	
	The results of Lemma \ref{lem:dist-perp} and Lemma \ref{lem:dist-parallel} can now be combined to give the desired estimation result:
	\begin{align*}
		\|\hba-\tba\|^2 &= \|\hba-\hba_P\|^2 + \|\hba_P-\tba\|^2\\
		&\le v_n + 2(\sqrt{D}/\alpha+ M)v_n^{1/2} + v_n\\
		&\le 2v_n + 2(\sqrt{D}/\alpha+ M)v_n^{1/2}.
	\end{align*}
	This establishes the proposition.
\end{proof}

\section{Proof of Proposition \ref{prop:est-error-main} (Phase-I Analysis)}\label{app:phase-I}

We divide the proof of Proposition \ref{prop:est-error-main} in four steps. First, in Proposition \ref{prop:est-error}, we provide deterministic upper bounds on the estimation errors ($\widehat{\Pi} - \Pi)$ and approximation errors around the \textit{regression} residuals  $(\hat{\varepsilon}_t - \varepsilon_t)$ for a given realization of $(T+\tilde{p})$ consecutive observations from the VARMA process, under some sufficient conditions. Then we show in Propositions \ref{prop:dev-bound} and \ref{prop:re} that for a random realization from the VARMA process, these conditions are satisfied with high probability when the sample size is sufficiently large. Finally, we provide upper bound on the approximation errors around the \textit{true} VARMA errors  $(\hat{\varepsilon}_t - a_t)$ in Proposition \ref{prop:bound-eps-a}.

We start with the deterministic upper bound on the deviation of the estimated residuals $\hat{\varepsilon}_t$ around $\varepsilon_t$ without making any assumption on the design matrix $Z$. This is essentially a so-called ``slow rate'' bound, as appears in the lasso regression literature \citep{greenshtein2004persistence}. Then we provide a tighter upper bound on the above deviation, and an upper bound on the deviation of $\{\widehat{\Pi}_{\tau}\}_{\tau=1}^{\tilde{p}}$ around $\{\Pi_{\tau}\}_{\tau=1}^{\tilde{p}}$, under a restricted eigenvalue (RE) condition \citep{powai2012, Basu15}:
\begin{assumption}[Restricted Eigenvalue, RE]\label{assump:re}
	A symmetric matrix $G_{r \times r}$ satisfies the restricted eigenvalue (RE) condition with curvature $\gamma >0$ and tolerance  ${\delta} > 0$ if 
	\begin{equation}\label{eqn:re}
		v^\top G v \ge \gamma \|v\|^2 - {\delta} \|v\|_1^2, \mbox{ ~~ for all ~~} v \in \mathbb{R}^r.
	\end{equation}
\end{assumption}

These upper bounds involve the curvature and tolerance parameters $\gamma, \, {\delta}$ as well as the quantity $\|Z^\top\mathcal{E}/T\|_{\infty}$, and do not relate directly to the VARMA  parameters. Propositions \ref{prop:dev-bound} and \ref{prop:re} 
then provide insight into how these quantities depend on VARMA parameters, when we have a random realization from a stable, invertible VARMA model \eqref{VARMA}.

\begin{proposition}\label{prop:est-error}
	Consider any solution $\hat{\beta}$ of \eqref{eqn:l1-ls} using a given realization of $\{ y_t\}_{t=1-\tilde{p}}^T$ from the VARMA model \eqref{VARMA}, and set $\mathcal{E} = \text{vec}(E)$. Then, for any choice of the penalty parameter $\lambda \ge 2 \left\| Z^\top \mathcal{E}/T \right\|_\infty$, we have
	\begin{eqnarray}\label{eqn:ehat-e}
		\frac{1}{T} \sum_{t=1}^T \left\|\hat{\varepsilon}_t - \varepsilon_t \right\|^2 \le 2 \lambda \sum_{\tau = 1}^{\tilde{p}} \left\| \Pi_\tau \right\|_1 =: \Delta_\varepsilon^2.
	\end{eqnarray}
	Further, assume $\{\Pi_1, \ldots, \Pi_{\tilde{p}} \}$ are sparse so that $k:= \sum_{\tau=1}^{\tilde{p}} \left\| \Pi_{\tau} \right\|_0$, and the sample Gram matrix $Z^\top Z/T$ satisfies $RE(\gamma, \delta)$ of Assumption \ref{assump:re} for some model dependent quantities $\gamma > 0, \delta > 0$ such that $k \delta \le \gamma/32$. Then for any choice of $\lambda \ge 4 \left\| Z^\top \mathcal{E}/T \right\|_\infty$, we have the following upper bounds
	\begin{eqnarray}\label{eqn:fastrate}
		&& \sum_{\tau=1}^{\tilde{p}} \left\| \widehat{\Pi}_\tau - \Pi_\tau \right\|_1 \le 64 k \lambda / \gamma, 
		\left[ \sum_{\tau=1}^{\tilde{p}} \left\| \widehat{\Pi}_\tau - \Pi_\tau \right\|_F^2 \right]^{1/2} \le 16 \sqrt{k} \lambda / \gamma, \nonumber\\
		&& \frac{1}{T}\sum_{t=1}^T \left\|\hat{\varepsilon}_t - \varepsilon_t \right\|^2 \le 128 {k} \lambda^2 / \gamma =: \Delta_\varepsilon^2.
	\end{eqnarray}
\end{proposition}

\begin{proof}[Proof of Proposition \ref{prop:est-error}]
	Since $\hat{\beta}$ is a minimizer of \eqref{eqn:l1-ls}, we have
	\begin{equation}
		\frac{1}{T} \left\| Y - Z \hat{\beta} \right\|^2 + \lambda \left\| \hat{\beta} \right\|_1 \le \frac{1}{T} \left\|Y - Z \beta^* \right\|^2 + \lambda \left\| \beta^* \right \|_1. \nonumber
	\end{equation}
	Let $v = \hat{\beta} - \beta^*$ denote the error vector. Substituting $Y = Z \beta^* + \mathcal{E}$ in the above, we obtain 
	\begin{equation}
		\frac{1}{T} \left\|\mathcal{E} - Zv \right\|^2 + \lambda \left\|\beta^* + v \right\|_1 \le \frac{1}{T} \left\| \mathcal{E} \right\|^2 + \lambda \left\| \beta^* \right\|_1. \nonumber
	\end{equation}
	Moving some terms to the right hand side of the inequality, we get
	\begin{equation}\label{eqn:est-error-eqn1}
		v^\top \left(Z^\top Z/T\right)v \le 2 v^\top  \left(Z^\top \mathcal{E}/T\right) + \lambda \left(\| \beta^*\|_1 - \| \beta^* + v\|_1 \right). 
	\end{equation}
	Since $\lambda \ge 2\|Z^\top \mathcal{E}/T\|_{\infty}$, and the first term on the right is at most $2\|v\|_1 \| Z^\top \mathcal{E}/T\|_{\infty}$, we have
	\begin{equation}
		v^\top \left(Z^\top Z/T\right)v \le \lambda \left( \|v\|_1 + \| \beta^* \|_1 - \| \beta^* + v \|_1 \right) \le 2 \lambda \| \beta^* \|_1 = 2 \lambda \sum_{\tau=1}^{\tilde{p}} \| \Pi_\tau \|_1 \nonumber.
	\end{equation}
	Then \eqref{eqn:ehat-e} follows from the fact that 
	\begin{eqnarray*}
		v^\top \left( Z^\top Z/T\right)v = \frac{1}{T} \left\|\mathcal{X} \widehat{B} - \mathcal{X} B \right\|_F^2 = \frac{1}{T} \sum_{t=1}^T \| \hat{\varepsilon}_t - \varepsilon_t \|^2.
	\end{eqnarray*}
	Next, suppose $J$ denotes the support of $\beta^*$, i.e. $J = \left\{j \in \{1, \ldots, d^2 \tilde{p} \}: \beta^*_j \neq 0 \right\}$. By our assumption, $|J| \le k$. Inequality \eqref{eqn:est-error-eqn1}, together with our choice of $\lambda$, then leads to 
	\begin{eqnarray}
		0 \le v^\top \left( Z^\top Z/T \right)v & \le& \frac{\lambda}{2} \left(\|v_J\|_1 + \|v_{J^c}\|_1 \right) + \lambda \left ( \| \beta^*_J\|_1 - \| \beta^*_J + v_J \|_1 - \|v_{J^c} \|_1 \right) \nonumber \\
		&\le& \frac{\lambda}{2} \left(\|v_J\|_1 + \|v_{J^c}\|_1 \right) + \lambda \left ( \| v_J\|_1 - \|v_{J^c} \|_1 \right) \nonumber \\
		&\le& \frac{3\lambda}{2} \|v_J\|_1 - \frac{\lambda}{2} \|v_{J^c}\|_1  \le 2 \lambda \|v_J \|_1 \le 2 \lambda \|v\|_1. \nonumber
	\end{eqnarray}
	Since $\lambda > 0$, the first inequality on the last line ensures $\|v_{J^c}\|_1 \le 3 \|v_J\|_1$, so that $\|v\|_1 \le 4 \|v_J\|_1 \le 4\sqrt{k} \|v\|$. Using the RE condition \eqref{eqn:re} and the upper bound on $k\delta$, we have 
	\begin{equation}
		v^\top \left( Z^\top Z/T\right)v \ge \gamma \|v\|^2 - \delta \|v\|_1^2 \ge (\gamma - 16k \delta) \|v\|^2 \ge \frac{\gamma}{2} \|v\|^2. \nonumber
	\end{equation}
	Combining these upper and lower bounds on $v^\top \left( Z^\top Z/T\right)v$, we obtain the final inequalities as follows:
	\begin{eqnarray*}
		&& \gamma \|v\|^2/2 \le v^\top \left( Z^\top Z/T\right)v \le 8 \lambda \sqrt{k} \|v\| \\
		&\Rightarrow& \|v\| \le 16 \lambda \sqrt{k} / \gamma, 
	\end{eqnarray*}
	and consequently $\|v\|_1 \le 4 \sqrt{k} \|v\| \le 64k \lambda / \gamma$. 
	
	Together with $v^\top \left( Z^\top Z/T\right)v \le 2 \lambda \|v\|_1$, we obtain the final in-sample prediction error bound $128 k \lambda^2 / \gamma$.
\end{proof}

\smallskip

Our next proposition provides a non-asymptotic upper bound on $\|Z^\top\mathcal{E}/T\|_{\infty}$ which holds with high probability for large $d, \tilde{p}$. If $\lambda$ is chosen as the same order of this bound, Proposition \ref{prop:est-error}
then shows how the upper bounds of estimation and approximation errors vary with model parameters. 
\begin{proposition}[Deviation Condition: Phase-I]\label{prop:dev-bound}
	If $\{y_{-(\tilde{p}-1)}, \ldots, y_T \}$ is a random realization from a stable, invertible VARMA model \eqref{VARMA}, then there exist  universal constants $c_i > 0$ such that for any $A > 1$, with probability at least $1 - c_0 \exp \! \left[ -(c_1 A^2-1) \log d^2 \tilde{p}\right]$, 
	\begin{eqnarray*}
		\|Z^\top\mathcal{E}/T \|_{\infty} \le 2\pi   \vertiii{f_y} \left[ 3A \, \max \left\{ \vertiii{\Pi_{[\tilde{p}]}}^2, \, 1 \right\} \sqrt{\log (d^2\tilde{p})/T} + \left\| \Pi_{-[\tilde{p}]} \right\|_{2,1} \right].
	\end{eqnarray*}
\end{proposition}
\smallskip

\begin{proof}[Proof of Proposition \ref{prop:dev-bound}]
	Note that $\|\tilde{Z}^\top \mathcal{\mathcal{E}}/T \|_{\infty} = \|\mathcal{X}^\top E/T \|_\infty = \max_{1 \le h \le \tilde{p}} \|\mathcal{X}_{(h)}^\top E/T \|_{\infty}$, where $\mathcal{X}_{(h)} = [(y_{T-h}) :\ldots:(y_{1-h}) ]^\top $. 
	
	Define $X_t = y_{t-h} = L^hy_t$ and $Y_t = \varepsilon_t = a_t + \sum_{\tau = \tilde{p}+1}^\infty \Pi_\tau y_{t-\tau} = \Pi_{[\tilde{p}]}(L)y_t$. The first term in our upper bound follows from \eqref{eqn:devn-xy} in Proposition \ref{lem:yAy}, by using $X_t = L^h y_t$, $Y_t = \varepsilon_t = \Pi_{[\tilde{p}]}(L)y_t$ and $\eta = A \sqrt{\log d^2 \tilde{p}/T}$. To obtain the second term, i.e. the bound on the bias term $\Gamma_{X,Y}(0)$, we use the representation $Y_t = \varepsilon_t =  a_t + \sum_{t = \tilde{p}+1}^\infty \Pi_\tau y_{t-\tau}$ as follows:
	\begin{equation}
		\Gamma_{X,Y}(0) = \cov \left( y_{t-h}, a_t + \sum_{\tau = \tilde{p}+1}^\infty \Pi_\tau y_{t-\tau} \right) = \sum_{\tau = \tilde{p}+1}^\infty \Gamma_y(h-\tau) \Pi_\tau^\top . \nonumber
	\end{equation}
	First, note that the entries of $\Gamma_{X,Y}(0)$ are upper bounded as follows:
	\begin{equation}
		\left\| \Gamma_{X,Y}(0) \right\|_\infty \le \left\| \Gamma_{X,Y}(0) \right\| \le \left( \max_{h \in \mathbb{Z}} \| \Gamma_y(h) \| \right) \left\|\Pi_{-[\tilde{p}]} \right\|_{2,1} \le 2 \pi \vertiii{f_y} \left\|\Pi_{-[\tilde{p}]} \right\|_{2,1} \nonumber
	\end{equation}
	The last inequality holds since for any $h \in \mathbb{Z}$, 	$\Gamma_y(h) = \int_{-\pi}^{\pi} e^{ih \theta} f_y(\theta) d\theta$. 
	
\end{proof}

\smallskip

The next proposition investigates sample size requirements for the RE condition to hold with high probability, and also provides insight into how the tolerance and curvature parameters depend on the VARMA model parameters. 

\begin{proposition}[Verifying Restricted Eigenvalue Condition]\label{prop:re}
	Consider a random realization of $(T+\tilde{p})$ data points $\{y_{-(\tilde{p}-1)}, \ldots, y_T\}$ from a stable, invertible VARMA model \eqref{VARMA} with $\Lambda_{\min}(\Sigma_a)>0$. 
	Then there exist universal constants $c_i > 0$ such that for $T \succsim \max\{\omega^2, 1\} k (\log d + \log \tilde{p})$, the matrix $Z^\top Z/T$ satisfies RE($\gamma, \delta$) with probability at least $1-c_1 \exp(-c_2 T min\{ \omega^{-2}, 1\})$, where 
	\begin{equation*}
		\gamma = \pi /\vertiii{f_y^{-1}}, \, \omega = c_3 \tilde{p} \vertiii{f_y} \vertiii{f^{-1}_y}, \, \delta = \gamma \max\{\omega^2, 1\}\log (d\tilde{p})/T.
	\end{equation*}
\end{proposition}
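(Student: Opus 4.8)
The plan is to exploit the Kronecker structure of the Phase-I design. Since the vectorized regression in \eqref{eqn:reg-phase1} has $Z = I_d \otimes \mathcal{X}$, the sample Gram matrix factors as $Z^\top Z/T = I_d \otimes (\mathcal{X}^\top \mathcal{X}/T)$. Partitioning any $v \in \mathbb{R}^{d^2\tilde{p}}$ into $d$ blocks $v = (v_1^\top,\dots,v_d^\top)^\top$ with $v_j \in \mathbb{R}^{d\tilde{p}}$, one gets $v^\top (Z^\top Z/T) v = \sum_{j=1}^d v_j^\top (\mathcal{X}^\top\mathcal{X}/T) v_j$. Hence, if $G_0 := \mathcal{X}^\top\mathcal{X}/T$ satisfies RE$(\gamma,\delta)$ for all vectors, then using $\sum_j \|v_j\|^2 = \|v\|^2$ and $\sum_j \|v_j\|_1^2 \le (\sum_j\|v_j\|_1)^2 = \|v\|_1^2$, the full matrix $Z^\top Z/T$ inherits RE$(\gamma,\delta)$ with the \emph{same} constants. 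So it suffices to verify RE for the $d\tilde{p}\times d\tilde{p}$ sample covariance $G_0 = \frac1T\sum_{t=1}^T z_t z_t^\top$ of the stacked lag vector $z_t = (y_{t-1}^\top,\dots,y_{t-\tilde{p}}^\top)^\top$, whose population analogue is $\Gamma_z := \mathbb{E}[z_t z_t^\top]$.

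For the population curvature I would work in the frequency domain. For a fixed $v = (v_1,\dots,v_{\tilde{p}})$, the scalar $v^\top z_t = \sum_{j} v_j^\top y_{t-j}$ is a linear filter applied to $y_t$ with transfer function $\tilde{v}(\theta) = \sum_j v_j e^{-ij\theta}$, so by the spectral representation and Parseval's identity $v^\top \Gamma_z v = \mathbb{E}[(v^\top z_t)^2] = \int_{-\pi}^\pi \tilde{v}(\theta)^* f_y(\theta) \tilde{v}(\theta)\, d\theta \ge \big(\essinf_\theta \Lambda_{\min}(f_y(\theta))\big) \int_{-\pi}^\pi |\tilde{v}(\theta)|^2\, d\theta = 2\pi \|v\|^2 / \vertiii{f_y^{-1}}$, using $\essinf_\theta \Lambda_{\min}(f_y) = 1/\vertiii{f_y^{-1}}$ and $\int |\tilde{v}|^2 = 2\pi\|v\|^2$. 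Thus $\Lambda_{\min}(\Gamma_z) \ge 2\pi/\vertiii{f_y^{-1}}$, and I would reserve half of this, namely $\pi/\vertiii{f_y^{-1}} = \gamma$, as the effective curvature, leaving the remaining half to absorb sampling fluctuations.

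For the deviation, the same identity gives the matching \emph{upper} scale $\esssup_\theta \tilde{v}(\theta)^* f_y(\theta)\tilde{v}(\theta) \le \vertiii{f_y} \sup_\theta |\tilde{v}(\theta)|^2 \le \vertiii{f_y}(\sum_j \|v_j\|)^2 \le \tilde{p}\,\vertiii{f_y}\|v\|^2$, where the factor $\tilde{p}$ enters through Cauchy--Schwarz on the transfer function. This is precisely the source of the $\tilde{p}$ in $\omega = c_3 \tilde{p}\vertiii{f_y}\vertiii{f_y^{-1}}$: the ratio of the upper scale $\tilde{p}\vertiii{f_y}$ to the curvature $1/\vertiii{f_y^{-1}}$ is exactly $\tilde{p}\vertiii{f_y}\vertiii{f_y^{-1}}$. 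For fixed $v$, the centered quadratic form $v^\top(G_0 - \Gamma_z)v = \frac1T\sum_t (v^\top z_t)^2 - \mathbb{E}[(v^\top z_t)^2]$ is that of a scalar Gaussian time series whose spectral density is bounded by $\tilde{p}\vertiii{f_y}\|v\|^2$, so I would control it with the quadratic-form concentration inequality for Gaussian time series of \citet{Basu15} (equivalently Proposition \ref{lem:yAy} applied to the filtered process), yielding a tail of the form $c_1\exp[-c_2 T \min\{\omega^{-2},1\}]$ at the deviation scale $\tfrac12 \Lambda_{\min}(\Gamma_z)$.

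Finally, I would upgrade the pointwise bound to a uniform one. Following the standard recipe (the deviation-to-RE lemma of \citet{Basu15}, in the spirit of Loh--Wainwright), I would establish the quadratic-form deviation uniformly over a constant-radius net of the $2k$-sparse unit vectors: a union bound over the $\binom{d\tilde{p}}{2k}$ supports, each carrying a net of cardinality $9^{2k}$, contributes the factor $k(\log d + \log\tilde{p})$ to the exponent, which forces the threshold $T \succsim \max\{\omega^2,1\}\,k(\log d + \log\tilde{p})$ and produces the tolerance $\delta = \gamma\max\{\omega^2,1\}\log(d\tilde{p})/T$. Converting the sparse-direction deviation into the cone form $v^\top G_0 v \ge \gamma\|v\|^2 - \delta\|v\|_1^2$ holding for \emph{all} $v$ is then routine. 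The main obstacle is this last step: propagating the $\tilde{p}$- and $\omega$-dependence correctly through the covering argument and the quadratic-form concentration, so that the variance proxy $\tilde{p}\vertiii{f_y}$ and the curvature $1/\vertiii{f_y^{-1}}$ combine into a clean $\omega^2$ appearing simultaneously in the sample-size threshold and in $\delta$, while keeping the effective curvature pinned at $\gamma = \pi/\vertiii{f_y^{-1}}$.
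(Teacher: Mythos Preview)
Your proposal is correct and follows essentially the same approach as the paper: reduce to the $d\tilde{p}\times d\tilde{p}$ Gram matrix via the Kronecker structure, establish the curvature lower bound $\Lambda_{\min}(\Gamma_z)\ge 2\pi/\vertiii{f_y^{-1}}$ and the spectral upper bound $\vertiii{f_z}\le \tilde{p}\,\vertiii{f_y}$, and then invoke the concentration-plus-covering machinery of \citet{Basu15} (their Proposition~4.2) to obtain RE with the stated $\gamma,\omega,\delta$. The only cosmetic difference is that the paper obtains the bound $\vertiii{f_z}\le\tilde{p}\,\vertiii{f_y}$ by specializing Proposition~\ref{lem:f-y-eps} with $\mathcal{C}(L)=0$, whereas you derive it directly from Cauchy--Schwarz on the transfer function; otherwise the arguments coincide.
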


\begin{proof}[Proof of Proposition \ref{prop:re}]
	The proof follows along the same line of arguments as in Proposition 4.2 of  \citep{Basu15}, where the restricted eigenvalue condition was verified for processes  $\{y_t\}$ generated according to a \textit{finite-order} VAR process. In particular, rows of the design matrix were generated from  the process $\tilde{y}_t = [y_t^\top , \ldots, y_{t-\tilde{p}+1}^\top ]^\top $ allowing a VAR(1) representation with closed form expressions of spectral density and autocovariance. In the present context, $\{\tilde{y}_t\}$ does not have a VAR representation. However, a close inspection of the proof in \citep{Basu15} shows that it is sufficient to derive a lower bound on $\Lambda_{\min}(\Gamma_{\tilde{y}}(0))$ and an upper bound on $\vertiii{f_{\tilde{y}}}$, and the rest of the argument follows. Next, we derive these two bounds for the process $\{\tilde{y}_t\}$.
	
	First we consider $\Lambda_{\min}(\Gamma_{\tilde{y}}(0))$. Note that $\Gamma_{\tilde{y}}(0)$ can be viewed as the variance-covariance of a vectorized data matrix containing $\tilde{p}$ consecutive observations from the process $y_t$. Hence, using Proposition 2.3 and Equation (2.6) of  \citep{Basu15}, we can show that 
	\begin{equation*}
		\Lambda_{\min}\left(\Gamma_{\tilde{y}}(0) \right) \ge \min_{\theta \in [-\pi, \pi]} 2\pi \Lambda_{\min} \left(f_{y}(\theta) \right) = 2 \pi \vertiii{f^{-1}_y}^{-1}. \nonumber
	\end{equation*}
	
	The upper bound on $\vertiii{f_{\tilde{y}}}$ follows from Proposition \ref{lem:f-y-eps},  by setting $\mathcal{C}(L)=0$, which implies  $\vertiii{f_{\tilde{y}}} \le \tilde{p} \vertiii{f_y}$. 
\end{proof}

\smallskip

\begin{proposition}\label{prop:bound-eps-a}
	Consider the Phase-I regression residuals $\hat{\varepsilon}_t$ in Proposition \ref{prop:est-error}. Assume  $(1/T)\sum_{t=1}^T \left\| \hat{\varepsilon}_t - \varepsilon_t \right\|^2 \le \Delta_{\varepsilon}^2$ with probability at least $1 - c_0 \exp[-(c_1 A^2 - 1) \log d^2 \tilde{p}]$ for some universal constants $c_i > 0$, and $T \succsim \log (d^2 \tilde{p})$. Then there exist $c_i > 0$ such that 
	\begin{enumerate}[label=(\alph*)]
		\item For any $v \in \mathbb{S}^{d-1}$, with probability at least $1 - c_0 \exp[-(c_1 A^2 - 1) \log (d^2 \tilde{p})]$,
		\begin{equation}\label{eqn:delta_a}
			\frac{1}{T} \sum_{t=1}^T \left( v^\top (\hat{\varepsilon}_{t} - a_{t}) \right)^2 \le 4 \max \left\{\Delta_\varepsilon^2, 4 \pi \left\| \Pi_{-[\tilde{p}]} \right\|^2_{2,1} \vertiii{f_y} \right\} =: \Delta_a^2.
		\end{equation}
		\item In particular, with probability at least $1 - c_0 \exp[-(c_1 A^2 - 2) \log (d^2 \tilde{p})]$,
		\begin{equation}\label{eqn:delta_a_eltwise}
			\max_{1 \le j \le d} \, \,  \frac{1}{T} \sum_{t=1}^T \left( \hat{\varepsilon}_{tj} - a_{tj}\right)^2 \le  \Delta_a^2.
		\end{equation}
		\item With probability at least $1 - c_0 \exp[-(c_1 A^2 - 2) \log (d^2 \tilde{p})]$,
		\begin{equation}\label{eqn:delta_a_total}
			\frac{1}{T} \sum_{t=1}^T \left\| \hat{\varepsilon}_t - a_t\right\|^2 \le 4 \max \left\{\Delta_\varepsilon^2, 4 \pi d \left\| \Pi_{-[\tilde{p}]} \right\|^2_{2,1} \vertiii{f_y} \right\}.
		\end{equation}
	\end{enumerate}
\end{proposition}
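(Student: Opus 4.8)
The plan is to build everything on the decomposition $\hat{\varepsilon}_t - a_t = (\hat{\varepsilon}_t - \varepsilon_t) + (\varepsilon_t - a_t)$ recorded just before the statement. For any $v$, the elementary bound $(a+b)^2 \le 2a^2 + 2b^2$ gives
\[
\frac1T\sum_{t=1}^T\bigl(v^\top(\hat{\varepsilon}_t - a_t)\bigr)^2 \le \frac{2}{T}\sum_{t=1}^T\bigl(v^\top(\hat{\varepsilon}_t - \varepsilon_t)\bigr)^2 + \frac{2}{T}\sum_{t=1}^T\bigl(v^\top(\varepsilon_t - a_t)\bigr)^2 .
\]
For the first term, Cauchy--Schwarz with $\|v\|=1$ yields $(v^\top(\hat{\varepsilon}_t-\varepsilon_t))^2 \le \|\hat{\varepsilon}_t-\varepsilon_t\|^2$, so on the event from the hypothesis it is at most $\Delta_\varepsilon^2$, deterministically in $v$. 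All of the stochastic work therefore concentrates in the second term, which depends only on the stationary filtered process and not on the estimator.

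The key observation is that $x_t := \varepsilon_t - a_t = \Pi_{-[\tilde{p}]}(L)y_t$ is a centered stable Gaussian time series whose spectral density factors as $f_x(\theta) = \Pi_{-[\tilde{p}]}(e^{-i\theta})\, f_y(\theta)\, \Pi_{-[\tilde{p}]}(e^{-i\theta})^*$. I would bound $\vertiii{f_x} \le \vertiii{\Pi_{-[\tilde{p}]}}^2\,\vertiii{f_y} \le \|\Pi_{-[\tilde{p}]}\|_{2,1}^2\,\vertiii{f_y}$, using $\vertiii{\Pi_{-[\tilde{p}]}} \le \|\Pi_{-[\tilde{p}]}\|_{2,1}$ from the triangle inequality in the Notation section, and correspondingly $v^\top\Gamma_x(0)v \le \|\Gamma_x(0)\| \le 2\pi\vertiii{f_x}$ since $\Gamma_x(0)=\int_{-\pi}^\pi f_x(\theta)\,d\theta$. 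Applying the quadratic-form concentration inequality (Proposition 2.4 of \citet{Basu15}) to $\{x_t\}$ with a deviation level $\eta$ of order $A\sqrt{\log(d^2\tilde{p})/T}$ gives, with probability at least $1 - c_0\exp[-(c_1A^2-1)\log(d^2\tilde{p})]$,
\[
\frac1T\sum_{t=1}^T\bigl(v^\top x_t\bigr)^2 \le v^\top\Gamma_x(0)v + 2\pi\vertiii{f_x}\,\eta \le 4\pi\,\|\Pi_{-[\tilde{p}]}\|_{2,1}^2\,\vertiii{f_y},
\]
where $T\succsim\log(d^2\tilde{p})$ guarantees $\eta\le1$ so that the deviation does not exceed the mean bound. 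Combining the two terms gives $2\Delta_\varepsilon^2 + 8\pi\|\Pi_{-[\tilde{p}]}\|_{2,1}^2\vertiii{f_y} \le 4\max\{\Delta_\varepsilon^2,\, 4\pi\|\Pi_{-[\tilde{p}]}\|_{2,1}^2\vertiii{f_y}\}=\Delta_a^2$, which is part (a).

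Part (b) then follows by specializing $v=e_j$ for each coordinate, noting $e_j\in\mathbb{S}^{d-1}$, and taking a union bound of the concentration event over $j=1,\dots,d$; this costs an additional $\log d \le \log(d^2\tilde{p})$ in the exponent, which is precisely what downgrades the probability to $1-c_0\exp[-(c_1A^2-2)\log(d^2\tilde{p})]$. Part (c) uses the same union bound together with the identity $\|\varepsilon_t-a_t\|^2=\sum_{j=1}^d(e_j^\top x_t)^2$, so summing the per-coordinate bound over $j$ replaces $\|\Pi_{-[\tilde{p}]}\|_{2,1}^2\vertiii{f_y}$ by $d\,\|\Pi_{-[\tilde{p}]}\|_{2,1}^2\vertiii{f_y}$ and delivers the stated total-error bound.

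The main obstacle is the probability bookkeeping rather than any single inequality. One must choose the deviation level $\eta$ in Proposition 2.4 of \citet{Basu15} so that simultaneously the clean constant $4\pi$ survives (requiring $\eta\le1$, hence $T$ large relative to $\log(d^2\tilde{p})$) and the target exponent $c_1A^2$ is attained; one then has to verify that the union bounds over the $d$ coordinates in (b) and (c) degrade the exponent only from $A^2-1$ to $A^2-2$ while the $c_i$ absorb the accumulated constants. Everything else—the decomposition, Cauchy--Schwarz, the spectral-density factorization, and the trace identity for part (c)—is routine once the filtered process $x_t$ is identified as the object to which the Gaussian concentration bound applies.
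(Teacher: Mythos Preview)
Your proposal is correct and follows exactly the route the paper indicates: the paper explicitly notes (in the sentence preceding the proposition) that the decomposition $\hat{\varepsilon}_t - a_t = (\hat{\varepsilon}_t - \varepsilon_t) + (\varepsilon_t - a_t)$ with $\varepsilon_t - a_t = \Pi_{-[\tilde{p}]}(L)y_t$ a stable Gaussian series, together with the concentration inequality in Proposition~2.4 of \citet{Basu15}, yields the result, and then omits the details. Your write-up supplies precisely those details---the $(a+b)^2\le 2a^2+2b^2$ split, the spectral bound $\vertiii{f_x}\le\|\Pi_{-[\tilde{p}]}\|_{2,1}^2\vertiii{f_y}$, and the union-bound bookkeeping that downgrades the exponent from $c_1A^2-1$ to $c_1A^2-2$ in parts (b) and (c)---so there is nothing to add.
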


\begin{proof}
	We use the decomposition $\hat{\varepsilon}_t - a_t = (\hat{\varepsilon}_t - \varepsilon_t) + (\varepsilon_t- a_t)$ and analyze the sum of squares for the two parts separately. 
	In particular, note that 
	\begin{equation*}
		\frac{1}{T} \sum_{t=1}^T \left( v^\top (\hat{\varepsilon}_t - a_t) \right)^2 \le 4 \max \left\{ \frac{1}{T} \sum_{t=1}^T \left(  v^\top (\hat{\varepsilon}_t - \varepsilon_t)\right)^2, \frac{1}{T} \sum_{t=1}^T \left(v^\top (\varepsilon_t - a_t) \right)^2\right\}.
	\end{equation*}
	
	By assumption, the first part is at most $\Delta_\varepsilon^2$ with probability at least $1 - c_0 \exp \left[-(c_1 A^2 - 1) \log d^2 \tilde{p} \right]$. To work with the second part, note that $\varepsilon_t - a_t = \Pi_{-[\tilde{p}]}(L)y_t =:w_t$, say. The spectral density of $w_t$ satisfies $\vertiii{f_w} \le \left\| \Pi_{-[\tilde{p}]}\right\|_{2,1} \vertiii{f_y}$. Using Propositions 2.3 and 2.4 of \citep{Basu15}, we obtain the following upper bound for any $\eta > 0$,
	\begin{equation*}
		\mathbb{P} \left[v^\top \left(\frac{1}{T}w_t w_t^\top \right)v  > 2 \pi \vertiii{f_w}(1 + \eta)\right] \le 2 \exp \left[-c_0 T \min \left\{\eta, \eta^2 \right\} \right].
	\end{equation*}
	Setting $\eta = (c_1 A^2 - 1) \log d^2 \tilde{p}/T $ (note that $\eta < 1$ when $T \succsim \log d^2 \tilde{p}$), we conclude that the second term is at most $4 \pi \vertiii{f_w}$ with probability at least $1 - 2 \exp \left[-(c_1 A^2 - 1) \log d^2 \tilde{p} \right]$.

	The second inequality \eqref{eqn:delta_a_eltwise} follows by taking an union bound over the choices $v = e_1, \ldots, e_d$, the unit vectors, and multiplying the tail probability by $d^2 \tilde{p} \ge d$. The third inequality \eqref{eqn:delta_a_total} follows by adding up these $d$ terms corresponding to the $d$ unit vectors.
\end{proof}

\begin{proof}[Proof of Proposition \ref{prop:est-error-main}]
	The slow rate bounds follow from Propositions \ref{prop:est-error},  \ref{prop:dev-bound} and \ref{prop:bound-eps-a}. To establish the fast rate, note that by Proposition \ref{prop:re}, the RE condition with $\gamma = \pi/\vertiii{f_y^{-1}}$, $\omega = c_3 \tilde{p} \vertiii{f_y} \vertiii{f_y^{-1}}$ and $\delta = \gamma \max \{ \omega^2, 1\} \log (d \tilde{p} )/T$ holds with probability at least $1 - c_1 \exp \left[ -c_2 k \log (d \tilde{p}) \right]$, for $T \succsim \max \{\omega^2, 1 \} k (\log d \tilde{p})$. Since $k (\log d \tilde{p}) \ge \log (d^2 \tilde{p})$ for $k \ge 2$, the event where both RE and deviation condition of Proposition \ref{prop:dev-bound} hold has probability at least $1 - c_0 \exp \left[-(c_1A^2 - 1) \log (d^2 \tilde{p}) \right]$ for some universal constants $c_i > 0$. These choices also ensure $k \delta / \gamma = \max \{\omega^2, 1\} \log (d \tilde{p})/T \le 1/32$ for large enough $T$. Plugging in the value of $\gamma$ in the final inequality of Proposition \ref{prop:est-error} leads to the tighter upper bound $\Delta^2_\varepsilon = 128 k \lambda^2 / \gamma = (128/\pi) \vertiii{f_y^{-1}} k \lambda^2$.
\end{proof}

\section{Propositions and Proofs for Phase-II Analysis}\label{app:phase-II}
Before presenting the proof of \ref{prop:phase2-qrs},  in Proposition \ref{prop:prediction-phase2} we provide a high probability upper bound on $\left\| \tilde{\mathcal{Z}}^\top \mathcal{U}/n \right\|_\infty$, which is required for the choices of both $\lambda$ in the penalized version and $r_n$ in the constrained version. Deriving upper bounds for the other quantities $q_n$, $s_n$ follow similar arguments, and an outline is provided in the proof of Proposition \ref{prop:phase2-qrs}.
\begin{proposition}[Deviation Bound: Phase-II]\label{prop:prediction-phase2}
	There exist universal constants $c_i > 0$ such that if $n \succsim \log \, d^2(p+q)$, then for any $A > 0$ the following holds with probability at least $1 - c_0 \exp \! \left[-(c_1 A^2-2)\, \log \, d^2(p+q) \right]$:
	\begin{equation*}
		\left\| \tilde{\mathcal{Z}}^\top \mathcal{U}/n \right\|_\infty \le \varphi_1 \sqrt{\frac{\log d^2 (p +q)}{n}} + \varphi_2\cdot \left(\Delta_\varepsilon + \Delta_\varepsilon^2 + \left\|\Pi_{-[\tilde{p}]} \right\|_{2,1} \right),
	\end{equation*}
	where 
	\begin{eqnarray*}
		\varphi_1 &=& c_1 \vertiii{f_y} A \,  \max \left\{1, \vertiii{\Theta}^2 \left\| \Pi_{-[\tilde{p}]} \right\|_{2,1}^2, \vertiii{\Pi_{[\tilde{p}]}}^2 \right\}, \\
		\varphi_2 &=& c_2 \vertiii{f_y} \left\|  \Theta \right\|_{2,1} \max \! \{ 1, \left\| \Pi_{[\tilde{p}]} \right\|_{2,1} \}.
	\end{eqnarray*} 
\end{proposition}

\begin{proof}[Proof of Proposition \ref{prop:prediction-phase2}]
	Recall $n = T-q$ is the number of observations in the Phase-II regression. The element-wise maximum norm can be expressed as 
	\begin{equation*}
		\left\| \mathcal{Z}^\top \mathcal{U}/n \right\|_\infty = \max_{\begin{array}{c} 1 \le \ell \le p \\ 1 \le m \le q \end{array}} ~ 
		\max \left\{ \left\|\mathcal{Y}_{(\ell)}^\top \mathcal{U}/n \right\|_\infty, \, \, \left\| \hat{E}_{(m)}^\top \mathcal{U}/n \right\|_{\infty} \right\},
	\end{equation*}
	where $\mathcal{Y}_{(\ell)} = \left[ y_{n-\ell}: \ldots : y_{1-\ell} \right]^\top$ is a data matrix with $n$ consecutive observations from the process $\{y_t \}$,  $\hat{E}_{(m)} = \left[\hat{\varepsilon}_{n-m}: \ldots : \hat{\varepsilon}_{1-m} \right]^\top$ is a data matrix with $n$ consecutive observations from the process $\{\hat{\varepsilon}_t\}$, and $\mathcal{U}$ is a data matrix with $n$ consecutive observations from the process $\{ u_t\}$. Also, the process $\{u_t\}$ can be alternately expressed as \begin{eqnarray*}
		u_t &=& \Phi(L)y_t - \Theta(L) \hat{\varepsilon}_{t}  \\
		&=& \Theta(L) (a_t - \hat{\varepsilon}_t) \\
		&=& \Theta(L)(a_t - \varepsilon_t) - \Theta(L)(\hat{\varepsilon}_t - \varepsilon_t) \\
		&=& \Theta(L) \left(\Pi(L) - \Pi_{\tilde{[p]}}(L) \right) y_t - \Theta(L) (\hat{\varepsilon}_t - \varepsilon_t) \\
		&=& \mathcal{A}(L) y_t + \mathcal{B}(L)(\hat{\varepsilon}_t - \varepsilon_t), \mbox{  say.}
	\end{eqnarray*}

	\noindent The lag polynomial $\mathcal{A}(L) = \Theta(L) \Pi_{-[\tilde{p}]}(L)$ satisfies $\vertiii{\mathcal{A}} \le \vertiii{\Theta} \| \Pi_{-[\tilde{p}]} \|_{2,1}$, and $\mathcal{B}(L) = -\Theta(L)$ is a finite order lag polynomial. 
	
	Now note that each term $\mathcal{Y}_{(\ell)}^\top \mathcal{U}/n$ can be expressed in the form of a sample covariance matrix $\widehat{\cov}(L^\ell y_t, u_t):= \sum_{t=1}^n y_{t-\ell} u_t^\top/n$. With this notation, we can decompose this into two terms and apply deviation bounds from Proposition \ref{lem:yAy} and Lemma  \ref{lem:ehat-e} on each term separately. To be precise, for any $\ell$, $1 \le \ell \le p$, we have
	\begin{equation*}
		\widehat{\cov}(y_{t-\ell}, u_t) = \widehat{\cov}(L^\ell y_t, \mathcal{A}(L)y_t) + \widehat{\cov}(L^\ell y_t, \mathcal{B}(L)(\hat{\varepsilon}_t - \varepsilon_t)).
	\end{equation*}
	Similarly, for any $m$, $1 \le m \le q$, we can decompose $\widehat{\cov}(\hat{\varepsilon}_{t-m}, u_t)$ into four parts as 
	\begin{eqnarray*}
		&& \widehat{\cov}(\hat{\varepsilon}_{t-m} - \varepsilon_{t-m}, \mathcal{A}(L)y_t) + \widehat{\cov}(\hat{\varepsilon}_{t-m} - \varepsilon_{t-m}, \mathcal{B}(L)(\hat{\varepsilon}_t - \varepsilon_t)) \\
		&& + \widehat{\cov}(\Pi_{[\tilde{p}]}(L)L^m y_t, \mathcal{A}(L)y_t) + \widehat{\cov}(\Pi_{[\tilde{p}]}(L)L^m y_t, \mathcal{B}(L)(\hat{\varepsilon}_t - \varepsilon_t)).
	\end{eqnarray*}
	
	Using bounds from Proposition \ref{lem:yAy} and Lemma \ref{lem:ehat-e} then implies that there are universal constants $c_i > 0$ such that each of the following events hold with probability at least $1 - c_0 d^2 \exp[-(c_1 A^2-1) \log d^2(p+q)]$  as long as $n > q$, $\tilde{p} \ge p+q$ and $n \succsim \log d^2(p+q)$:
	\begin{align*}
		\left\| \widehat{\cov}(L^\ell y_t, \mathcal{B}(L)(\hat{\varepsilon}_t - \varepsilon_t)) \right\|_\infty &\le  2\sqrt{2 \pi} \vertiii{f_y}^{1/2} \Delta_{\varepsilon} \left\|\Theta \right\|_{2,1}\\
		\left\| \widehat{\cov}(L^\ell y_t, \mathcal{A}(L)y_t) \right\|_\infty &\le 2\pi \vertiii{f_y} \left[ \vertiii{\Theta} \left\| \Pi_{-[\tilde{p}]}\right\|_{2,1} + \right. \\
		3A \max &\,\left.\{1, \vertiii{\Theta}^2 \left\| \Pi_{-[\tilde{p}]}\right\|_{2,1}^2 \}  \sqrt{\log d^2(p+q) / n} \right]\\
		\left\| \widehat{\cov}(\hat{\varepsilon}_{t-m} - \varepsilon_{t-m}, \mathcal{A}(L)y_t) \right\|_\infty &\le  2\sqrt{2 \pi} \vertiii{f_y}^{1/2} \vertiii{\Theta} \left\| \Pi_{-[\tilde{p}]} \right\|_{2,1} \Delta_{\varepsilon}\\
		\left\| \widehat{\cov}(\hat{\varepsilon}_{t-m} - \varepsilon_{t-m}, \mathcal{B}(L)(\hat{\varepsilon}_t - \varepsilon_t)) \right\|_\infty &\le  2 \left\| \Theta \right\|_{2,1} \Delta_{\varepsilon}^2 \\
		\left\| \widehat{\cov}(\Pi_{[\tilde{p}]}(L)L^m y_t, \mathcal{B}(L)(\hat{\varepsilon}_t - \varepsilon_t)) \right\|_\infty &\le 2\sqrt{2 \pi} \vertiii{f_y}^{1/2} \vertiii{\Pi_{[\tilde{p}]}}^{1/2} \Delta_{\varepsilon} \left\| \Theta \right\|_{2,1} \\
		\left\| \widehat{\cov}(\Pi_{[\tilde{p}]}(L)L^m y_t, \mathcal{A}(L)y_t) \right\|_\infty &\le  2\pi \vertiii{f_y} \left[ \vertiii{\Theta} \left\| \Pi_{-[\tilde{p}]}\right\|_{2,1} \vertiii{\Pi_{[\tilde{p}]}} + \right. \\
		3A \max  \{\vertiii{\Pi_{[\tilde{p}]}}^2, &\, \left. \vertiii{\Theta}^2 \left\| \Pi_{-[\tilde{p}]}\right\|_{2,1}^2 \}\sqrt{\log d^2(p+q) / n} \right].
	\end{align*}
	
	Summing up the six terms above and taking a union bound over $1 \le \ell \le p$, $1 \le m \le q$, we obtain the final upper bound. 
\end{proof}

\begin{proof}[Proof of Proposition \ref{prop:phase2-qrs}]
	We start by deriving a suitable choice of $s_n$. To this end, note that for each $j$, $1 \le j \le d$, we have  $\wh{\var}(u_{tj})$ can be expressed as 
	\begin{eqnarray*}
		e_j^\top \wh{\cov}(u_t, u_t) e_j &=& e_j^\top \wh{\cov}\left(\Theta(L) \Pi_{-[\tilde{p}]}(L)y_t, \Theta(L) \Pi_{-[\tilde{p}]}(L)y_t \right) e_j \\
		&& - 2 e_j^\top \wh{\cov}\left( \Theta(L) \Pi_{-[\tilde{p}]}(L)y_t, \Theta(L)(\hat{\varepsilon}_t -\varepsilon_t) \right) e_j \\
		&& + e_j^\top  \wh{\cov}\left( \Theta(L)(\hat{\varepsilon}_t -\varepsilon_t), \Theta(L)(\hat{\varepsilon}_t -\varepsilon_t)  \right) e_j.
	\end{eqnarray*}
	We then use upper bounds on the individual terms using the deviation bounds provided in our technical ingredients.
	
	In particular, set $w_t := \Theta(L) \Pi_{-[\tilde{p}]}(L)y_t$. Then $\vertiii{f_w} \le \vertiii{\Theta} \left\| \Pi_{-[\tilde{p}]}  \right\|_{2,1}^2 \vertiii{f_y}$. Setting $\sigma^2_j = e_j^\top \Gamma_w(0) e_j$, Proposition 2.4 of \citep{Basu15} implies, with probability at least $1 - c_1 \exp \left[ -(c_2 A^2-1) \log d^2(p+q) \right]$, the following holds:
	\begin{equation*}
		\left| e_j^\top \wh{\cov} (w_t, w_t) e_j - \sigma^2_j \right| \le 2 \pi \vertiii{f_w} A \sqrt{\log d^2(p+q)/n}.
	\end{equation*}
	The second term in the above expansion, $e_j^\top \wh{\cov} \left(w_t, \Theta(L)(\hat{\varepsilon}_t - \varepsilon_t) \right) e_j$, can be bounded in absolute value (use Lemma \ref{lem:ehat-e} and note that $n > q$, $n \succsim \log d^2(p+q)$) by the following:
	\begin{equation*}
		2 \sqrt{2\pi} \vertiii{f_w}^{1/2} \Delta_\varepsilon \left\| \Theta \right\|_{2,1}.
	\end{equation*}
	The last term in the above expansion, $e_j^\top  \wh{\cov}\left( \Theta(L)(\hat{\varepsilon}_t -\varepsilon_t), \Theta(L)(\hat{\varepsilon}_t -\varepsilon_t)  \right) e_j$, can be bounded in absolute value (see proof of Lemma \ref{lem:ehat-e}(ii)) by the following:
	\begin{equation*}
		2 \left\| \Theta \right\|_{2,1}^2 \Delta_{\varepsilon}^2.
	\end{equation*}
	
	Combining these, we obtain the following choice of $s_n$ (with $\sigma^2$ as $\sum_{j=1}^d \sigma_j^2$):
	\begin{eqnarray*}
		s_n &=& 2 \pi \vertiii{\Theta} \left\| \Pi_{-[\tilde{p}]}  \right\|_{2,1}^2 \vertiii{f_y} A\, d \,\sqrt{\log d^2(p+q)/n} + \\
		&~& 4 \sqrt{2 \pi \vertiii{\Theta} \left\| \Pi_{-[\tilde{p}]}  \right\|_{2,1}^2\vertiii{f_y}} \, d \,  \Delta_\varepsilon \left\| \Theta \right\|_{2,1} + 2 d \, \left\| \Theta \right\|_{2,1}^2 \Delta_\varepsilon^2.
	\end{eqnarray*}

	The choice of $q_n$ follows from Lemma \ref{lem:conc_S_z}, with a union bound over $d^2(p+q)^2$ choices of $u, v$ as canonical unit vectors in $\mathbb{R}^{d(p+q)}$. In particular, we have 
	
	\begin{equation*}
		q_n = 2 \pi \vertiii{f_z} A \sqrt{\log d^2(p+q)/n} + 2q \Delta_a^2 + 2 \sqrt{2 \pi \vertiii{f_z} q} \Delta_a.
	\end{equation*}
	
	\noindent The choice of $r_n$ follows directly from the Proposition \ref{prop:prediction-phase2}.	
\end{proof}

\section{Implementation of the Sparse VARMA Procedure}\label{Algorithmdetail}
\textbf{Phase-II Proximal Gradient Algorithm.}
\color{black}
The objective function in \eqref{HVARXeq} is separable over the $d$ rows of $\Phi, \Theta$ and can thus be solved in parallel by solving the ``one-row" subproblems, see e.g., \cite{Nicholson16}. Denote the $i^{th}$ row of $Y$ by $Y_{i\cdot} = \mathbb{R}^{1\times(T-\bar{o})}$,  the $i^{th}$ row of $\Phi$ by $\Phi_{i\cdot} \in \mathbb{R}^{1\times dp}$ and the $i^{th}$ row of $\Theta$ by $\Theta_{i\cdot} \in \mathbb{R}^{1\times dq}$. The Proximal Gradient Algorithm for the one-row subproblems is given in Algorithm 1. \color{black} 

\begin{algorithm}[t]
	\scriptsize
	\caption{Proximal Gradient Algorithm to solve Phase-II \label{PGA_PhaseII}}
	\begin{description}
		\item [{Input}] $Y_{i\cdot}$, $Z$, $X$, $p$, $q$, ${  \Phi}_{i\cdot}[0]$, ${  \Theta}_{i\cdot}[0]$, $\lambda_{{  \Phi}}$, $\lambda_{{  \Theta}}$,   $\alpha, \color{black} \mathcal{P}_{\text{AR}}(\Phi), \mathcal{P}_{\text{MA}}(\Theta)$, \color{black} $\epsilon$
		\item [{Initialization}] Set
		
		\begin{itemize}
			\item ${  \Phi}_{i\cdot}[2] \leftarrow {  \Phi}_{i\cdot}[1] \leftarrow {  \Phi}_{i\cdot}[0]$
			\item ${  \Theta}_{i\cdot}[2] \leftarrow {  \Theta}_{i\cdot}[1] \leftarrow {  \Theta}_{i\cdot}[0]$
			\item step size $s=1/\sigma_1({  A})^2$, with  $\sigma_1({  A})$ the largest singular value of the matrix ${  A} = \left(\begin{smallmatrix} {  Z}\\ {  X}\end{smallmatrix} \right)$
		\end{itemize}
		\item[{Iteration}] For $r=3,4,\ldots$
		\begin{itemize}
			\item $\widehat{{ \phi}} \leftarrow  {  \Phi}_{i\cdot}[r-1] +  \dfrac{r-2}{r+1} \left({  \Phi}_{i\cdot}[r-1] -{  \Phi}_{i\cdot}[r-2]\right)$
			\item 	${  \Phi}_{i\cdot}[r] \leftarrow \dfrac{1}{1+\alpha  \cdot \lambda_{{ \Phi}} } \cdot \color{black} \text{Prox}_{s\lambda_{{ \Phi}} P_{i}^{({ {\Phi}})}}	\left(\widehat{{ \phi}} -s \nabla_{{ {\Phi}}} \mathcal{L}_i (\widehat{{ \phi}})\right)$, 
			
			where 
			\begin{itemize}
				\renewcommand{\labelitemii}{\tiny$\blacksquare$} 
				\item $\nabla_{{ {\Phi}}} \mathcal{L}_i (\widehat{ \phi}) = -({  Y}_{i\cdot} - \widehat{ \phi} {  Z}- { {\Theta}}_{i\cdot}[r-1] {  X}){  Z}^\top,$
				\item 	\color{black}	$ \text{Prox}_{s\lambda_{{ \Phi}} P_i^{({ {\Phi}})}}(\cdot)$  the proximal operator of the function $s\lambda_{{ \Phi}} P_i^{({ {\Phi}})}(\cdot)$ where $\mathcal{P}_{\text{AR}}(\Phi) = \sum_i P_i^{(\Phi)}(\Phi_{i\cdot}).$
				\color{black}
			\end{itemize}

			\item $\widehat{{ \theta}} \leftarrow  {  \Theta}_{i\cdot}[r-1] +  \dfrac{r-2}{r+1} \left({  \Theta}_{i\cdot}[r-1] -{  \Theta}_{i\cdot}[r-2]\right)$
			
			\item 	${  \Theta}_{i\cdot}[r] \leftarrow \color{black} \dfrac{1}{1+\alpha  \cdot \lambda_{{ \Theta}}  } \cdot \text{Prox}_{s\lambda_{{ \Theta}} P_i^{({ {\Theta}})}}	\left(\widehat{{ \theta}} -s \nabla_{{ {\Theta}}} \mathcal{L}_i (\widehat{{ \theta}})\right)$, 
			
			where
			\begin{itemize}
				\renewcommand{\labelitemii}{\tiny$\blacksquare$} 
				\item $\nabla_{{ {\Theta}}} \mathcal{L}_i (\widehat{{ \theta}}) = -({  Y}_{i\cdot} - { {\Phi}}_{i\cdot}[r] {  Z}- \widehat{{ \theta}} {  X}){  X}^\top,$
				\item 	\color{black} $ \text{Prox}_{s\lambda_{{ \Theta}} P_i^{({ {\Theta}})}}(\cdot)$ the proximal operator of the function $s\lambda_{{ \Theta}} P_i^{({ {\Theta}})}(\cdot)$
				where $\mathcal{P}_{\text{MA}}(\Theta) = \sum_i P_i^{(\Theta)}(\Theta_{i\cdot}).$
				\color{black}
			\end{itemize}
			
		\end{itemize}

		\item [{Convergence}] Iterate until $||{  \Phi}_{i\cdot}[r] - {  \Phi}_{i\cdot}[r-1]||_{\infty} \leq \epsilon$ and $||{  \Theta}_{i\cdot}[r] - {  \Theta}_{i\cdot}[r-1]||_{\infty} \leq \epsilon$
		\item[Output] $\widehat{{  \Phi}}_{i\cdot} \leftarrow {  \Phi}_{i\cdot}[r]; \widehat{{  \Theta}}_{i\cdot} \leftarrow {  \Theta}_{i\cdot}[r]$
	\end{description}
\end{algorithm}

\color{black}
\textbf{Choice of convex regularizers.} 
As indicated in Section \ref{methodology}, we focus on the $\ell_1$-norm and HLag penalty as choices of convex regularizers. For the $\ell_1$-norm, $$P_i^{({ {\Phi}})}({ {\Phi}}_i)=\sum_{j=1}^d \sum_{\ell=1}^{p} |   {\Phi}_{\ell, ij} | \ \text{and} \ P_i^{({ {\Phi}})}({ {\Theta}}_i)=\sum_{j=1}^d \sum_{m=1}^{q} | {  \Theta}_{m, ij} |.$$ For the HLag penalty, $$P_i^{({ {\Phi}})}({ {\Phi}}_i)=\sum_{j=1}^d \sum_{\ell=1}^{p} ||   {\Phi}_{(\ell:{p}), ij} || \ \text{and} \ P_i^{({ {\Theta}})}({ {\Theta}}_i)=\sum_{j=1}^d \sum_{m=1}^{q} || {  \Theta}_{(m:q), ij} ||.$$
\color{black}

\section{\label{Simulation}Simulation Study}
We investigate the performance of the proposed VARMA estimator through a simulation study.
We generate data from a $\text{VARMA}_{d}(p,q)$ with time series length $T=100$.  
To ensure identification, we take $  \Phi_{\ell}$, $1\leq \ell \leq p$, diagonal matrices and set each diagonal element of $  \Phi_{\ell}$ equal to $0.4/\ell $.  For the autoregressive order, we take $p=4$. For the error covariance matrix, we take ${ {\Sigma}}_{a}= {  I}_d$. \color{black}To reduce the influence of initial conditions on the data generating processes, the first 200
observations were discarded as burn-in for each simulation run. \color{black}

We consider several settings for the moving average parameters.
We take banded matrices for $  \Theta_{m}$, $1\leq m \leq q$ with the diagonal elements of $  \Theta_{m}$ equal to $\theta/m$, the elements on the first lower and upper subdiagonals equal to $\theta/(10m)$, and the elements on the second lower and upper subdiagonals equal to $\theta/(100m)$.
The parameter $\theta$ regulates the strength of the moving average signal. The parameter $q$ regulates the moving average order.
We investigate the effect of the following features.  
\textit{(i) The MA signal strength}: we vary the parameter $\theta \in \{0,0.4,0.6,0.8\}$. The larger $\theta$, the stronger the moving average signal. Note that for $\theta=0$, the true model is a VAR. 
\textit{(ii) The MA order}: we vary the parameter $q \in \{4,6,8,10\}$. 
\textit{(iii) The number of time series}: we vary the number of time series $d \in \{5,10,20,40\}$. 
In all considered settings, the  VARMA models are invertible and stable.

\textbf{Estimators.}
We compare the following estimators.
(i) ``VARMA($p,q; {  a}_t$)": the VARMA estimator of model \eqref{VARMA} with an oracle providing the true errors ${  a}_t$ and orders $p$ and $q$. 
(ii) ``VARMA($p,q; \widehat{{  \varepsilon}}_t$)" the VARMA estimator of model \eqref{VARX} with approximated errors and an oracle providing the orders $p$ and $q$.
(iii) ``VARMA($\widehat{p},\widehat{q}; \widehat{{  \varepsilon}}_t$)": the VARMA estimator of model \eqref{VARX} with approximated errors and specified orders $\widehat{p}=\widehat{q}= \lfloor 0.75\sqrt{T} \rfloor $.
(iv) ``VAR($\widetilde{p})$": the VAR estimator of model \eqref{VARptilde} with specified order $\widetilde{p}= \lfloor 1.5\sqrt{T} \rfloor$.
\color{black} We use both the $\ell_1$-norm and the HLag penalty to obtain our estimates. 
\color{black}

\textbf{Performance Measure.}
We compare the performance of the  estimators in terms of out-of-sample forecast accuracy. 
We generate time series of length $T+1$ and use the last observation to measure forecast accuracy. We compute the Mean Squared Forecast Error
\begin{equation}
	\text{MSFE}= \dfrac{1}{N} \sum_{s=1}^{N}\dfrac{1}{d}\|{y}^{(s)}_{T+1}-\widehat{y}_{T+1}^{(s)}\|^2, \nonumber
\end{equation}
where 
${  y}_t^{(s)}$  is the vector of time series at time point $t$ in the $s^{th}$ simulation run, and  $\widehat{y}_{t}^{(s)}$ is its predicted value. The number of simulations is $N=500$. We focus on out-of-sample forecast accuracy in the simulation study, in line with the discussion of the applications in Section \ref{applications}. \color{black}
We did also compare the estimators in terms of the estimation accuracy of the $\Pi$-matrices; similar conclusions are obtained and available from the authors upon request. \color{black}

\subsection{\label{masignal.effect}Effect of the Moving Average Signal Strength}
Figure \ref{MSFEsims} panel (a) shows the MSFEs (averaged over the simulation runs) of the four estimators for different values of the moving average parameter $\theta$, which regulates the moving average signal strength. We report the results for  the  HLag penalty and $d=10, q=4$.

\begin{figure}
	\includegraphics[width=0.35\textwidth]{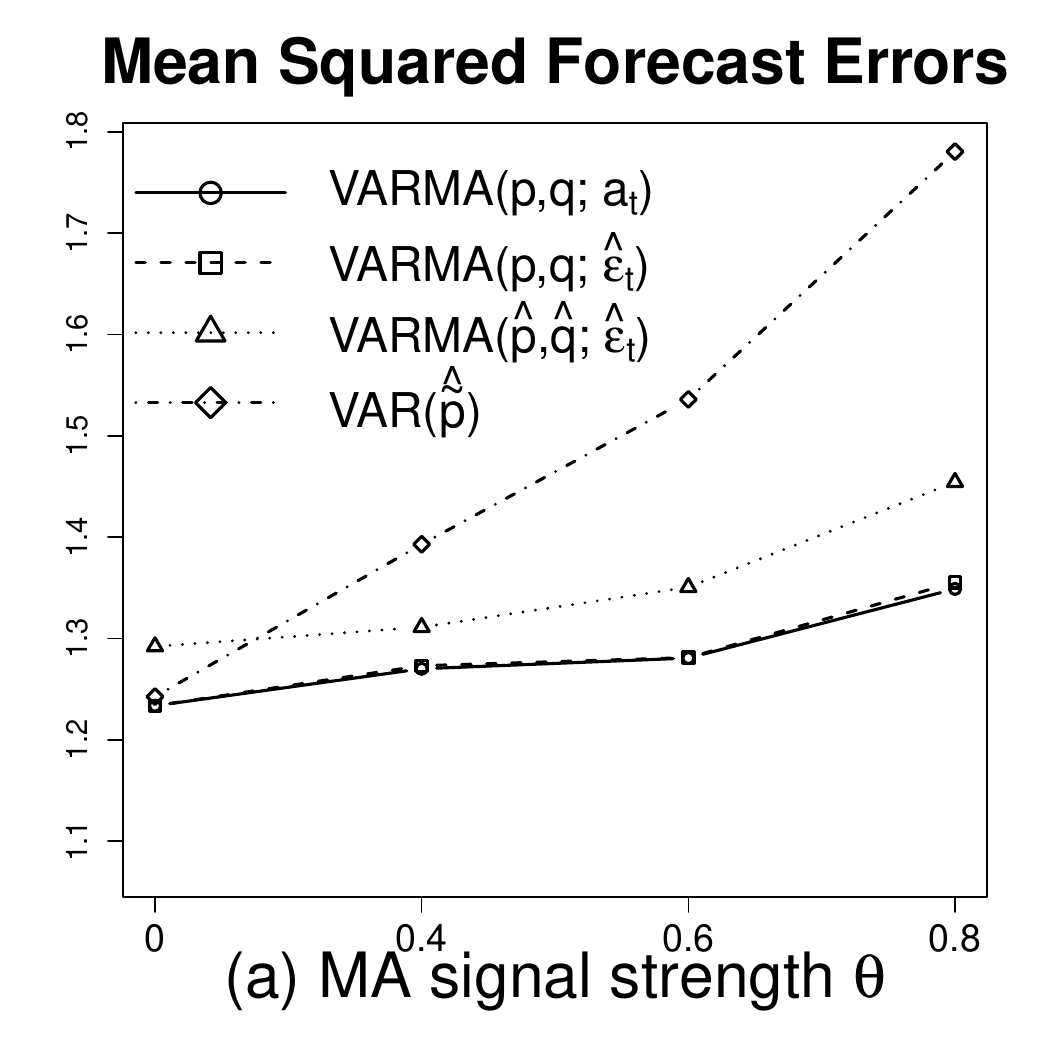}
	\hspace{-0.7cm}
	\includegraphics[width=0.35\textwidth]{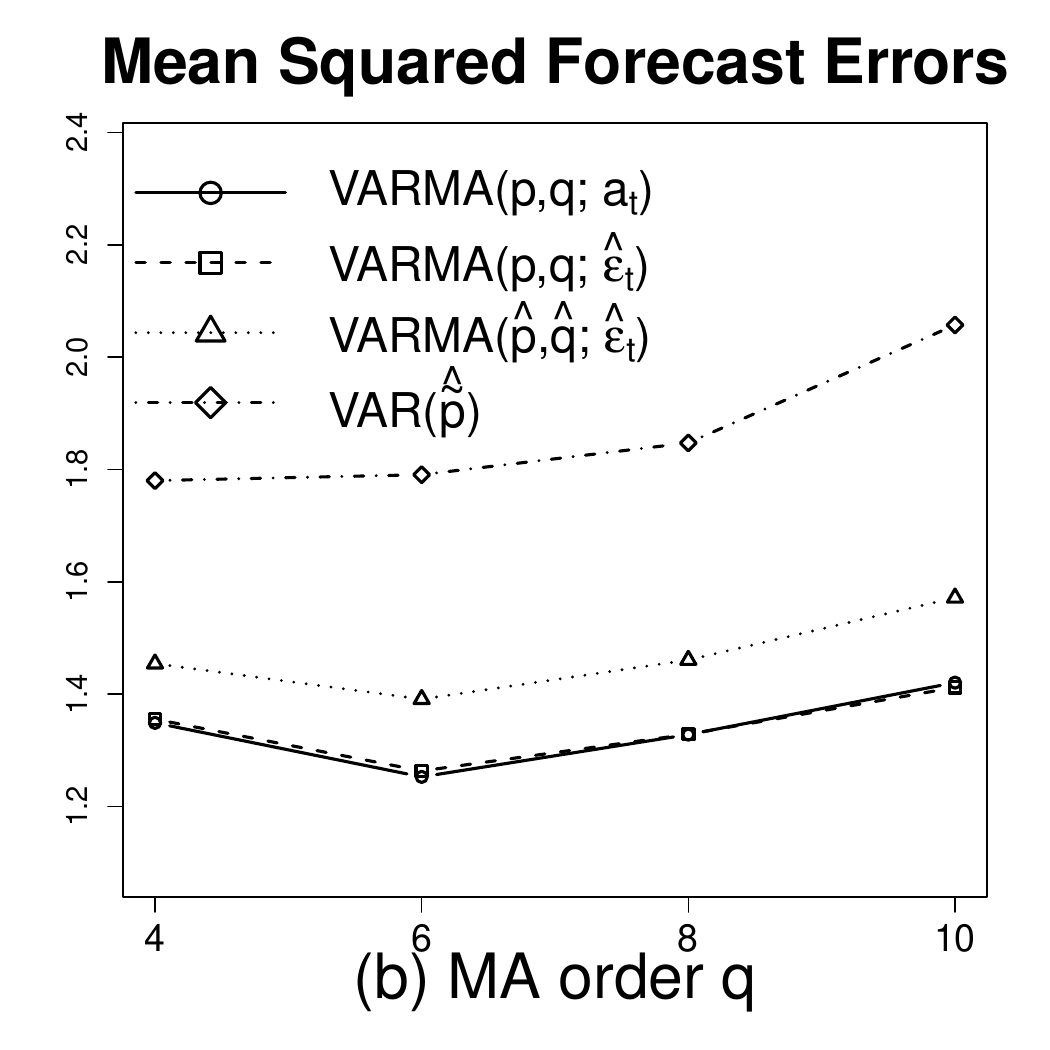}
	\hspace{-0.7cm}
	\includegraphics[width=0.35\textwidth]{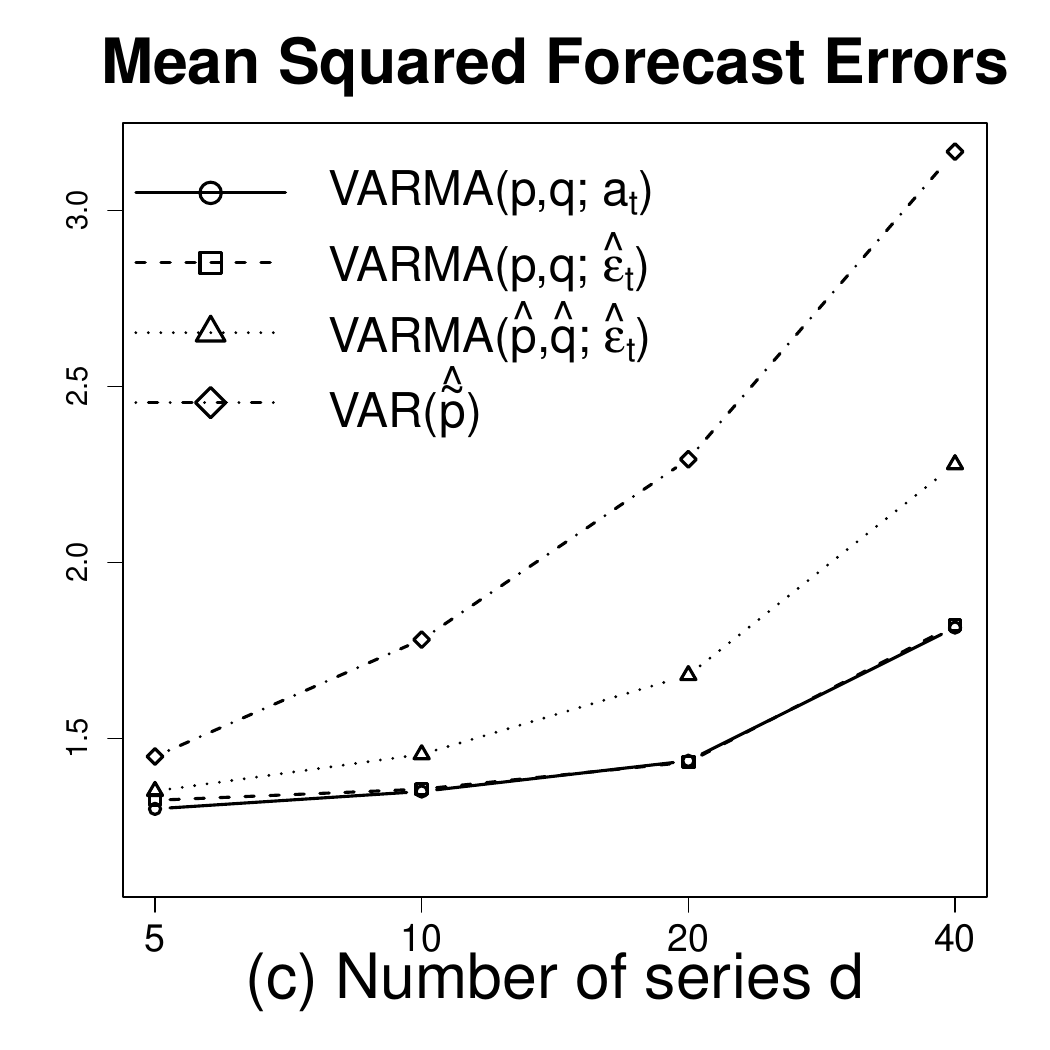}
	\caption{Mean Squared Forecast Errors (averaged over the simulation runs) of the four estimators for different values of (a) the moving average parameter $\theta$, 
		(b) the moving average order $q$, and (c) the number of time series $d$.  \label{MSFEsims}}
\end{figure}

If the true model is a VARMA (i.e.\ $\theta\neq0$), the VARMA estimators perform better than the VAR, as expected. The larger $\theta$, the larger the gain of VARMA over VAR.  
The differences in forecast accuracy between the VARMA estimators and the VAR estimator are all significant, as confirmed by paired $t$-tests (at the 5\% significance level). 
\color{black}
Among the VARMA estimators, there is no significant difference between 
``VARMA($p,q; {  a}_t$)" and ``VARMA($p,q; \widehat{{  \varepsilon}}_t$)"  
thus supporting the validity of the two-phase approach.
\color{black}
The VARMA estimator with estimated errors and selected orders (i.e.,  ``VARMA($\widehat{p},\widehat{q}; \widehat{{  \varepsilon}}_t$)") performs, for all values of $\theta$, very similarly to the one with known orders. 
\color{black} The loss in forecast accuracy of not knowing the autoregressive or moving average order is limited to 5\% on average. \color{black}

If the true model is a VAR (i.e.\ $\theta=0$), the VARMA estimators with known orders both reduce to a VAR($p$) estimator since $\theta=0$, hence $q=0$. They give the lowest MSFE. 
However, in practice, the orders of the model are not known. \color{black}
For unknown orders, the VARMA estimator is competitive to the VAR estimator. 
The VARMA estimator attains this competitveness since, in general, it returns a more parsimonious model (i.e.\  the estimated VARMA has more sparse AR coefficients with some sparse MA coefficients than the number of sparse coefficients in the estimated VAR).

\color{black}
The relative performance of the four estimators with HLag penalty are compared to the results with $\ell_1$-norm in Table \ref{HLagvsL1}.
For the estimators with unknown maximal lag orders (i.e.\ VARMA($\widehat{p},\widehat{q}; \widehat{{  \varepsilon}}_t)$ and VAR($\widetilde{p})$),	HLag  outperforms the $\ell_1$-norm in all considered cases ($p$-values paired $t$-test $< 0.05$). 
For the estimators with known maximal lag orders, (e.g., VARMA($p,q; {  a}_t$) and VARMA($p,q; \widehat{{  \varepsilon}}_t$)), HLag performs, overall, as good as the $\ell_1$-norm. These results are in line with the findings of \cite{Nicholson16}.

\color{black}

\begin{table}
	\caption{Mean Square Forecast Errors (averaged over the simulation runs) of the four estimators with either HLag penalty or $\ell_1$-norm and for different values of the moving average parameter $\theta$. $P$-values of a paired $t$-test are in parentheses. \label{HLagvsL1}}
	\centering
	\begin{tabular}{lccccccccccc}
		\hline
		& \multicolumn{2}{c}{VARMA($p,q; {  a}_t$)} 	&& \multicolumn{2}{c}{VARMA($p,q; \widehat{{  \varepsilon}}_t$)} 	&& \multicolumn{2}{c}{VARMA($\widehat{p},\widehat{q}; \widehat{{  \varepsilon}}_t$)} 	&& \multicolumn{2}{c}{VAR($\widetilde{p})$} \\
		& HLag &  $\ell_1$ && HLag&  $\ell_1$ && HLag &  $\ell_1$ && HLag &  $\ell_1$ \\ 
		\hline
		$\theta = 0$   & 1.234 & $\underset{(<0.01)}{1.263}$ && 1.234 & $\underset{(<0.01)}{1.263}$ && 1.292 & $\underset{(<0.01)}{1.334}$ && 1.243 & $\underset{(<0.01)}{1.317}$ \\
		$\theta = 0.4$ & 1.270 & $\underset{(0.415)}{1.299}$ && 1.273 & $\underset{(0.396)}{1.303}$ && 1.311 & $\underset{(0.040)}{1.387}$ && 1.393 & $\underset{(<0.01)}{1.558}$ \\ 
		$\theta = 0.6$ & 1.281 & $\underset{(0.360)}{1.315}$ && 1.281 & $\underset{(0.293)}{1.321}$ && 1.351 & $\underset{(<0.01)}{1.459}$ && 1.536 & $\underset{(<0.01)}{1.802}$ \\  
		$\theta = 0.8$ & 1.349 & $\underset{(0.275)}{1.383}$ && 1.355 & $\underset{(0.170)}{1.399}$ && 1.454 & $\underset{(<0.01)}{1.582}$ && 1.780 & $\underset{(<0.01)}{2.159}$ \\ 
		\hline
	\end{tabular}
\end{table}

\subsection{\label{maorder.effect}Effect of the Moving Average Order}
Figure \ref{MSFEsims} panel (b)  shows the MSFEs of the four estimators for different values of the moving average order $q$. We report the results for \color{black} the HLag penalty and $d=10, \theta=0.8$. Similar conclusions are obtained with the $\ell_1$-norm and other values of $d$ and $\theta$, therefore omitted. \color{black}
For all values of $q$, the VARMA estimators perform significantly better than the VAR estimator. \color{black}
The oracle VARMA estimators perform equally good and are closely followed by the VARMA estimator with approximated errors and unknown orders. 
The latter improves forecast accuracy over the VAR estimator by about 20\% on average. \color{black}

\subsection{\label{timeseries.effect}Effect of the Number of Time Series}
Figure \ref{MSFEsims} panel (c)  shows the MSFEs  for different values of the number of time series $d$. We report the \color{black} results for  the HLag penalty and  $q=4,  \theta=0.8$. \color{black} As the number of time series increases relative to the fixed time series length $T$, it becomes more difficult to accurately estimate the model. Hence, the MSFEs of all estimators increase.
For all values of $d$, the VARMA estimators attain lower values of the MSFE than the VAR estimator. 
All differences are significant. 
The loss in forecast accuracy of not knowing the AR and MA order is only 2\% for $k=5$ and remains limited to 20\% for $k=40$.
The margin by which the VARMA estimator (with approximated errors and unknown orders) improves forecast accuracy over the VAR increases from around 7\% for $k=5$ to around 30\% for $k=40$. \color{black}

\begin{figure}
	\centering
	\includegraphics[width = 0.36\textwidth]{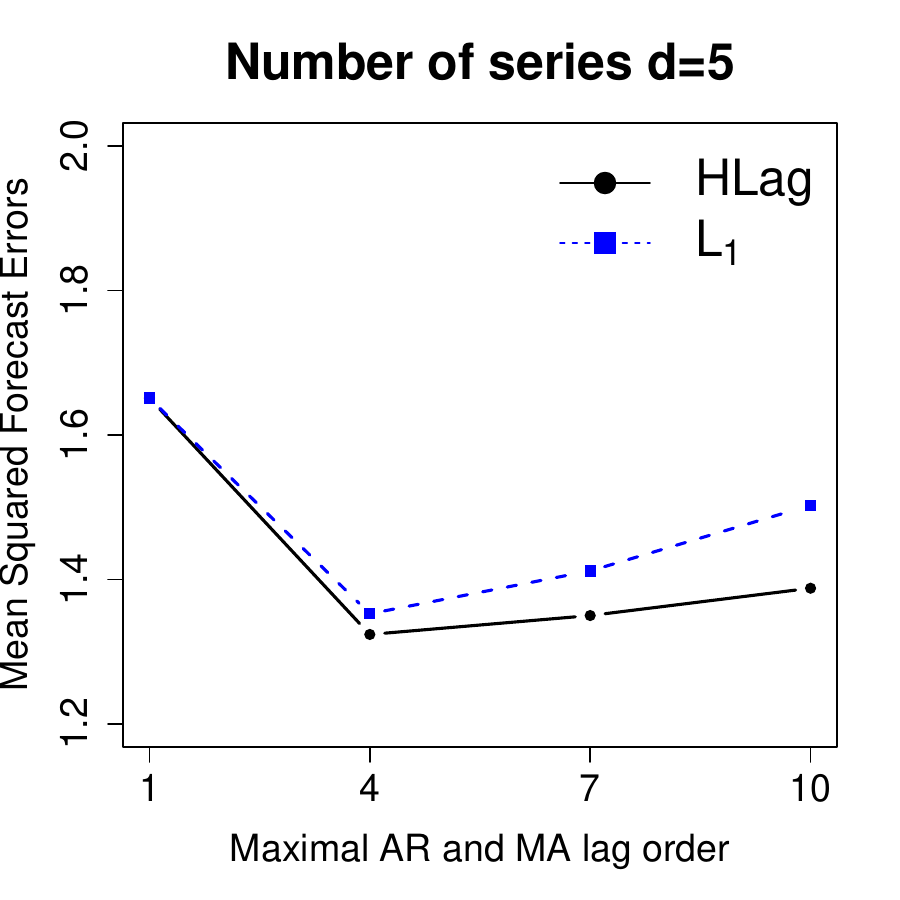}
	\hspace{0.5cm}
	\includegraphics[width = 0.36\textwidth]{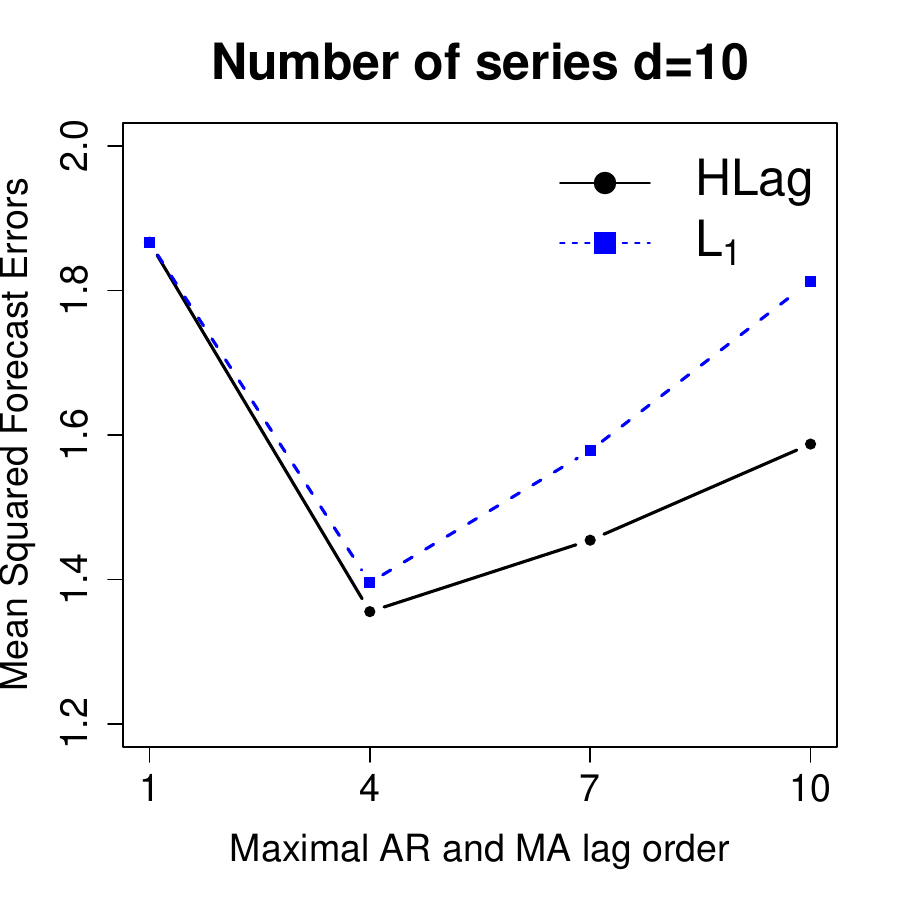}
	\caption{Mean Squared Forecast Errors (averaged over the simulation runs) of the VARMA($\widehat{p}$, $\widehat{q}$; $\widehat{\epsilon}_t$) estimator with HLag penalty (black solid line) and $\ell_1$-penalty (blue dashed line)  for different values of the maximal lag orders $p$ and $q$ (horizontal axis) and number of time series $d=5$ (left), $d=10$ (right). \label{sim_misspecify}}
\end{figure}
\subsection{Implications of Misspecifying the Maximal Lag Orders} \label{misspecify_orders}
Next, we investigate the implications of misspecifying the maximal AR and MA lag orders of the VARMA.  We generate data from the VARMA model with $p=q=4, \theta = 0.8$, $d=5, 10$ and
estimate a sparse VARMA model with maximal lag orders smaller than, equal to and larger than the true orders. Note that a maximal lag order of seven corresponds to our recommendation ($\hat{p} = \hat{q} = \lfloor 0.75 \sqrt{100} \rfloor = 7$). Figure \ref{sim_misspecify} shows the MSFEs for the sparse VARMA estimator with HLag penalty and $\ell_1$ penalty,  different values of the maximal AR and MA lag orders (horizontal axis) and number of time series (panels).

The lowest MSFEs are attained at the true maximal lag order of four, as expected. At a maximal lag order of one, all models are misspecified and the MSFEs are the largest. 
Using too small maximal lag orders thus has more severe consequences than using too large maximal lag orders.
Furthermore, the drop in MSFE at our recommended maximal lag orders (of seven) remains small provided that one uses an HLag penalty. Indeed, the price to pay for too large maximal lag orders is smaller for HLag than the standard $\ell_1$ penalty since HLag encourages low maximal lag orders.

\subsection{Data-based Simulation Design}
As a last experiment, we consider a data-based design \cite{ho1996}. 
Similar to \cite{carriero2012}, we carry out a simulation by bootstrapping the actual demand set with $d=16$ and $T=76$ as discussed in Section \ref{applications} of the paper.
We start from the autoregressive and moving average estimates obtained with the sparse VARMA method with HLag penalties and 
$\hat{p} = \hat{q} = \lfloor 0.75\sqrt{T} \rfloor = 6$
We then  generate data from a VARMA$_d(\hat{p}, \hat{q} )$  using a non-parametric residual bootstrap procedure (e.g., \citep{Kreiss12}) with bootstrap errors an i.i.d.\ sequence of discrete random variables uniformly distributed on $\{1, \ldots, T\}$.

\begin{table} 
	\centering
	\caption{Data-based Simulation Design: Mean Squared Forecast Errors (averaged over the simulation runs) of the four estimators with HLag penalty and different forecast horizons. Standard errors around the reported results are in parentheses. \label{sim_data_based}}
	\begin{tabular}{lcccc} \hline
		Forecast horizon& VARMA($p,q; {  a}_t$) 	&VARMA($p,q; \widehat{{  \varepsilon}}_t$) 	&VARMA($\widehat{p},\widehat{q}; \widehat{{  \varepsilon}}_t$) 	&VAR($\widetilde{p})$ \\ \hline 
		$h=1$   & $\underset{(0.021)}{0.764}$ & $\underset{(0.022)}{0.763}$ & $\underset{(0.022)}{0.765}$ & $\underset{(0.025)}{0.758}$ \\
		$h=8$   & $\underset{(0.022)}{0.782}$& $\underset{(0.022)}{0.783}$& $\underset{(0.022)}{0.785}$& $\underset{(0.026)}{0.877}$\\
		$h=13$  & $\underset{(0.022)}{0.784}$& $\underset{(0.022)}{0.785}$& $\underset{(0.022)}{0.787}$& $\underset{(0.026)}{0.877}$\\ \hline 
	\end{tabular}
\end{table}

Table \ref{sim_data_based} gives the MSFEs of the four estimators at  different forecast horizons $h=1, 8, 13$, as used in Section \ref{applications}. For the VARMA estimators with known orders, we use $p=q=3$, in line with the largest reported values in Figure \ref{Demand_lhat}.
First of all, note that it becomes more difficult to obtain accurate forecasts for longer horizons;  the 
MSFEs of all estimators increases with $h$.
The relative performance of VARMA compared to VAR is tied to the forecast horizon: 
at $h=1$, all estimators perform equally well (i.e.\ there are no significant differences, as confirmed through paired $t$-tests).
At longer forecast horizons, the VARMA estimators still perform equally well but statistically outperform the VAR estimator.
These findings support the results from Section \ref{applications}.
\end{appendices}

	\begingroup
	\bibliographystyle{asa}
	\setstretch{0.05}
	\linespread{0.5}
	\bibliography{VARMAref}
	\endgroup	
\end{document}